\documentclass[sigconf]{acmart}

\AtBeginDocument{%
  }

\acmConference[]{SIGMOD Industrial Track}{Bengaluru, India}{2026}

\usepackage{xcolor}
\usepackage{balance}
\usepackage{colortbl}
\usepackage{adjustbox}
\usepackage{subfigure}
\usepackage{multirow}
\usepackage{pifont}
\usepackage{xspace}
\usepackage{enumitem}
\usepackage{bm}
\usepackage{tcolorbox}

\newcommand{\sys}{\textsf{LOAM}\xspace}
\newcommand{\sysprefix}{\textsf{LOAM-}}
\newcommand{\sysnative}{\textsf{LOAM-NA}\xspace}
\newcommand{\MC}{\textsf{MaxCompute}\xspace}
\newcommand{\citylife}{\textsf{Project~1}\xspace}
\newcommand{\gd}{\textsf{Project~2}\xspace}
\newcommand{\fin}{\textsf{Project~3}\xspace}
\newcommand{\ads}{\textsf{Project~4}\xspace}
\newcommand{\mkt}{\textsf{Project~5}\xspace}

\newcommand{\stitle}[1]{\noindent\underline{\textbf{#1}}}
\newcommand{\sstitle}[1]{\vspace{0.6ex}\noindent\underline{\textbf{#1}}}

\newcommand{\ie}{\emph{i.e.}\xspace}
\newcommand{\eg}{\emph{e.g.}\xspace}
\newcommand{\wrt}{\emph{w.r.t.}\xspace}

\newcommand{\squishlist}{
	\begin{list}{$\bullet$}{
		\setlength{\itemsep}{0pt}
		\setlength{\parsep}{3pt}
		\setlength{\topsep}{3pt}
		\setlength{\partopsep}{0pt}
		\setlength{\leftmargin}{1.0em}
		\setlength{\labelwidth}{1em}
		\setlength{\labelsep}{0.5em}
   }
}

\newcommand{\squishenum}{
	
	\begin{list}{\usecounter{scount}}{
		\setlength{\itemsep}{0pt}
		\setlength{\parsep}{3pt}
		\setlength{\topsep}{3pt}
		\setlength{\partopsep}{0pt}
		\setlength{\leftmargin}{1.2em}
		\setlength{\labelwidth}{1em}
		\setlength{\labelsep}{0.5em}
	}
}

\usepackage{xcolor}

\newcommand{\squishend}{
	\end{list}
}

\newcommand{\eat}[1]{}

\newtheorem{lemma}{Lemma}
\newtheorem{theorem}{Theorem}

\newtheorem{definition}{Definition}

\newtheorem{challenge}{Challenge}

\definecolor{mygrey}{RGB}{230,230,240}

\begin{document}

\title{Learned Query Optimizer in Alibaba \MC: \break Challenges, Analysis, and Solutions}

\author{
Lianggui Weng$^{\#}$, Dandan Liu$^{\#}$, Wenzhuang Zhu$^{\#}$, Rong Zhu$^{*}$, Junzheng Zheng$^{*}$, \break Bolin Ding, Zhiguo Zhang, Jingren Zhou}
\affiliation{\LARGE \institution{\textsf{Alibaba Group}\country{}}}

\renewcommand{\shortauthors}{Lianggui Weng, et al.}

\begin{abstract}
{\renewcommand\thefootnote{}\footnote{\noindent
\#: Equal Contribution; 
*: Corresponding Authors \textsf{\{red.zr, junzheng.zjz\}@alibaba-inc.com}
}}

\setcounter{footnote}{0}
\vspace{-1em}

\noindent Existing learned query optimizers remain ill-suited to modern distributed, multi-tenant data warehouses due to idealized modeling assumptions and design choices. Using Alibaba's \MC as a representative, we surface four fundamental, system-agnostic challenges for any deployable learned query optimizer: 1) highly dynamic execution environments that induce large variance in plan costs; 2) potential absence of input statistics needed for cost estimation; 3) infeasibility of conventional model refinement; and 4) uncertain benefits across different workloads. These challenges expose a deep mismatch between theoretical advances and production realities and demand a principled, deployment-first redesign of learned optimizers.

To bridge this gap, we present \sys, a one-stop learned query optimization framework for \MC. Its design principles and techniques generalize and are readily adaptable to similar systems. Architecturally, \sys introduces a statistics-free plan encoding that leverages operator semantics and historical executions to infer details about data distributions and explicitly encodes the execution environments of training queries to learn their impacts on plan costs. For online queries with unknown environments at prediction time, \sys provides a theoretical bound on the achievable performance and a practical strategy to smooth the environmental impacts on cost estimations. For system operating, \sys integrates domain adaptation techniques into training to generalize effectively to online query plans without requiring conventional refinement. Additionally, \sys includes a lightweight project selector to prioritize high-benefit deployment projects. \sys has seen up to $30\%$ CPU cost savings over \MC's native query optimizer on production workloads, which could translate to substantial real-world resource savings. \sys and example workloads are open-sourced at \url{https://drive.google.com/drive/folders/1B0tqOLo2Jt6aBFxaPfjbV0B7sagFPT5u}. 

\end{abstract}





\maketitle

\section{Introduction}\label{sec:intro}


Query optimization serves as a cornerstone of database systems and has been a long-standing research topic. Over the past decade, learned query optimizers leveraging machine learning techniques have emerged as a promising approach to either replace~\cite{neo, balsa, loger, base, glo} or refine traditional query optimizers~\cite{bao, autosteer, lero, hybrid-QO, foss} for their capabilities of better capturing the complex data distributions and providing more accurate cost estimations. 


Despite these advances, applying learned optimizers to distributed multi-tenant data warehouses, such as \textsf{Microsoft Azure Synapse Analytics}~\cite{azure_synapse_analytics}, \textsf{Google BigQuery}~\cite{bigquery}, \textsf{Amazon Redshift Serverless}~\cite{redshift}, \textsf{Alibaba MaxCompute}~\cite{mc}, and \textsf{Snowflake Serverless}~\cite{snowflake}, which serve as today's main cloud data infrastructures, remains highly problematic. This paper focuses on Alibaba's \MC, a production-grade data warehouse that handles storage and compute workloads ranging from 100 GB to the exabyte level and has been battle-tested at scale within Alibaba Group (routinely serves over 40 million queries per day). \MC embodies key architectural and operational principles shared by state-of-the-art cloud-native data warehouses, which give rise to unique challenges in deploying learned query optimizers. By studying \MC, we aim to identify systemic barriers and derive generalizable solutions for bridging the theoretical advances in learned optimizations to production-grade data warehouses.

\stitle{Challenges in \MC.} 
Existing learned query optimizers typically employ a plan explorer to generate a set of candidate execution plans for a given query, combined with a learned cost model that predicts the cost of each plan. The plan with the lowest predicted cost is then selected for execution. In \MC, however, applying this paradigm faces several key challenges. 


To begin with, cost estimation in \MC is fundamentally challenged by two system realities, which also break the core assumptions of current approaches. Compared to on-premises setups or low-capacity cloud containers typically assumed in prior work, query execution in \MC involves highly dynamic resource allocation from cluster-wide pools averaging over 5,000 machines with varying loads, leading to significant variation in plan costs~\cite{bao, neo} \textbf{(C1.~Environment variation)}. Worse still, given the sheer scale of production data and frequent data modifications, statistics (\eg, histograms of attributes) that are helpful to understand data distributions are not automatically and timely maintained in \MC by default, and are therefore often remain stale or missing (\textbf{C2. Missing helpful statistics}). Together, these factors render current learned cost models, in which cost is deterministically mapped from static plan features and input statistics, inapplicable in \MC.

In terms of system operating, to generalize the learned cost model to candidate plans produced by the plan explorer for online queries (real-time production queries), existing work refines the model using plan-cost pairs collected either offline by executing a large set of candidate plans~\cite{neo, balsa, glo} or online by deploying a partially trained model to serve user queries and periodically updating it with observed plan-cost pairs~\cite{neo, bao, lero, hybrid-QO, foss}. 
However, in \MC, production OLAP queries routinely process multiple data partitions and are often join-heavy, with CPU costs reaching up to $10^7$ or more, so executing additional candidate plans solely for refinement is prohibitively costly.
Moreover, online refinement risks generating disastrously inefficient plans that consume excessive computing resources, which is unacceptable in \MC's multi-tenant environment. Together, these factors make conventional refinement strategies for learned optimizers impractical in \MC \textbf{(C3. Infeasibility of conventional refinement)}.

Finally, beyond query optimization, \MC hosts over 100,000 user-created database instances, called \emph{projects}, each exhibiting distinct workload characteristics (\eg, join topology) and data properties (\eg, table sizes and update frequencies). This heterogeneity prevents learned optimizers from performing consistently well across projects. As a result, universally deploying learned optimizers is neither efficient nor sustainable, and it is essential to automatically identify projects that are most likely to achieve substantial performance gains (\textbf{C4. Project selection}).

Because \MC shares core design principles with other modern cloud data warehouses (see Appendix~\ref{app:mc-vs-similar-systems}~\cite{full-paper}), these challenges (C1--C4)
transfer to similar systems. This highlights a broader implication: \emph{The learned optimizers in distributed, multi-tenant data warehouses demand a fundamental rethinking of the architecture design and system operating paradigms}. 

\stitle{Our Contributions.}
This paper bridges the gap between theoretical advances in learned query optimizations and production-grade data warehouses. Guided by challenges C1--C4, we present \sys, a one-stop \underline{l}earned query \underline{o}ptimizer in \underline{A}libaba’s \textsf{\underline{M}axCompute}. \emph{It addresses the core challenges in \MC while offering a transferable design paradigm for learned optimizers in large-scale, multi-tenant data warehouses, and paves the way for more practical AI-enhanced query optimization techniques in production systems.}


\sys operates in a steering style that guides \MC's native optimizer with knobs to produce diverse candidate plans and selects the one with the lowest estimated cost (see Section~\ref{sec:framework}). Beyond this, we distill four core principles that reshape learned optimizer design for \MC-like systems and drive \sys's key innovations, setting it apart from existing approaches:

$\bullet$ \textbf{Environment-Aware Plan Cost Modeling.} 
\emph{Impacts from execution environments
must be explicitly considered when estimating a plan's cost (as per C1)}. 
Accordingly, \sys explicitly encodes runtime environmental features for each training plan (see Section~\ref{sec:model}). This enables the cost model to learn how different execution environments affect the plan cost. 
Critically, environmental information is unknown for online queries at prediction (query optimization) time. This breaks a core assumption in current learned optimizers that all feature values are observed before prediction, and introduces intrinsic errors to any query optimizers. To quantify this impact, we derive an upper bound on the best-achievable performance of query optimizers under the unobserved environment variation. In practice, \sys mitigates this error by making cost predictions under representative average-case conditions (see Section~\ref{sec:per-bound}).

$\bullet$  \textbf{Statistics-Free Plan Encoding.} \emph{Plan cost estimation must operate without 
input statistics (as per C2)}. In line with this, \sys encodes attributes of key operators to enable coarse inference of details \wrt data distributions from historical queries (see Section~\ref{sec:model}). 

$\bullet$ \textbf{Preemptive Generalization.} \emph{Generalizing the optimizer to unseen candidate plans must be preemptively handled before deployment and in a computationally efficient manner (as per C3)}. As a key feature of data warehouses, \MC preserves extensive historical data for long-term analysis. Leveraging this, \sys is trained offline on historically executed query plans and their costs. Because the feature distributions of these training plans can differ markedly from candidate plans tuned for online queries, we integrate domain adaptation techniques~\cite{domain-adaption} to learn domain-invariant intermediate representations for both training and candidate plans. This enables \sys optimized for training plans to generalize well to estimate costs for candidate plans, eliminating both the prohibitive cost of executing additional candidate plans and possible risks from online refinement (see Section~\ref{sec:model}). 

$\bullet$ \textbf{Automatic Project Selection.} Beyond query optimization, \emph{an efficient mechanism is required to automatically identify projects with potentially large deployment benefits (as per C4)}. To this end, \sys automates project selection (see Section~\ref{sec:project-selection}) through a two-stage process that first filters out projects unsuitable for training learned optimizers with rule-based heuristics and then employs a learned ranker to prioritize the remainder with high potential benefits.

We evaluate \sys on production workloads from \MC and observe up to $30\%$ performance improvement over the native query optimizer. \sys's core techniques also prove highly effective in boosting practical performance and enabling broader deployment benefits. 
Crucially, our analysis shows that about $4\%$ of \MC's projects can expect performance gains of $\ge10\%$ with \sys, 
which represents a non-trivial portion of \MC's currently vast serving landscape (more than 100,000 projects)
and can translate to substantial real-world resource savings. Notably, this estimation is conservative and can be greatly improved by incorporating more strategies to produce candidate plans.

\section{Background}
\label{sec:lqo-review-challenge}

Query optimization in \MC is challenged by both the sheer scale of its production workloads and its distributed, multi-tenant architecture. In this section, we first describe the query execution workflow in \MC (Section~\ref{subsec:lqo-challenges}), then briefly review existing learned query optimizers and explain why their assumptions and operating paradigms break down in \MC (Section~\ref{subsec:reviewlqo}).


\subsection{Query Execution in \MC}\label{subsec:lqo-challenges}

\emph{Projects}, akin to user-created database instances, are the primary organizational units in \MC. They provide logical isolation of data and basic access control. Users can query and manipulate data within a project via SQL queries. Upon submission, each query undergoes a four-phase execution process, as illustrated in Figure~\ref{fig:workflow}.

\textit{1. Query compilation and optimization.} 
Initially, a SQL statement is compiled and optimized by \MC's native query optimizer into a physical plan, represented as a tree structure where each node corresponds to a data operation such as table scanning, joining, or aggregation. \MC employs a cost-based optimizer that explores equivalent query plans using a rich set of transformation rules (\eg, join reordering, and partition pruning) to generate more efficient alternatives, with the plan of the lowest cost estimated by a cost model selected for execution. The cost model relies heavily on up-to-date input statistics, such as top‑$N$ values, the number of distinct values (NDVs), and frequency histograms of attributes. However, due to the sheer data scale (terabytes, even to exabytes) and frequent updates, \MC does not automatically maintain these statistics by default to prioritize system efficiency, unless users explicitly enable them. In their absence, the effectiveness of the optimizer is highly constrained. Cost estimation must fall back to coarse, metadata-driven approximations such as based on historical table row counts, which often lead to unreliable plan selection. Furthermore, many transformation rules, such as join reordering, that rely on accurate input statistics become disabled, increasing the likelihood of missing better plans.

\begin{figure}[t]
\centering
\includegraphics[width=0.8\linewidth]{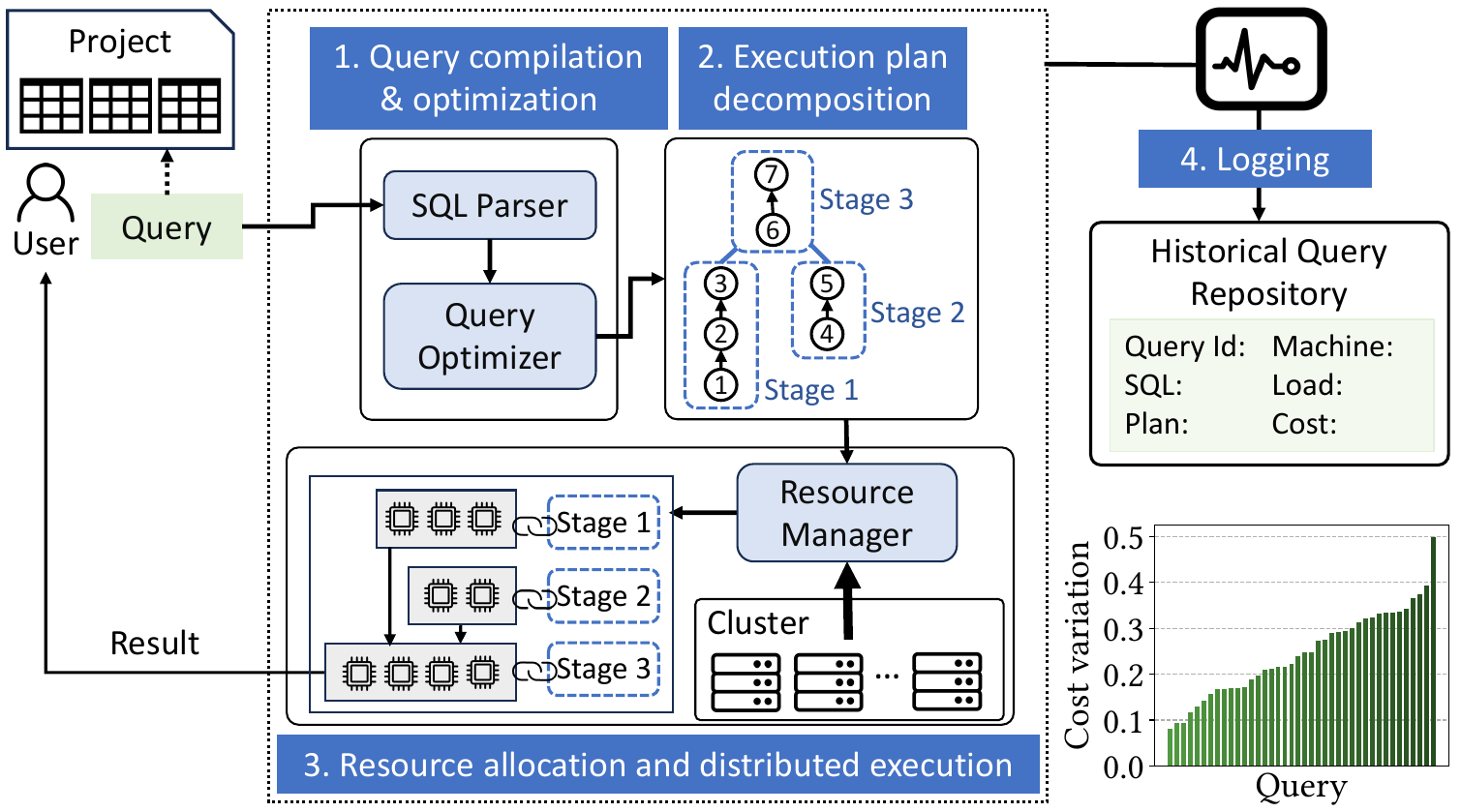}
\vspace{-0.5em}
\caption{Query execution workflow in \MC.}
\label{fig:workflow}
\vspace{-0.8em}
\end{figure}





\textit{2. Execution plan decomposition.} 
The physical query plan is then decomposed into a tree of \emph{stages} at operators requiring data reshuffling across machines, such as hash joins and grouping aggregations that demand co-partitioned inputs on the join or group-by key. Each stage consists of a sequence of connected operators that are executed as an intra-machine pipeline, and edges in the tree correspond to the data dependencies between stages.

\textit{3. Resource allocation and distributed execution.} \MC manages query execution at the stage level, treating each stage as the atomic unit for resource allocation and scheduling. Once all parent stages are complete, a stage becomes eligible for execution. The resource manager, powered by Alibaba's \textsf{Fuxi}~\cite{fuxi}, dynamically allocates resources from a shared, cluster-wide pool based on the stage’s computational demands and current system load. Each stage is executed by $1$ to over 100,000 parallel \emph{instances}, each processing a data partition and producing partial results. The final output is returned after all stages are complete. Despite its efficiency, the per-stage resource allocation and varying system loads across machines caused by multi-tenant interference introduce non-negligible variation in query execution cost. 
The bar plot in Figure~\ref{fig:workflow} shows the relative standard deviation of CPU costs for recurring queries from a production workload in \MC observed over one month, where an identical query can exhibit up to $50\%$ cost fluctuation. This makes cost prediction highly challenging for query optimization.




\textit{4. Logging.} 
Upon query completion, execution details, including the SQL statement, physical plan, execution environment (\eg, allocated resources and system load), end-to-end cost, and latency, are logged into a historical query repository maintained by the project for future inspection or downstream use. This feature distinguishes \MC from traditional database systems by offering a much richer data foundation for optimization.

\subsection{Existing Works: Review and Analysis}\label{subsec:reviewlqo}







Nearly all existing learned query optimizers
share a common architectural design and the paradigm for system operating, while incorporating specialized techniques to enhance their performance.
 
\stitle{Architecture Design Framework.} 
The learned query optimizers~\cite{neo, balsa, loger, base, bao, autosteer, lero, perfguard, fastgres, glo, foss, QO-advisor} all employ a \emph{plan explorer} to generate candidate query plans, alongside a learned \emph{cost model} that predicts the cost of each plan, but differ in the plan exploration technique and cost model design. For plan exploration, \textsf{Neo}~\cite{neo}, \textsf{Balsa}~\cite{balsa}, and subsequent works~\cite{loger, base, glo} do by themselves, using the best-first or ($\epsilon$-)beam search strategy to construct candidate plans in a bottom-up manner. Alternatively, \textsf{Bao}~\cite{bao} (along with its follow-ups~\cite{autosteer, bao-on-scope, QO-advisor}), \textsf{HyperQO}~\cite{hybrid-QO}, and \textsf{Lero}~\cite{lero}, steer traditional query optimizers with different knobs (\ie, hints on the operators, leading hints in the join orders, or scaled cardinality estimations) to produce a set of diverse plans. \textsf{FOSS}~\cite{foss} advances this line of work by iteratively optimizing query plans produced by traditional query optimizers using join-related hints. 

In cost model design, the cost metric can take various forms, including the query latency~\cite{neo, bao, loger, glo, foss, hybrid-QO}, CPU cost~\cite{balsa}, or even a relative score~\cite{lero}, which allows for comparing the relative quality of different query plans. However, they commonly assume a stable execution environment where plan cost is treated as a function of the plan and data alone. Accordingly, the cost model takes well-crafted feature vectors that encode plan structure (\eg, operator types) and predicates with statistics \wrt the distributions of input columns and tables (\eg, histograms of attributes, estimated cardinality, or cost).
Estimated costs are then derived through a deep network built upon Tree-LSTMs~\cite{hybrid-QO}, Graph or Tree Convolutional Networks ~\cite{neo, balsa, bao, lero, autosteer, glo}, or Transformers~\cite{glo, loger, foss}, that are capable of capturing the structural information of complex tree- or DAG-shaped query plans. 


In \MC, these practices become problematic. First, as introduced in Section~\ref{subsec:lqo-challenges}, \emph{the per-stage resource allocation and varying system loads across machines produce non-negligible cost fluctuations (Challenge~1)}, making cost prediction based purely on static plan structure and data statistics rather unreliable. Worse still, by our analysis in Section~\ref{sec:per-bound}, since the execution environment for online queries is invisible during the query optimization phase, environment variations introduce intrinsic errors to any query
optimizer, no matter how we improve the cost model. Second, \emph{\MC does not guarantee up-to-date or accurate statistics \wrt the underlying data that these learned cost models highly rely on (Challenge~2)}, which further degrades the reliability of cost estimations.

\stitle{System Operating Paradigm.} 
Almost all learned query optimizers perform \emph{instance-level} optimization over each database. Given a specific database $D$, the learned cost model is trained on pairs $(P, c(P))$ \wrt plans $P$ and cost $c(P)$. The training process consists of two phases: \emph{preparation} and \emph{refinement}. In the preparation phase, an initial version of the cost model is built in different ways: 1) initialized randomly~\cite{bao, hybrid-QO, foss}; or 2) trained on a collection of plans generated by the naive query optimizer with actual~\cite{neo, loger} or estimated~\cite{balsa, lero, QO-advisor} cost. 
The model produced in this phase is very coarse and crucially relies on the refinement phase to learn and generalize to the diverse candidate plan patterns. 

Specifically, given a workload containing numerous queries $Q$, the plan explorer generates one or more candidate plans $P$ for query $Q$. After executing $P$, $P$ and $c(P)$ are stored as experience to update the cost model. This phase can be carried out either offline~\cite{neo, balsa, glo} or online by deploying the partially trained model to serve real queries and periodically updating it with observed plan-cost pairs~\cite{neo, bao, lero, hybrid-QO, foss}.


However, \MC primarily serves OLAP queries that operate across multiple data partitions and involve heavy joins. Routine statistics record more than 7 million join-intensive queries per day, with an average of 3.8 tables joined. As exemplified in our evaluation projects in Section~\ref{subsec:qo-setup}, the CPU cost per query is typically on the order of $10^3$–$10^7$.
At this scale, executing extra candidate plans for offline refinement becomes prohibitively expensive. Online refinement is also problematic because deploying a partially trained optimizer can produce severely suboptimal plans that consume excessive resources, which is particularly unacceptable in a shared multi-tenant environment. Consequently, \emph{both of the conventional refinement approaches are infeasible in \MC (Challenge~3)}.

Despite addressing all the aforementioned challenges in query optimization, one critical barrier remains: Even when optimized, the benefits of learned optimizers are bounded by workload patterns (\eg, join topology) and data properties (\eg, table sizes and update frequency), and thus vary across projects. An optimizer that improves one project may yield little or no gain, or even regress, on another (see Section~\ref{subsec:e2e-evaluation}). With over 100,000 projects in \MC, it is neither efficient nor sustainable to train and deploy a separate learned optimizer per project. Instead, \emph{we must automate the selection of projects that will benefit meaningfully from deploying learned optimizers (Challenge~4)}.


We note that these challenges are not unique to \MC. Its workload patterns and operational characteristics are shared by other leading cloud‑native data warehouses~\cite{azure_synapse_analytics, bigquery, redshift, snowflake} (see Appendix~\ref{app:mc-vs-similar-systems}~\cite{full-paper}), and those systems therefore face similar obstacles when deploying learned optimizers. Motivated by this,
we formulate a set of foundational principles as introduced in Section~\ref{sec:intro} and develop solutions to address these issues. 
Despite being implemented and evaluated in \MC in this paper, they are broadly applicable and transferable to other production-grade data warehouses.

\eat{
\stitle{Challenges in \MC.}

Due to per-stage resource allocation and varying system loads across machines, identical queries often incur significantly different costs across executions in \MC. This effect is further amplified at scale, where query execution spans hundreds to thousands of machines with heterogeneous load conditions. As illustrated in Figure~\ref{fig:cost-std}, which reports the relative standard deviation of CPU cost for recurring queries from one \MC's production workload over one month, the cost variation could even reach up to $50\%$. 

\begin{challenge}\label{chal:env-matters} (\textbf{Environment Variations})
The execution environment (\eg, allocated resources and machine loads) of each machine \wrt each assigned stage is highly dynamic during query execution and has a varying yet non-negligible impact on the plan costs. 
\end{challenge} 

This indicates that current learned optimizers, such as~\cite{bao,neo}, which estimate plan costs based on static plan and data features, are fundamentally problematic in \MC. Worse yet, by our analysis in Section~\ref{sec:per-bound}, since the execution environment for online queries is invisible during the query optimization phase, environment variations introduce intrinsic errors to any query optimizers, no matter how we improve the cost model. 

\begin{figure}[t]
    \centering
    \includegraphics[width=0.9\linewidth]{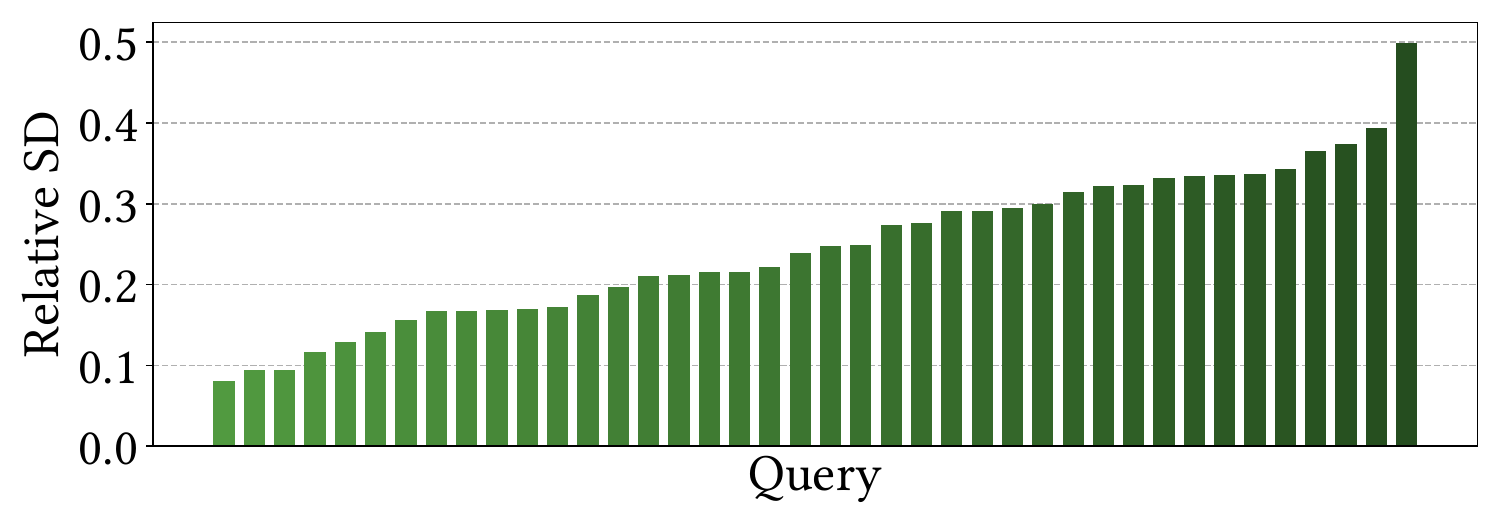}
    \vspace{-1.5em}
\caption{Relative standard deviation (SD) of CPU cost for recurring queries from a production workload in \MC over one month.}
\label{fig:cost-std}
\vspace{-1.2em}
\end{figure}

Another barrier to reliable cost estimation originates from \MC's design choice, which does not automatically maintain statistics, \eg, the number of distinct values (NDVs) and frequency histograms of attributes, of the underlying data. The following challenge prevents current learned optimizers, which heavily depend on such statistics for cost estimation, from operating effectively in \MC. 

\begin{challenge}\label{chal:no-statistics} (\textbf{Missing Helpful Statistics})
Statistics helpful to understand the data distributions are neither consistently available nor guaranteed to be up-to-date in \MC. 
\end{challenge}

In terms of system operating, the conventional refinement process to generalize the learned cost model to candidate plans becomes impractical in \MC. As discussed in Section~\ref{xxx}, \MC primarily serves OLAP queries spanning terabyte- to petabyte-scale tables and are often join-heavy. In our evaluation workload sampled from production (see Section~\ref{subsec:qo-setup}), the average CPU cost per query even exceeds~$10^7$. This makes executing additional candidate plans for offline refinement prohibitively expensive. Meanwhile, the online refinement that deploys a partially trained optimizer risks generating drastically bad plans that consume excessive computational resources, which is especially unacceptable in a shared multi-tenant environment. 

\begin{challenge}~\label{chal:online-refine} (\textbf{Infeasibility of Conventional Refinement})
   Conventional offline and online refinement approaches are both operationally infeasible in \MC.
\end{challenge} 

Despite addressing all the aforementioned challenges in query optimization, one critical barrier remains: learned optimizers do not generalize well across projects as they differ significantly in workload characteristics (\eg, join topology) and data characteristics (\eg, table sizes, update frequencies). As exemplified in Figure~\ref{fig:perf_gain_sample}, two learned optimizers, \sys (proposed in this paper) and an adaptation of a state-of-the-art method~\cite{queryformer}, are evaluated on five production projects (see Section~\ref{subsec:qo-setup}). While both achieve over 20\% improvement on Project~5, they yield only marginal gains or even regressions on Projects~3 and~4. This performance variance highlights the below operational challenge: 

\begin{challenge}~\label{chal:project-selection} (\textbf{Project Selection})
    With currently over $100,000$ projects in \MC, it is neither efficient nor sustainable to universally deploy learned optimizers, and selecting projects that would significantly benefit from learned query optimizers must be automated and scalable. 
\end{challenge}

\begin{figure}[t]
    \centering
    \includegraphics[width=0.9\linewidth]{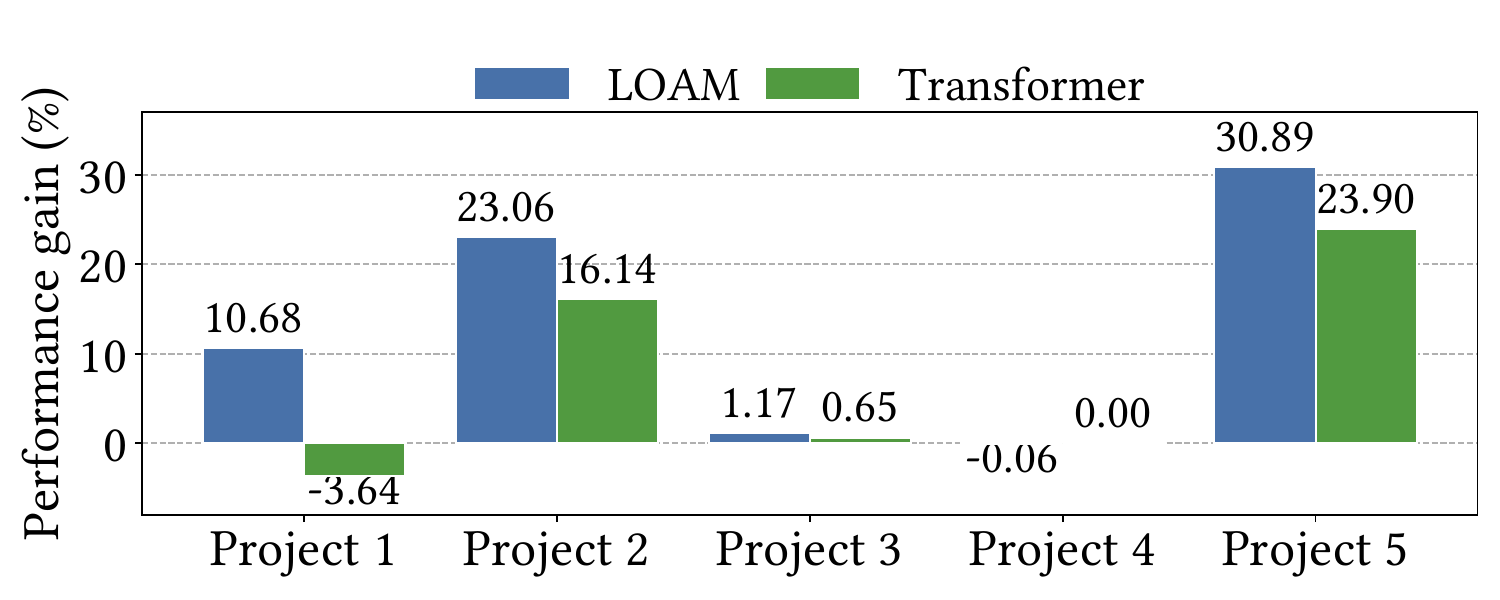}
    \vspace{-1em}
\caption{Performance gain of learned query optimizers over \MC's native query optimizer.}
\label{fig:perf_gain_sample}
\vspace{-1.2em}
\end{figure}

}

\vspace{-0.5em}
\section{The \sys Framework}\label{sec:framework}


In this section, we overview \sys, our one-stop solution for deploying \underline{l}earned query \underline{o}ptimizers on \underline{A}libaba's \textsf{\underline{M}axCompute}. 
It addresses all challenges described above and has seen notable performance improvements on \MC's production workloads.

The overall architecture of \sys is presented in Figure~\ref{fig:framework}. Rather than universally deploying \sys across all \MC's projects, we employ a \textbf{\emph{project selector}} to evaluate each project based on its historical queries and automatically select those most likely to benefit from \sys (as per Challenge~4). 


For a selected project, an \textbf{\emph{adaptive cost predictor}} is trained using data from its historical query repository to estimate the cost of a given query plan. By our observation, end-to-end latency in distributed systems like \MC is highly sensitive to transient system conditions such as queuing delays and network congestion, and thus often noisy. Accordingly, \sys predicts CPU cost as a more stable proxy for a plan’s total computational effort and long-term efficiency. To be applicable to \MC, the predictor explicitly models impacts from execution environments to make reliable cost estimations and intentionally avoids using statistics \wrt the underlying data (as per Challenges~1 and~2). Instead of the conventional refinement process, we also adopt an adaptive training mechanism that is both lightweight and effective to generalize the cost predictor to candidate plans for online queries. Before deployment, the predictor is evaluated on a sampled set of test queries (not seen in training) from the historical query repository. To obtain their actual cost as ground truth, they are executed in \MC's \emph{flighting} environment, which can replay user query plans without compromising privacy or disrupting the normal service of the user’s project. The results are then used to decide whether the predictor is suitable for production use.

Once \sys is deployed, each incoming user query $Q$ is optimized in the \emph{steering} style, as in \textsf{Bao}~\cite{bao} and \textsf{Lero}~\cite{lero}. 
Specifically, the \textbf{\emph{plan explorer}}, working with \MC's native query optimizer, generates a set of candidate plans $\{P_1, P_2, \cdots, P_k\}$. The adaptive cost predictor then estimates the costs of these plans, and the optimal plan $P^*$ with the minimum estimated cost is selected for actual execution. After executing $P^*$, its execution details are stored in the historical query repository in the background to refine the cost predictor in the future.

Next, we briefly introduce the three key components of \sys.

\begin{figure}
  \centering
  \includegraphics[width=\linewidth]{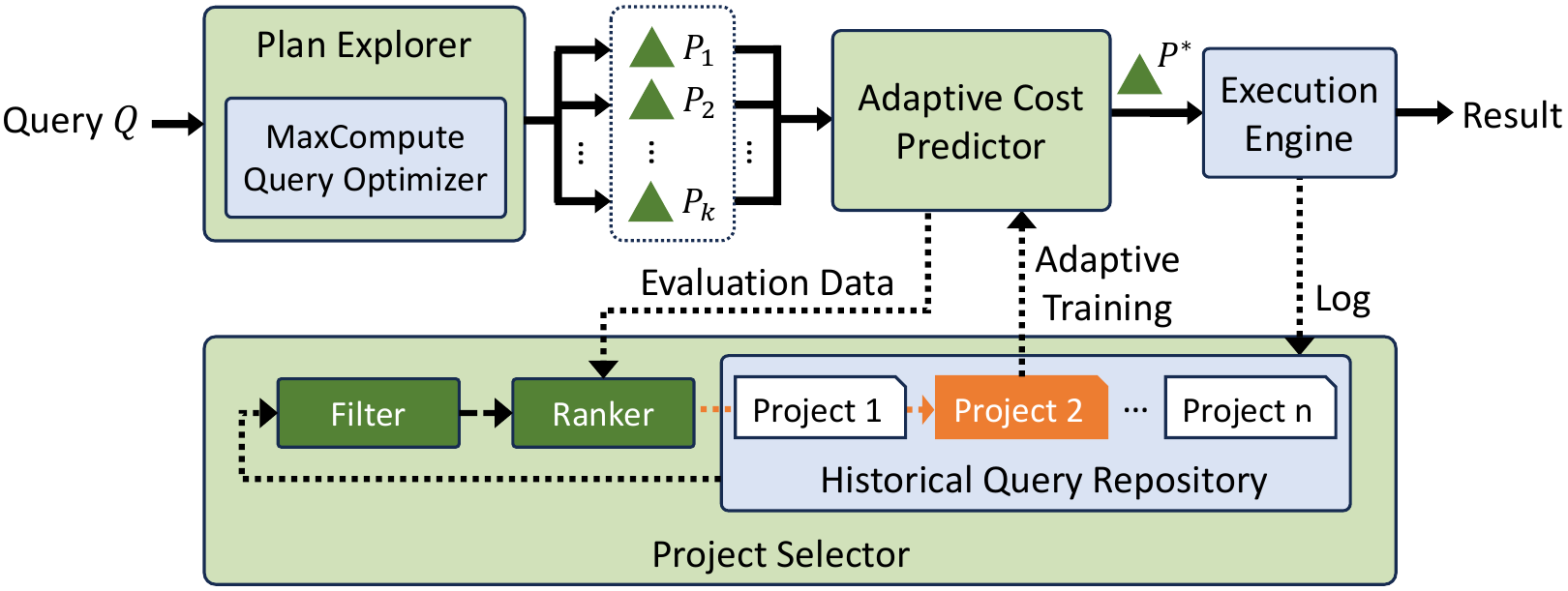}
  \vspace{-2em}
  \caption{The framework of \sys.}
  \label{fig:framework}
  \vspace{-2.1em}
\end{figure}


\stitle{Plan Explorer.} 
For a given query, the plan explorer aims to produce a diverse set of candidate plans that is likely to include plans with lower cost than the one produced by \MC's native query optimizer, which we refer to as \emph{default plan}. 
It generates candidate plans using ideas from prior work: i) following \textsf{Bao}~\cite{bao}, it toggles flags controlling \MC-specific optimizations. \MC exposes 75 tunable flags across six categories (\ie, execution mode, physical implementation optimizations, data-flow optimizations, degree of parallelism, storage-related settings, and other optimizations), providing a rich search space for discovering better candidate plans. In this paper, we restrict ourselves to six flags spanning join, shuffling, spool, and filter-related optimizations. They were selected by \MC's domain experts because they are more likely to 
yield diverse candidate plans, while remaining safe enough to avoid drastically bad plans and to keep evaluation overhead manageable in the experimental phase; 
and ii) following \textsf{Lero}~\cite{lero}, it scales estimated cardinalities for subqueries with at least three inputs to influence \MC's decisions for plan structures.
Notably, designing plan exploration strategies itself is not the focus of this paper. \sys is agnostic to the specific strategy employed and supports tuning any type of knob for plan generation. 



\stitle{Adaptive Cost Predictor.}
The cost predictor is constructed based on historically executed default plans produced by native \MC. Its architecture incorporates two key design aspects tailored to \MC. First, to compensate for the absence of statistical information such as histograms of attributes (Challenge~2), it encodes attributes of key operators to enable coarse inference of data distribution details from historical query plans. Second, to address the impact of dynamic execution environments (Challenge~1), we train the predictor to estimate a plan’s cost from both its structure and environmental features observed at stage granularity during execution, using historical query plans and their logged execution-time environments. This enables the model to learn how environmental variability influences actual execution costs. 
Because the feature distributions of these default plans can differ markedly from those of candidate plans tuned via various knobs for online queries, we also integrate \emph{domain adaptation}~\cite{domain-adaption} with adversarial techniques into the training process. This enables the predictor to generalize to estimate costs for candidate plans without conventional refinement processes (as per Challenge~3). 
The design details of the cost predictor and the training paradigm are presented in Section~\ref{sec:model}. 

Upon deployment, however, the execution environment for an online query is unavailable at prediction time (as the query has not yet started), which contrasts with the assumptions of current learned query optimizers (Section~\ref{subsec:reviewlqo}), where all feature values are available for cost estimation. To address this, in Section 5, we develop a theoretical model to quantify the impact of such unobserved environmental features and propose a simple yet effective method to robustly estimate plan costs against environment variation.


\eat{The adaptive cost predictor is designed to accommodate \MC from two key aspects.
First, as statistical information like histograms of attributes is absent in \MC (Challenge~2), it encodes attributes of key operators to enable coarse inference of data distribution details from historical queries. The predictor is trained offline with historically executed plans generated by native \MC. Since feature distributions of these default plans could drastically differ from online candidate plans tuned using various knobs, we integrate \emph{domain adaptation}~\cite{domain-adaption} with adversary techniques into the training process. It enables the predictor to generalize to estimate costs for candidate plans without conventional refinement (Challenge~3).
Our evaluation results show that this technique is very effective (see Section~\ref{subsec:grl-evaluation}). The design details and training paradigm will be introduced in Section~\ref{sec:model}. 

Second, to address the impacts brought by the execution environment (Challenge~1), it learns to estimate a plan's cost based on both its structure and environmental features
during its execution (with stage granularity). In the offline training stage, we apply the collected execution-time environmental features
of historical queries so the model could learn how they impact the execution cost. However, for the online query, the execution environments
are \emph{unavailable} at the prediction moment (as the query does not start to execute). This is drastically different from existing learned query optimizers (see Section~\ref{subsec:reviewlqo}), where all feature values are available before prediction. To this end, in Section~\ref{sec:per-bound}, we build a theoretical model to evaluate and qualify the impacts of such invisible environmental features and propose a simple but effective method to obtain a predicted cost of each plan that could smooth the effects of environment variations. 
}

\stitle{Project Selector.} We have identified two primary reasons why a project may be unsuitable for deploying a learned query optimizer: 
1) \emph{Training Challenge}: Issues such as insufficient training queries or frequent data modifications hinder the development of an effective learned optimizer;
and 2) \emph{Low Deployment Benefit}: \MC's native query optimizer already produces near-optimal plans compared to those tuned by upfront knobs, making the benefits marginal. To filter out these projects, we adopt a two-stage, highly automated, and lightweight project selection pipeline (as per Challenge~4). First, a \emph{rule-based filter} coarsely excludes projects that exhibit training challenges using a set of well-defined rules. After that, a learned project \emph{ranker} orders the remaining projects by their potential deployment benefit. Highly ranked (\eg, top-$N$) projects are then selected to deploy \sys. The ranker is trained across multiple projects using the evaluation data for the cost predictor periodically. 
We provide the details in Section~\ref{sec:project-selection}.

\section{Building an Adaptive Cost Predictor}
\label{sec:model}


\begin{figure}
  \centering
  \includegraphics[width=\linewidth]{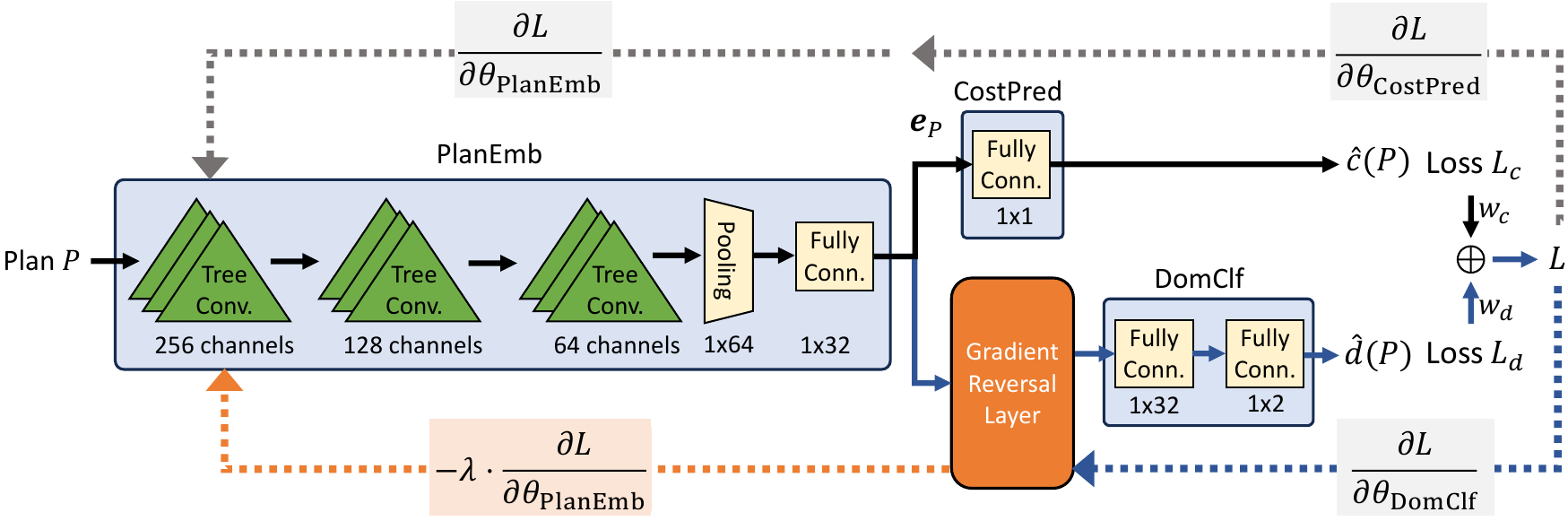}
  \vspace{-1.8em}
   \caption{Structure of the adaptive cost predictor.}
  \label{fig:model}
  \vspace{-1.8em}
\end{figure}

The adaptive cost predictor is designed ambitiously to learn from the historical CPU costs of default plans and to predict the costs of candidate plans generated by the plan explorer. Its overall architecture is illustrated in Figure~\ref{fig:model}. As in prior work (\eg, \cite{neo, balsa, bao}), the predictive module comprises a \emph{plan embedding layer} (\textsf{PlanEmb}) that transforms a carefully designed vectorized plan $P$ into a multidimensional intermediate representation, which is then fed into a \emph{cost prediction layer} (\textsf{CostPred}) to estimate the cost of $P$. Notably, although candidate plans follow a different distribution \wrt default plans, our method does not rely on conventional refinement to generalize. Instead, we employ an \emph{adaptive training} process with adversarial techniques to encourage \textsf{PlanEmb} to produce representations of similar distributions for both default and candidate plans, thereby making \textsf{CostPred} applicable to estimating the costs of candidate plans~\cite{domain-shift}. In this section, we first introduce a tailored plan vectorization approach for \MC, followed by the overall design of the predictive module and our adaptive training paradigm.





\stitle{Plan Vectorization.} 
Recall that each plan $P$ in \MC is a tree in which each node corresponds to an operator (\eg, table scanning and joining). We associate each node of the plan tree\footnote{We assume a canonical binary tree; otherwise, the plan is converted accordingly following~\cite{bao}.} with a vector encoding its key information that impacts the execution cost from two aspects:
 

\textbf{\emph{1. Operator Information.}} For each node in the plan tree, we adopt a one-hot encoding for the operator type. As shown in Figure~\ref{fig:plan}, the first two positions in the encoding indicate whether the operator is a \textsf{TableScan} or \textsf{MergeJoin}. We note that, although statistical information (\eg, histograms) is often absent in \MC, we could encode key attributes of certain operators to enable the model to \emph{infer information \wrt data distributions} from historical queries. The rationale is twofold:


i) Operator attributes reveal details about accessed tables and columns. Consequently, data-level insights such as volume, cardinality, and even distribution can be inferred through these attributes. For instance, for \textsf{TableScan} operators with identical predicates over different input tables, we could deduce which table is larger and coarsely infer table sizes from observed execution costs. Similarly, for join operations under the same join condition, we could predict whether joining two given tables yields a large or small result set.


ii) Production workloads are pervasively driven by parameterized, template-based queries whose parameters vary across runs. This stable, repetitive pattern provides sufficient signals for the model to observe how parameter changes influence execution costs. For example, by comparing queries that differ only in filtering predicates (\eg, $A_1 = a$ where $a$ is a distinct value in the domain of column $A_1$), the model could learn the selectivity of each predicate (and even infer the distribution of column $A_1$) from cost differences.


Guided by expert knowledge, we encode several classes of operators (among the 30 operator types supported by \MC) that are most frequently used and cost-impacting:

$\blacklozenge$ \emph{TableScan Operator.} 
In \MC, tables containing terabytes (or even exabytes) of data are logically partitioned by attributes for efficient access. For this operator, we encode the table identifier and the number of partitions and columns accessed, which can reflect the amount of processed data. Due to frequent creation and deletion of temporal tables in \MC, we use hash encoding (as detailed in Appendix~\ref{app:hash-encode}~\cite{full-paper}) for the table identifier to prevent explosion of the feature dimensionality. The numbers of partitions and columns are numerical variables, which are log-normalized using min–max normalization on their logarithms.


\begin{figure}[t]
  \centering
  \includegraphics[width=0.9\linewidth]{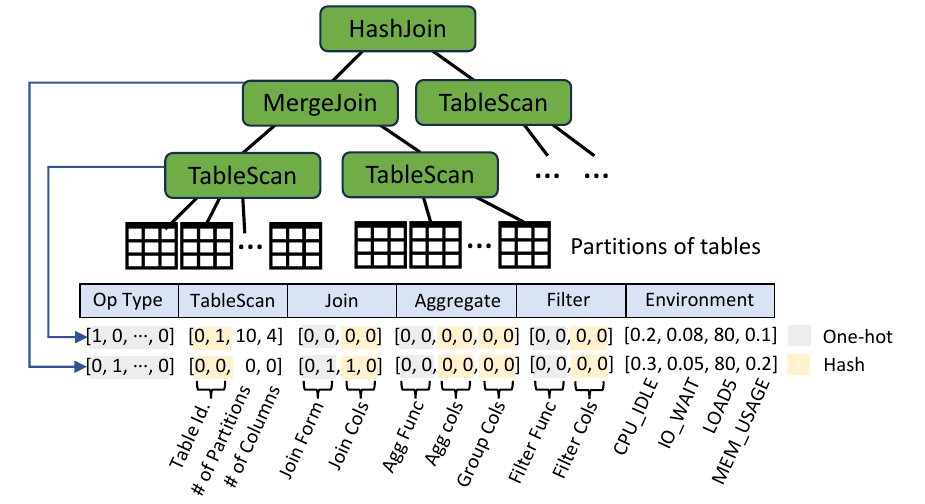}
  \vspace{-1.2em}
  \caption{Plan encoding in \sys.}
  \label{fig:plan}
  \vspace{-2.2em}
\end{figure}

$\blacklozenge$ \emph{All Joining and Aggregation Operators.} 
We represent each type of join operators (\eg, \textsf{HashJoin} and \textsf{MergeJoin}) with two attributes:
1) a one-hot vector for the join form (\eg, inner, outer, left, or right join); and 2) a hash encoding of the joining column identifiers. Similarly,
for each type of aggregation operators (\eg, \textsf{HashAggregate}), we use a one-hot vector for the aggregation function (such as \texttt{SUM} and \texttt{COUNT}), together with hash encodings for both aggregate and group‑by column identifiers. These encodings enable the model to infer the data volume joined or aggregated \wrt different columns.


$\blacklozenge$ \emph{Filtering and Related Operators.} Filtering predicates in \MC are structured as expression trees, where internal nodes denote functions (\eg $>, <, =$), and leaf nodes correspond to columns and constants. These trees can grow to tens or even hundreds of levels, so encoding the entire tree would introduce great complexity without materially improving the model’s understanding of the predicate. Thus, we adopt a simplified representation for the \textsf{Filter} operator with two key attributes: 1) a multi-hot encoding of all functions involved; and 2) a hash encoding for the identifiers of all involved columns. Other operators that implement filtering logic, \eg, the \textsf{Calc} operator (which incorporates both filtering and projection operations), are encoded similarly. This representation enables the model to learn coarse-grained selectivity \wrt the predicate.

\textbf{\emph{2. Execution Environment.}}
Both the hardware configuration of allocated machines and their load can affect plan costs. In \sys, we focus on encoding load information and omit explicit machine-level hardware features for two reasons: 1) in production, a stage may execute across hundreds of machines, making direct hardware encoding impractical; and 2) machines within the same cluster are intentionally homogeneous to facilitate scheduling and load balancing. We therefore reasonably assume identical computational power across machines, \ie, plan cost is independent of per-machine hardware configurations.

We model the load on allocated machines using four standard metrics, namely \texttt{CPU\_IDLE}, \texttt{IO\_WAIT}, \texttt{LOAD5}, and \texttt{MEM\_USAGE}, which capture the status of CPU, I/O, system load, and memory (detailed in Appendix~\ref{app:load-metric}~\cite{full-paper}). On each machine, these metrics are highly dynamic and sampled every 20 seconds. Therefore, to derive a representative environmental feature for each plan node, we average each metric over the stage's execution window and, because execution typically spans multiple machines, also across all allocated machines. 
The metric \texttt{LOAD5} is log-normalized, while other metrics are already bounded and used directly, so all features fall within $[0, 1]$. Notably, all plan nodes within the same stage run on the same set of allocated machines (see Section~\ref{subsec:lqo-challenges}), and thus share the same encoding vector \wrt execution environments.




These features are crucial for offline training on historical queries that were executed under diverse production environments. Incorporating them enables the model to learn hidden patterns of how execution environments
influence execution cost and to disentangle environmental impacts from plan‑intrinsic features. Without these features, cost labels are noisy to the model, making it hard to learn a stable mapping from plans to costs. At serving time, however, environmental features become unobservable for online query plans. We defer details of our approach for online inference with invisible environmental feature values to Section~\ref{sec:per-bound}.


\stitle{Predictive Module Design.} 
After a plan $P$ is vectorized, \textsf{PlanEmb} maps $P$ from its input feature space into a compact $n$-dimensional embedding, denoted as $\bm{e}_P \in \mathbb{R}^n$. We build \textsf{PlanEmb} using the Tree Convolutional Networks (TCNs), similar to~\cite{bao, neo, balsa, lero}. Tree convolution applies learnable filters over each tree node and its children, aggregating information upward from child to parent. By stacking more TCN layers, each node progressively integrates hierarchical information from deeper subtrees. The resulting node representations are pooled
and then passed through a fully connected layer to produce $\bm{e}_P$. \textsf{CostPred}, also implemented as a simple fully connected layer, maps $\bm{e}_P$ to the estimated cost $\hat{c}(P)$. Notice that designing the predictive module is orthogonal to the topic of this paper. 
We also evaluate alternative architectures, including Transformer~\cite{transformer}, Graph Convolutional Networks (GCNs)~\cite{GCN}, and XGBoost~\cite{xgboost}. Section~\ref{subsec:e2e-evaluation} reports the comparative results.

\stitle{Adaptive Training Paradigm}
To generalize the predictive module trained on historical default plans to candidate plans without conventional refinement,
we incorporate unsupervised domain adaptation techniques~\cite{domain-adaption} into the training process. The core idea is to ensure that embeddings produced by \textsf{PlanEmb} are both discriminative for cost prediction and \emph{invariant to domain shifts}. Concretely, we seek the distributions of $\textsf{PlanEmb}(P)$ for default plans $P$ and $\textsf{PlanEmb}(P')$ for candidate plans $P'$ to be as similar as possible, so that \textsf{CostPred} can generalize to estimate cost for candidate plans~\cite{domain-shift}. 
To achieve this, we augment the predictive module with a \emph{domain classifier} \textsf{DomClf} that distinguishes between embeddings produced for default plans and candidate plans (see Figure~\ref{fig:model}). We then conduct adversarial training between \textsf{PlanEmb} and \textsf{DomClf}: \textsf{PlanEmb} aims to produce similar representations across domains, while \textsf{DomClf} strives to differentiate them. This interplay encourages \textsf{PlanEmb} to produce domain-invariant embeddings. In our implementation, \textsf{DomClf} consists of two fully connected layers that map the embedding $\bm{e}_P$ from \textsf{PlanEmb} to a $2$-dimensional probability vector $\hat{d} (P)$ indicating the likelihood that the plan $P$ is a default or candidate plan.

During training, we jointly pursue two objectives: 1) optimize \textsf{PlanEmb} and \textsf{CostPred} to accurately estimate execution costs for default plans sampled from the historical query repository, denoted as $\mathcal{P}_\textsf{def}$; and 2) adversarially train \textsf{PlanEmb} and \textsf{DomClf} so that \textsf{PlanEmb} yields domain-invariant representations for default and candidate plans. The second objective requires the model to see a small set of candidate plans generated by the plan explorer, denoted as $\mathcal{P}_\textsf{cand}$. However, these plans need not be executed, as the training process focuses on distinguishing them from default plans rather than estimating their actual costs. As shown in Section~\ref{subsec:e2e-evaluation}, the overhead of purely generating them is negligible.

To simplify the adversarial training by backpropagation, we follow~\cite{domain-adaption} by inserting a \emph{gradient reversal layer (GRL)} $G$ between \textsf{PlanEmb} and \textsf{DomClf}, which acts as $G(\bm{e}_P) = \bm{e}_P$ and $\partial G(\bm{e}_P) / \partial{\bm{e}_P} = -\lambda \bm{I}$, where $\bm{I}$ is the identity matrix. That is, during the forward pass, the GRL passes features unchanged, but during backpropagation, it reverses the gradient by a factor of $-\lambda$ (whose value is set automatically
following~\cite{domain-adaption}). By this mechanism, \textsf{DomClf} is still optimized to distinguish the domain of plans as the gradient from the final loss to it does not change (the dashed blue arrow in Figure~\ref{fig:model}). Whereas, the gradient from the final loss
to \textsf{PlanEmb} via \textsf{DomClf} is reversed (the dashed orange arrow in Figure~\ref{fig:model}), pushing \textsf{PlanEmb} to produce domain-invariant features.

Let $L_c(\cdot)$ and $L_d(\cdot)$ denote the loss functions for cost prediction and domain classification, respectively. The overall loss is: 
\begin{equation}~\label{eq:loss}
    L = w_c\sum_{P \in \mathcal{P}_\mathsf{def}} L_c\big(\hat{c}(P), c(P)\big) + w_d\sum_{P \in \mathcal{P}_\mathsf{def} \cup \mathcal{P}_\mathsf{cand}} L_d\big(\hat{d}(P), d(P)\big),
\end{equation}
where $d(P) \in \{(0, 1), (1, 0)\}$ indicates whether $P$ is a default or candidate plan. The weights $w_c$ and $w_d$ balance the magnitudes of losses for cost prediction and domain classification, and are adjusted automatically during training based on their scales. We use mean squared error for $L_c(\cdot)$ and cross-entropy loss for $L_d(\cdot)$. 



\eat{
The weights $w_c$ and $w_d$ modulate the trade-off between the losses for cost prediction and domain classification, respectively. Unlike the static assignment in the original formulation, these weights are dynamically adjusted during training based on the relative magnitudes of $L_c$ and $L_d$. Specifically, they are computed as follows: $w_c=\frac{L_d}{L_c+L_d}$ and $w_d=\frac{L_c}{L_c+L_d}$ This adaptive weighting mechanism ensures a balanced optimization process by scaling the contribution of each loss component based on their current relative importance.
Furthermore, the parameter $\lambda$ for the GRL layer is dynamically adjusted using the following function:
$$
\lambda=\frac{2}{1+\exp\left(-10\cdot\frac{\mathrm{epoch}}{\text{total epochs}}\right)}-1.
$$

This strategy allows $lambda$ to transition smoothly from 0 to 1 as training progresses. Initially, a lower $lambda$
reduces the adversarial influence, permitting the network to focus on stabilizing the primary task of cost prediction. As training advances, the increasing $lambda$ intensifies the adversarial training, thereby enhancing the model's capacity to learn domain-invariant features without compromising the primary objective.

The parameters $w_c$ and $w_d$ control the trade-off between the losses for cost prediction and domain classification, and are adjusted automatically by the training process. The parameter $\lambda$ for the GRL layer is set to be $w_d/w_c$. }

\section{Plan Cost Inference}
\label{sec:per-bound}

When the adaptive cost predictor is deployed online, it estimates the execution costs of candidate plans generated for each query. However, the execution-time environment is unavailable upon query optimization, so we do not know which environmental feature values to use for inferring plan costs. In this section, we first develop a probabilistic framework to formalize and quantify the impact of such invisible environment variation. We show that this variation fundamentally limits a query optimizer's ability to consistently pick the truly optimal plan under the realized environment and establishes a non-trivial upper bound on best-achievable performance in real-world deployment. We then introduce a practical strategy for setting environmental features at inference time that can effectively mitigate the adverse impacts of this variation and improve the efficiency of plan selection.

\stitle{Theoretical Analysis.}
Let $E$ be a multi-dimensional random variable representing the system environments. For any plan $P$, let $C_{E}(P)$ denote its execution cost under environment $E$, which is itself a random variable determined by $E$. The cost $c(P)$ is the value of $C_{E = e}(P)$, or denoted as $C_{e}(P)$, for an observed instance $e$ of $E$. Given a query $Q$ with candidate plans $\mathcal{P}_Q = \{P_1, P_2, \cdots, P_n\}$ and a (learned) cost model $M$ for cost prediction, let $P_{M}$ be the plan selected by $M$, \ie, with the smallest estimated cost. In the ideal case, an \emph{oracle model} $M_{o}$ would always select the plan with the minimum cost under any instance $e$ of $E$, \ie, $P_{M_{o}} = \min_{P_i} C_{e}(P_i)$. 


In practice, $M_{o}$ is unattainable as we cannot foresee the value $e$ before plan execution. Instead, we consider expected cost \wrt $E$ and pursue a model $M_{b}$ that could always select the plan with the minimum expected cost, \ie $P_{M_{b}} = \min_{P_i} \mathbb{E}[C_{E}(P_i)]$. While $P_{M_{b}}$ may not minimize the cost for every instance $e$, it minimizes cost on average over all possible environments.

To evaluate the quality of plan selection for different models, we measure the cost deviance of a selected plan from the oracle choice. For a model $M$ that selects plan $P_{M}$ under environment instance $e$, we define $D_{e}(M) = C_{e}(P_{M}) - C_{e}(P_{M_{o}})$. This induces a random variable $D_{E}(M)$ such that $D_{E=e}(M) = D_{e}(M)$. The expected deviance, \ie, $\mathbb{E}[D_{E}(M)]$, naturally quantifies the performance gap between $M$ and the oracle model $M_o$. Because the environment instance $e$ cannot be known before plan execution, no matter the efforts we devoted to improving the model, the following Theorem~\ref{Thm:perfgap} states that: 
1) the performance gap brought by the invisibility of environments is intrinsic, which is fundamentally different from the assumptions of existing learned query optimizers reviewed in Section~\ref{subsec:reviewlqo};
and 2) the model $M_b$ minimizing the expected cost is \textbf{\emph{best-achievable}}. 
We put its proof in Appendix~\ref{app:proof-for-perf-bound}\cite{full-paper}. 
By testing typical queries in \MC, we find that $\mathbb{E}[D_{E}(M_b)]$ is often around 10\% of the oracle cost (see Section~\ref{subsec:load-effect}). 
\vspace{-0.5em}
\begin{theorem}
\label{Thm:perfgap}
For any model $M$ that can not foresee the environment instance $e$ before plan execution, we always have $\mathbb{E}[D_{E}(M)] \geq \mathbb{E}[D_{E}(M_b)] \geq \mathbb{E}[D_{E}(M_o)] = 0$.
\end{theorem}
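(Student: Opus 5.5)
The plan is to handle the three relations in the chain separately, writing $D_E(M)$ out explicitly as $C_E(P_M) - \min_i C_E(P_i)$. Here the oracle selection $P_{M_o}$ is read as the argmin of $C_e(P_i)$ over $P_i\in\mathcal{P}_Q$, so $C_e(P_{M_o}) = \min_i C_e(P_i)$.

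\emph{Rightmost equality and middle inequality.} For every instance $e$, $D_e(M_o) = 0$ by definition, so $\mathbb{E}[D_E(M_o)]=0$. For any model $M$ and any $e$, we have $C_e(P_M) \ge \min_i C_e(P_i)$, hence $D_e(M)\ge 0$ pointwise. Taking expectations gives $\mathbb{E}[D_E(M)]\ge 0$; applied to $M=M_b$ this yields $\mathbb{E}[D_E(M_b)]\ge \mathbb{E}[D_E(M_o)]$. I will assume $\mathbb{E}[C_E(P_i)]$ is finite for each of the finitely many candidates, so that all expectations are well defined.

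\emph{Leftmost inequality (the substantive step).} The key is to formalize ``cannot foresee $e$''. I would model it as follows: the plan $P_M$ chosen by $M$ is a random variable $\Pi$ over $\mathcal{P}_Q$ that is independent of $E$. It may be deterministic, or randomized using internal randomness or other information independent of the realized environment.

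Then I would use linearity of expectation:
\begin{equation*}
\mathbb{E}[D_E(M)] = \mathbb{E}[C_E(\Pi)] - \mathbb{E}\big[\min_i C_E(P_i)\big].
\end{equation*}
The second term does not depend on $M$, so it suffices to show $\mathbb{E}[C_E(\Pi)]\ge \mathbb{E}[C_E(P_{M_b})]$. Conditioning on $\Pi$ and using independence,
\begin{equation*}
\mathbb{E}[C_E(\Pi)] = \sum_{i}\Pr[\Pi=P_i]\,\mathbb{E}[C_E(P_i)] \ge \min_i \mathbb{E}[C_E(P_i)] = \mathbb{E}[C_E(P_{M_b})].
\end{equation*}
Subtracting the common oracle term gives $\mathbb{E}[D_E(M)]\ge \mathbb{E}[D_E(M_b)]$.

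The main obstacle is conceptual rather than computational. The independence of the selection from $E$ is exactly what the hypothesis ``cannot foresee'' must supply, and it is what prevents $M$ from matching $M_o$. I would state this as an explicit modeling assumption at the start of the proof. I would also note that ties in the argmin do not affect any of the values above, and that $P_{M_b}$ is well defined because the candidate set is finite.
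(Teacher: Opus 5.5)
Your proof is correct and follows the paper's argument. Like the paper, you get the first inequality from $\mathbb{E}[C_E(P_M)] \ge \min_i \mathbb{E}[C_E(P_i)] = \mathbb{E}[C_E(P_{M_b})]$ and then subtract the common oracle term, and you get the second from the pointwise bound $C_e(P_{M_b}) \ge C_e(P_{M_o})$. Your one addition is modeling ``cannot foresee $e$'' explicitly as independence of the selected plan from $E$, with randomized selection allowed; this makes precise the step the paper uses silently when it treats $P_M$ as a fixed plan.
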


\stitle{Practical Method for Model Inference.}
While by Theorem~\ref{Thm:perfgap}, selecting the plan $P$ minimizing the expected cost attains the best-achievable performance, computing $\mathbb{E}[C_E(P)]$ is practically infeasible. In production, it is very difficult to express the environment distribution and its impact on plan costs in canonical form, so we could only approximate the expected cost of plan $P$ using Monte Carlo sampling, \ie, executing $P$ multiple times under different environmental settings and averaging the costs. However, this approach faces an inherent trade-off between stability and efficiency. Achieving a stable estimate requires a large number of samples, which demands repeated plan executions and substantially increases query optimization time. Using fewer samples lowers overhead but raises high variance and can lead to unstable or suboptimal plan choices. 

\begin{figure}[t]
    \centering
    \includegraphics[width=0.43\linewidth]{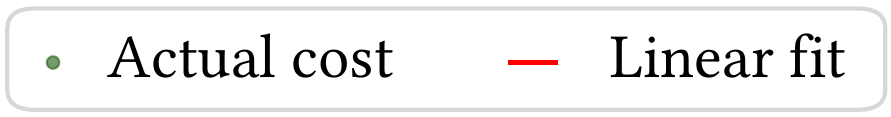}
    
    \includegraphics[width=0.48\linewidth]{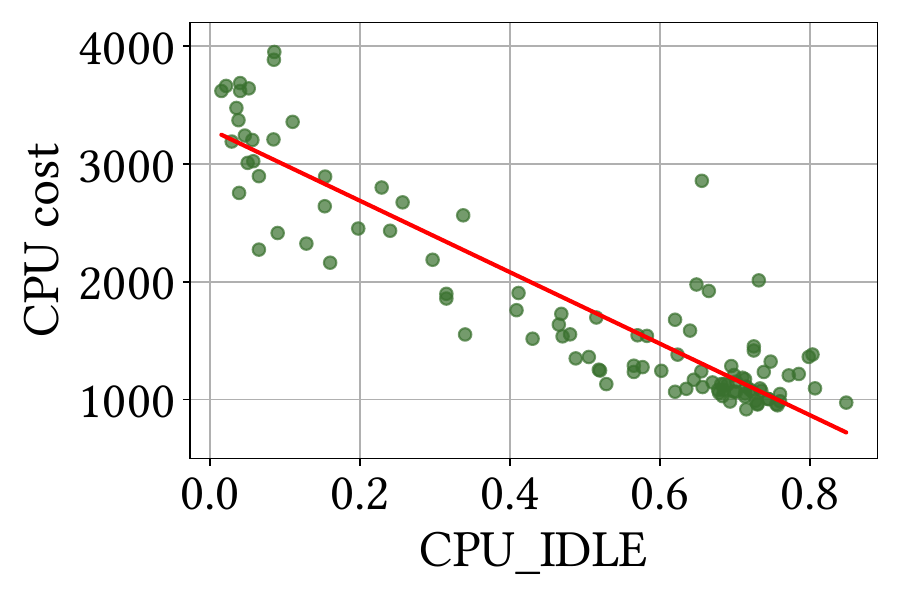}
    \includegraphics[width=0.48\linewidth]{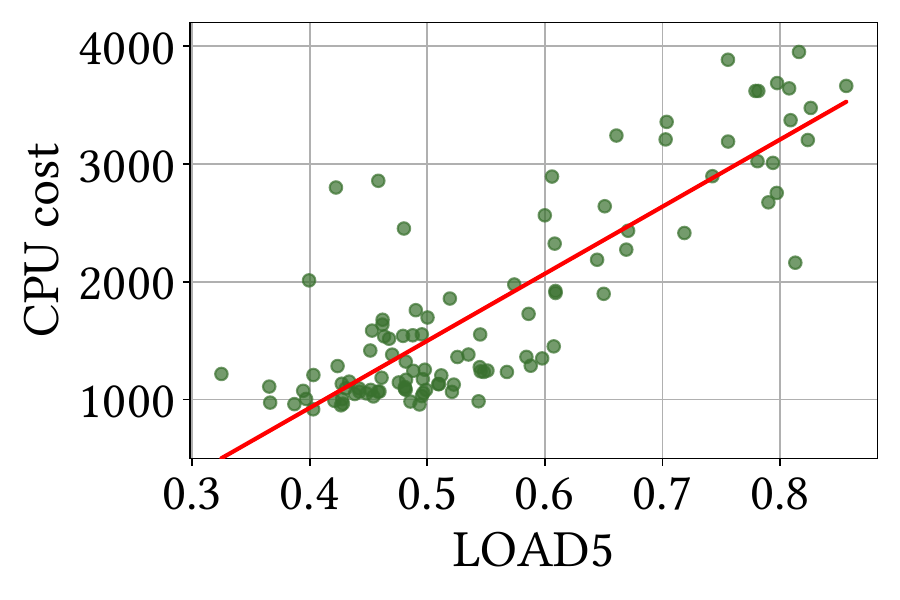}
    \vspace{-0.5em}
    \caption{CPU cost of a recurring query \wrt machine load.}
    \label{fig:cost}
    \vspace{-1em}
\end{figure}

To address this, we adopt a more practical approximation to the expectation of execution costs. Motivated by~\cite{representative}, we select a representative environment instance $e_{r}$ such that, for any candidate plan $P_i$, $C_{e_{r}}(P_i)$ could be close to the expected cost $\mathbb{E}[C_{E}(P_i)]$. Under this setting, the model $M_{r}$ selects the plan $P_{M_{r}} = \arg\min_{P_i} C_{e_{r}}(P_i)$ among all candidate plans. Our empirical analysis of \MC's historical queries indicates that environmental features have a discernible, roughly monotonic influence on plan costs that can be coarsely approximated as linear for cost estimation. Figure~\ref{fig:cost} illustrates this by presenting the CPU cost of a simple production query from \MC's workload against \texttt{CPU\_IDLE} and \texttt{LOAD5} averaged across all plan nodes. 
This motivates us to instantiate $e_{r}$ by setting each environmental feature to its empirical mean, specifically, near $0.5$ within the normalized range $[0, 1]$ (except for the expected \texttt{IO\_WAIT} value approaching $0.05$) on average over queries, for cost inference. Intuitively, this evaluates all plans under typical, average-case system load conditions, which is fair and reasonable. It is also computationally efficient as it requires model inference only once, with no additional plan execution. Our experiments in Section~\ref{subsec:load-effect} also demonstrate that this strategy outperforms alternative choices, and its expected deviance $\mathbb{E}[D_{E}(M_r)]$ is close to the optimal $\mathbb{E}[D_{E}(M_b)]$.

\eat{
\stitle{Impact of the Unseen Environmental Variations.} To quantify the impact of invisible environmental variations, we begin by introducing several fundamental concepts. 
\begin{definition}[Plan Performance] Given a query $Q$, let $\mathcal{P}_Q = \{P_1, P_2, \cdots, P_n\}$ be the set of candidate plans generated for $Q$.
The performance of a plan $P \in \mathcal{P}_Q$ is defined as its cost $c(P)$ normalized by the minimum cost among all candidate plans:
\begin{equation*} 
\mathsf{PlanPerf}(Q, P) = \frac{c(P)}{\min_{P_i \in \mathcal{P}_Q} c(P_i)}, 
\end{equation*}
where a value of $1$ indicates the optimal plan. 
\end{definition}

Because of the dynamic execution environments, a plan's performance may vary across executions, even when the plan structure remains unchanged. 

\begin{definition}[Oracle Model]~\label{def:perf-model}
    Given a query set $\mathcal{Q}$, an \textbf{oracle} learned optimizer $\mathcal{O}^*$ ensures that, for every query $Q \in \mathcal{Q}$, $\mathcal{O}^*$ consistently selects the optimal candidate plan $P^*$, \ie, $\mathsf{PlanPerf}_Q(P^*) = 1$. 
\end{definition} 

Although an oracle optimizer is desirable, achieving it is impossible without knowledge of the system environments that execute the plans. We therefore naturally \emph{quantify the impact of these invisible environmental factors as the gap between plan performances achieved by a \textbf{practically achievable best optimizer} compared with the oracle model.}

We characterize this \emph{achievable best optimizer} from a probabilistic standpoint. Specifically, the execution cost of a candidate plan $P_i$ is modeled as a random variable $C_i(P_i) = f(P_i, E)$, where $E$ is a random variable representing the system environments. The cost $c(P_i)$ of $P_i$ can thereby be seen as the value of $C_i(P_i)$ given some sampled instance of $E$. For brevity, we will refer to $C_i(P_i)$ simply as $C_i$ throughout the rest of this section. Following Bayes-optimal decision theory~\cite{}, an optimizer can, at best, select the plan that minimizes the expected cost, as formalized below. 

\begin{definition}[Best-Achievable Model]~\label{def:perf-model}
    Given a query set $\mathcal{Q}$, a learned optimizer $\hat{\mathcal{O}}$ is said to be \textbf{best-achievable} if for every query $Q \in \mathcal{Q}$, $\hat{\mathcal{O}}$ selects the candidate plan $\hat{P}$ with the minimum expected cost, \ie, $\mathbb{E}(\hat{C})= \min\{\mathbb{E}[C_1], \mathbb{E}[C_2], \cdots, \mathbb{E}[C_n]\}$, where $\hat{C}$ is the random variable representing the cost of $\hat{P}$. 
\end{definition}

Given a query $Q$, the plan performance gap between the achievable-best model $\hat{\mathcal{O}}$ and the oracle model, which we refer to as the \emph{performance bias} of $\hat{\mathcal{O}}$ \wrt $Q$, can then be derived as follows.
\begin{equation*}~\label{eq:perf-bias}
    \mathsf{PerfBias}(\hat{\mathcal{O}}, Q) = \mathsf{PlanPerf}(Q, \hat{P}) - 1
\end{equation*}

Notice that the execution cost of each candidate plan $P_i \in \mathcal{P}_Q$ satisfies $c(P_i) \sim C_i$, so $\mathsf{PerfBias}(\mathcal{O}, Q)$ can also be treated as a random variable. Its expectation $\mathbb{E}[\mathsf{PerfBias}(\mathcal{O}, \mathcal{Q})]$ therefore serves as a representative measurement of the performance gap that quantifies the impact of unseen environments. 

\begin{theorem}~\label{thm:perf-bias} 
    Let $C_{\min}$ be the random variable representing the minimum cost among all candidate plans in $\mathcal{P}_Q \setminus \{\hat{P}\}$. 
    We have that 
    $\mathbb{E}[\mathsf{PerfBias}(\mathcal{O}, Q)] = \frac{\int_0^\infty zf_x(z) dz}{\min_{P_i \in \mathcal{P}_Q}C_i}$, where 
    \begin{align*}f_x(z) = 
        \begin{cases} 
            \int_{-\infty}^\infty f_{\hat{C}}(x)f_{C_{\min}}(x-z)dx & \text{if~} C_{\min}\leq \hat{C} \\ 
            0 & \text{otherwise}  
        \end{cases}, 
    \end{align*} 
    and $f_{C}(\cdot)$ denote the probability density function for $C \in \{C_{\min}, \hat{C}\}$. 
\end{theorem}

Based on Theorem~\ref{thm:perf-bias}, we can compute $\mathbb{E}[\mathsf{PerfBias}(\mathcal{O}, Q)]$ once knowing the probability density function $f_{C_i}$ for each candidate plan $P_i$. This could be realized by either theoretical modeling or fitting costs from multiple executions. Notably, Theorem~\ref{thm:perf-bias} applies to any arbitrary distribution of plan costs. 

From our experiments with \MC's production queries, we observe a performance bias of approximately $10\%$. This finding highlights that learned optimizers struggle to make optimal decisions among candidate plans consistently due to the influence of unseen environmental factors, and there exists a nontrivial upper bound of $1 + \mathbb{E}[\mathsf{PerfBias}(\mathcal{O}, Q)]$ on the best achievable performance of an optimizer (including \sys), even with continuous online refinement in practical deployment. 
}

\eat{
\stitle{Selection of Load Values for Cost Inference.}
Without knowing the actual system environments at execution time, it is natural to select plans that minimize the expected cost estimates over all possible environmental conditions under Bayes-optimal decision theory~\cite{}. This expectation can be approximated using Monte Carlo sampling that involves drawing multiple environmental settings, estimating the plan cost under each setting, and then computing the average. However, this approach faces 
}

\section{Project Selection}
\label{sec:project-selection}


Projects in \MC may be unsuitable for deploying learned query optimizers due to \emph{training challenges} and \emph{low deployment benefit}. To filter out unsuitable projects, we first sample a small, representative workload $\mathcal{Q}$ from each project's historical queries to capture its overall query patterns and workload characteristics. Then, we apply a rule-based filter (\textsf{Filter}) to exclude projects that are likely to pose training challenges and rank the remaining ones by their estimated benefits using a lightweight learned model (\textsf{Ranker}). We next detail \textsf{Filter} and \textsf{Ranker} below.






\stitle{Rule-based Project Filter.} 
\textsf{Filter} comprises a set of rules that assess whether a project's historical workload is suitable for training an effective cost predictor across multiple aspects, such as the volume of historical queries, properties of the involved tables, and the diversity of plan structures. We quantify these aspects with dedicated metrics and require each to meet a threshold that supports effective model training. The current \textsf{Filter} adopts three rules: i) $\mathsf{R1}$ and $\mathsf{R2}$ jointly require a sufficient number of historical queries for training, which has been shown to strongly affect the performance of the cost predictor (see our evaluation results in Section~\ref{subsec:training-size-effect}); and ii) $\mathsf{R3}$ favors projects with long-lived tables so that data distributions \sys learned from historical queries can be reused to predict costs for future queries over the same tables. Details of these rules and their metrics are provided in Appendix~\ref{app:more-metrics}~\cite{full-paper}. Our statistical analysis shows that 59.5\% of projects in \MC are filtered out by these rules, indicating that many projects are not yet ready to deploy learned optimizers and underscoring the importance of preemptively filtering out such projects.

\eat{
$\mathcal{R} =\{R_1, R_2, \cdots, R_n\}$, where each rule $R_i$ specifies a condition that a project must satisfy. \textsf{Filter} returns a set of projects that meet all rules in $\mathcal{R}$. $\mathcal{R}$ includes two types of rules:


\emph{(1) Rules describing custom preferences.} For instance, users may wish not to deploy learned optimizers for their projects, or experts in \MC may prioritize projects handling urgent queries intolerant to tolerate potential performance regressions from learned optimizers. We include rules to exclude such unpreferred projects. 


\emph{(2) Rules filtering out projects with training challenges.} Whether a project's historical workload is suitable for training an effective adaptive cost predictor can be assessed from various aspects, \eg, the volume of historical queries, features of the involved tables, and diversity of query plans (\wrt structures). By quantifying these aspects using well-designed metrics $f(\cdot)$, we construct rules requiring the metric for a sampled workload to meet a desired threshold $\theta_f$ that supports effective model training. For example, with the metric $\mathsf{n\_query}(\cdot)$ computing the average number of queries submitted per day, a rule $\mathsf{n\_query}(\cdot) \ge 2000$ ensures sufficient historical queries to train a robust cost predictor. These metrics can be flexibly designed based on the desired properties of the training set, and we detail the ones used in \MC in Appendix~\ref{app:more-metrics}~\cite{}.


Importantly, applying these rules is very efficient, which requires only a single scan over query plans in the sampled workload. New metrics and rules can also be easily integrated as needed.
}

\stitle{Learned Project Ranker.} Given a query $Q$, let $P_{d}$ denote the default plan produced by \MC's native query optimizer $M_{d}$. As shown in Section~\ref{sec:per-bound}, the expected deviance $\mathbb{E}[D_{E}(M_{d})]$, abbreviated as $D(M_{d})$ below, measures the performance gap between $M_{d}$ and the oracle query optimizer. Because a larger $D(M_{d})$ indicates poorer performance of $M_{d}$ and thus potentially more benefits from replacing $M_{d}$ with a learned optimizer, we naturally use $D(M_{d})$ as a proxy for the improvement space of deploying a learned optimizer. For a project with sampled workload $\mathcal{Q}$, we deploy \sys only if $D(M_{d})$ is large for a sufficient fraction of queries in $\mathcal{Q}$. 
Since directly computing $D(M_{d})$ is time-consuming and does not scale to the sampled workload for each \MC's project, we train a learned model \textsf{Ranker} to estimate $D(M_{d})$ for a given query $Q$. 



\eat{
and $\mathcal{P}_Q = \{P_1, P_2, \cdots, P_n\}$ be the set of candidate plans generated by the plan explorer\footnote{Without loss of generality, we always have $P_{d} \in \mathcal{P}_Q$ as the default plan could be generated without tuning any flag in plan exploration.}. 

any (learned) optimizer $M$ at best select the candidate plan $P$ with the minimum cost among all plans in $\mathcal{P}_Q$, \ie, $c(P) = \min_{P_i \in \mathcal{P}_Q} c(P_i)$. Therefore, the relative cost gap between $P_\mathsf{def}$ and $P$ could reflect the intrinsic upper bound of the benefits brought by deploying $M$ on query $Q$, which we denote as the improve space of $Q$, \ie, $\mathsf{IS}(Q) = c(P_\mathsf{def})/c(P)$. Notice that the plan exploration strategy adopted by the plan explorer determines how candidate plans are generated and directly influences $\mathsf{IS}(Q)$.


%

Given a project $M_i$, since the sampled workload $\mathcal{Q}_i$ can adequately represent the workload characteristics of $M_i$, the potential benefit of deploying a learned optimizer on $M_i$ can then be evaluated by averaging $\mathsf{IS}(Q)$ for all queries $Q \in \mathcal{Q}_i$. Although $\mathsf{IS}(Q)$ can be directly computed using the actual costs obtained by executing the default plan $P_\mathsf{def}$ and candidate plans in $\mathcal{P}_Q$, this approach is both time- and resource-consuming and cannot scale to the thousands of projects in \MC. To address this issue, we employ \textsf{Ranker} to estimate $\mathsf{IS}(Q)$ for every query $Q \in \mathcal{Q}_i$ efficiently. 
}

The key insight behind \textsf{Ranker} is that, \emph{for a query $Q$, the observable properties of its default plan $P_{d}$ reveal optimization opportunities from \sys's plan exploration strategies and can thereby reflect the improvement space (\ie, $D(M_{d})$) of $Q$}. For example, suppose $P_d$ contains multiple nested joining operators while exhibiting an unusually high execution cost. In that case, it often indicates a poor join order and suggests that steering $M_d$ to produce plans with alternative join orders could yield large performance gains. Building on this idea, \textsf{Ranker} learns to estimate $D(M_{d})$ for query $Q$ by the characteristics of $P_d$. These characteristics are generic to plans and require no project-specific identifiers to encode tables or columns, so \textsf{Ranker} can be trained on pairs $(P_d, D(M_{d}))$ derived from multiple projects and then applied to new projects. For greater scalability, \textsf{Ranker} employs an efficient plan vectorization approach that encodes the structure, input features, and execution cost of $P_d$ and a lightweight XGBoost model~\cite{xgboost} to make predictions. Due to space limitations, we defer the vectorization details in Appendix~\ref{app:feat-for-ranker}~\cite{full-paper}.

Using \textsf{Ranker}, we estimate the improvement space $D(M_{d})$ for each query in the sampled workload $\mathcal{Q}$ of every project that passes \textsf{Filter}. Projects are then ranked by their average estimated $D(M_{d})$ across $\mathcal{Q}$. Determining a reliable threshold to select projects with high estimated benefits is non-trivial due to the varying scales of $D(M_d)$ across projects and the inherent noise in \textsf{Ranker}’s estimates. Instead, \sys is deployed on the top-$N$ projects (where $N$ is a parameter specified by \MC's experts) for better robustness. Notably, as more projects are deployed with \sys and evaluated, new $(P_d, D(M_d))$ pairs can be derived from the evaluation data and periodically used to update \textsf{Ranker}, which continuously improves its prediction accuracy and effectiveness over time.

\section{Experimental Evaluation}
\label{sec:exp}


This section presents a thorough empirical evaluation of \sys. We begin in Section~\ref{subsec:qo-setup} by describing the experimental setup. Section~\ref{subsec:perf-of-sys-qo} then reports key results that showcase the effectiveness of \sys and the benefits of its design choices. Finally, Section~\ref{subsec:mc-benefit} addresses the most crucial question concerning \sys's practical contribution: \emph{How broadly can \sys be applied across \MC's projects, and what performance gains can be expected in practice?}

\vspace{-0.5em}
\subsection{Experimental Setup}
\label{subsec:qo-setup}


We evaluate \sys's query optimizer using production workloads from Alibaba Group’s internal projects in \MC. The experimental settings are detailed below. 



\stitle{Evaluation Projects.} 
We construct our experimental dataset by combining the project selection pipeline (Section~\ref{sec:project-selection}) with random sampling. First, we apply our rule-based filter to workloads consisting of queries collected over three consecutive days for each \MC's project, yielding a filtered set that covers approximately $40.5\%$ of all projects. To make the evaluation tractable while remaining statistically representative, we randomly select $30$ projects from this filtered set for our main experiments. Since these sampled projects can show varying performance gains and to understand what \sys can achieve, we calculate the exact improvement space $D(M_d)$ for each sampled project (defined in Section~\ref{sec:project-selection}, with computational details provided in Appendix~\ref{app:proof}~\cite{full-paper}) and select the five with the highest average $D(M_d)$. These projects, anonymized as \textsf{Projects 1--5} due to privacy constraints, are used in our experiments in Sections~\ref{subsec:e2e-evaluation} to~\ref{subsec:load-effect} to evaluate the effectiveness and efficiency of \sys's query optimization techniques. To be conservative, we regard results on these five projects as an upper bound for the sampled set and treat the remaining 25 as low‑benefit cases. Section~\ref{subsec:mc-benefit} uses this convention to estimate the overall benefit of \sys across \MC's projects. In practice, computing $D(M_d)$ exactly is infeasible because it requires enumerating and executing all candidate plans. We therefore use \textsf{Ranker} as a practical surrogate to estimate the improvement space. Its accuracy and robustness are validated in Section~\ref{subsec:eval-ranker}. Among the 5 evaluation projects, \textsf{Ranker} identifies \textsf{Projects~1, 2,} and \textsf{5}.


\eat{We systematically collect the experimental datasets by combining our project selection pipeline (see Section~\ref{sec:project-selection}) with random sampling. Specifically, we apply our rule-based filter with stringent thresholds to workloads (queries from $3$ days) collected for each project and narrow the projects from several thousands down to under $100$ candidates (see details in Section~\ref{subsec:perf-of-sprj-sel}). Notably, this process is lightweight, taking approximately two hours on our testing cluster. From these projects, we randomly select $30$ projects to compute the exact improvement space $D(M_d)$ (defined in Section~\ref{sec:project-selection}, the computation details are given in in Appendix~\ref{app:proof}~\cite{}), from which we further choose the top-$5$ projects with the largest average improvement space as representatives for evaluating \sys's query optimizer. We anonymize these projects as \textsf{Project~1} to~\textsf{5} due to privacy policy. 
}







\eat{
We evaluate the performance of \emph{project ranker} and \emph{plan selection} components using two types of datasets: rank datasets and plan selection datasets. All datasets are collected through a systematic dataset selection pipeline in \MC, which we detail below.

We curate a collection of 28 datasets from the \MC production workload through a multi-stage process involving statistical analysis, filtering, and random sampling. 
First, we compute statistics for all production datasets based on the most recent three days of data, which take approximately two hours of computation on \MC. 
We then apply a series of filtering rules (detailed in Section\ref{subsec:dataset_filter}) to reduce these datasets to under 100 candidates. 
From these candidates, we randomly select 30 datasets and collect their unique queries from the past 30 days.

For queries from the first 25 days, we extract their execution plans and execution cost from historical logs. 
For each query in the last 5 days, we employ the plan generation strategy to generate multiple candidate plans, selecting the top 5 plans with the lowest estimated costs for execution. 
We run each candidate plan multiple times and record these as tuples $<Q,(p_1,\hat{c_1}),\dots,(p_5,\hat{c_5})>$, where $Q$ represents the query, and $(p_i,\hat{c_i})$ denotes the i-th plan and its corresponding average cost. Two datasets are excluded due to permission restrictions, resulting in a final set of 28 datasets.

\textbf{Plan Selection Datasets} 
To demonstrate the maximum potential benefits of our plan selection algorithm, we select five projects with the largest optimization potential (i.e., IS) as our evaluation datasets. For these datasets, we use query plans and execution cost from queries of the first 25 days for model training, and evaluate the plan selection effectiveness by assessing the performance of selected plans against multiple candidate plans using queries in the last 5 days. The statistics of five selected datasets (called Project1, 2, 3, 4 and 5, respectively) are shown in Table \ref{tbl:dataset}.

}

\stitle{Baselines.} 
We compare \sys with \MC's native optimizer and learned optimizer variants that replace \sys's TCN-based cost predictor with other representative cost models from prior work. Specifically, we consider~\cite{queryformer}, ~\cite{zeroshot}, and \cite{perfguard} that are built on Transformer~\cite{transformer}, Graph Convolutional Network (GCN)~\cite{GCN}, and XGBoost~\cite{xgboost}, respectively. Transformer and GCN excel at capturing complex plan structure and have shown strong performance in plan cost prediction~\cite{queryformer,glo,loger,foss,stage,perfguard,zeroshot}. XGBoost~\cite{xgboost} has recently been applied to estimate plan costs for its simplicity, efficiency, and performance comparable to deep networks~\cite{perfguard,stage}. In the remainder of this section, we refer to \MC's native optimizer as \MC, the default \sys with our TCN-based cost predictor as \sys, and other learned optimizer baselines by their underlying models, namely \textsf{Transformer}, \textsf{GCN}, and \textsf{XGBoost}.

For all learned optimizer baselines, we reuse \sys's plan explorer and implement the cost predictors using public source code for \textsf{Transformer} and \textsf{GCN} from~\cite{queryformer} and~\cite{zeroshot}, and the standard library~\cite{xgboost-lib} for \textsf{XGBoost} following~\cite{perfguard}. To adapt these baselines to \MC and ensure a fair comparison, we adjust their plan vectorization by: i) removing features that rely on statistics unavailable in \MC; and ii) adding \sys's additional feature set using their native feature encoding approaches. 


\begin{table}[t]
\renewcommand{\arraystretch}{1}
    \centering
    \caption{{Statistics of projects used in the experiments.}}
    \vspace{-0.15in}
    \scalebox{0.76}
    {{
    \begin{tabular}{|c|c|c|c|c|c|} 
    \hline
    \rowcolor{mygrey}
    \sf \bf Datasets & \sf \citylife & \sf \gd & \sf \fin & \sf \ads  & \sf \mkt \\ 
    \hline
    \sf \bf \# of tables & 253 & 125& 348&209 &229 \\ 
    \sf \bf \# of columns & 3,782& 714& 7,382 &3,794 & 3,661 \\ 
    \sf \bf \# of training queries &10,000&10,000&10,000&4,187 & 8,701 \\
    \sf \bf \# of test queries &184&101&177&573 & 126 \\ 
    \sf \bf Average CPU cost &11,501&1,824,978&3,265&1,354 & 103,040 \\
    \hline
    \end{tabular}
    }}
    \label{tbl:dataset}
    \vspace{-0.8em}
\end{table}

\eat{
We evaluate two key components of \sys: the \emph{Project Ranker} against a random selection baseline, and the \emph{Plan Selection} against state-of-the-art learned query optimizers with diverse architectural designs. They are presented below. 

\textbf{Transformer-Based Model} excels in capturing long-sequence information, scaling to large datasets, and enabling efficient parallel training. These characteristics make it well-suited for predicting the costs of query plans. We implement a transformer-based model following the design presented in \cite{queryformer} and utilize the open source code of the authors. 

\textbf{Graph Convolutional Network (GCN)-Based Model} has been proven effective for graph representation learning \cite{wu2020comprehensive,chen2020graph}. It is extensively applied to understand query plans and predict their execution cost \cite{stage,perfguard,zeroshot}. We implement a GCN-based model following the design in \cite{zeroshot} and employ the open source code of the authors.

\textbf{XGBoost:} In contrast to complex deep learning models, tree-based models offer high interpretability, minimal resource consumption, and comparable performance for predicting the costs \cite{perfguard,stage}. We encode query plan structures into the XGBoost model following the approach described in \cite{perfguard}.

These baselines depend critically on database statistics (such as histograms and NDVs) for query plan representation. However, these statistics are unavailable in the \MC scenario. 
To address this limitation while ensuring fair comparison, we modified these baselines by: (1) removing the unusable statistics-based features, and (2) incorporating \sys's additional feature set while maintaining their native encoding approaches. Our evaluation presents results from these enhanced implementations, as they consistently outperform their original versions.
}

\stitle{Model Training and Evaluation.} 
\sys and all learned optimizer baselines are trained separately for each evaluation project. Specifically, given a project, we collect deduplicated queries over 30 consecutive days from its historical query repository and use those from the first 25 days for training, while the remaining 5 days for testing. For fair comparisons, we limit the maximum number of training queries to 10,000. For \textsf{Projects~4} and \textsf{5}, which have fewer than 10,000 queries for training, we use all their available queries. 
Table~\ref{tbl:dataset} summarizes the statistics of these workloads. 


Importantly, without an adaptive training process to align feature distributions of training (default plans) and test query plans (candidate plans), all learned optimizer baselines suffer from distribution shifts. This creates a critical issue where conventional hyperparameter tuning on validation splits of the training data becomes ineffective, because validation performance no longer reliably reflects test performance. To maintain methodological parity, we avoid explicit hyperparameter tuning across all models. For \textsf{Transformer} and \textsf{GCN}, we adopt the configurations from their original papers~\cite{queryformer,zeroshot}, and for \textsf{XGBoost}, we use the standard library defaults~\cite{xgboost-lib}. For our proposed \sys built on TCNs following \textsf{Bao}\cite{bao} and \textsf{Lero}\cite{lero}, we inherit their parameter settings and additionally adopt a standard optimization setup with an initial learning rate of 0.01 and an exponential decay factor of 0.99 per epoch.


At evaluation time, the plan explorer generates multiple candidate plans for each test query and, without loss of generality, always includes the default plan (produced by \MC's native optimizer without using any flags or scaled estimated cardinality). Among these, learned optimizers select the candidate with the lowest estimated cost. This setup mirrors real-world deployment scenarios. To control evaluation overhead, we retain only the top-5 candidates for each test query based on \MC's rough cost estimates. Learned optimizers are evaluated based on the average end-to-end (E2E) CPU cost of their selected plans, while the performance of \MC's native optimizer is measured by the cost of default plans. To obtain reliable measurements, each candidate plan is executed multiple times, and the average cost is used. 



\eat{
Traditional hyper-parameter tuning, which relies on a validation set split from training data, faces unique challenges in our context. The fundamental issue stems from the mismatch between training/validation set (where candidate plans are absent) and test set (where multiple candidate plans exist). This disparity limits the effectiveness of conventional validation-based tuning in predicting real-world model performance. 

Therefore, we opt to use default hyper-parameters across all models.
Specifically, we configure the models as follows. 
For \textsf{Transformer-Based} and \textsf{GCN-Based} models, we adopt the default hyper-parameters as specified in their original papers. 
For \textsf{XGBoost}, we utilize the default configuration provided by its open-source implementation.
For our \sys, which extends the architecture proposed in \cite{bao}, we retain their original parameters while adopting a classic optimization configuration: an initial learning rate of 0.01 with exponential decay at a factor of 0.99 per epoch.

For each query in the last 5 days, we employ the plan generation strategy to generate multiple candidate plans, selecting the top 5 plans with the lowest estimated costs for execution. 
We run each candidate plan multiple times and record these as tuples $<Q,(p_1,\hat{c_1}),\dots,(p_5,\hat{c_5})>$, where $Q$ represents the query, and $(p_i,\hat{c_i})$ denotes the i-th plan and its corresponding average cost. 

\sstitle{Average E2E execution cost}: This metric evaluates the effectiveness of the plan
selection component. For a query set $\mathcal{Q}$, let $\hat{c_i}$ denote the average cost of multiple executions of the selected plan for the i-th query. The average E2E execution cost across all queries is defined as $(1/{|\mathcal{Q}|})\sum_{i}^{|\mathcal{Q}|} \hat{c_i}$.
}

\stitle{Environments.} 
All queries were executed in \MC's flighting environment (see Section~\ref{sec:framework}) on an Alibaba's internal distributed cluster spanning over 10,000 machines. Model training and inference ran on a Linux server with an Intel Xeon Platinum 8163 CPU (96 cores, 2.5 GHz), 768 GB DDR4 RAM, a 2 TB SSD, and 8 NVIDIA Tesla V100-SXM2 GPUs.

\eat{
\stitle{Evaluation Metrics}
We evaluate \emph{project ranker} and \emph{plan selection} components using multiple metrics.
For \emph{project ranker}, we employ the Recall and NDCG metric, while for \emph{plan selection}, we use average end-to-end (E2E) execution cost and average expected error. These metrics are defined as follows:

\sstitle{Average E2E execution cost}: This metric evaluates the effectiveness of the plan
selection component. For a query set $\mathcal{Q}$, let $\hat{c_i}$ denote the average cost of multiple executions of the selected plan for the i-th query. The average E2E execution cost across all queries is defined as $(1/{|\mathcal{Q}|})\sum_{i}^{|\mathcal{Q}|} \hat{c_i}$.
}

\vspace{-0.5em}
\subsection{Experimental Results}
\label{subsec:perf-of-sys-qo} 


Experiments in this section aim at answering the most central questions concerning \sys's query optimizer and project selector: 




$\bullet$ \textbf{\emph{Overall Performance and Overhead:}} How much improvement on query execution efficiency can \sys achieve over \MC's native optimizer and other state-of-the-art learned optimizers, and at what additional deployment and operational overhead? (Sections~\ref{subsec:e2e-evaluation} and~\ref{subsec:per-query-evaluation})

$\bullet$ \textbf{\emph{Cost Predictor Training:}} 
How much does adaptive training improve the cost predictor (Section~\ref{subsec:grl-evaluation}), and how does the volume of training data affect the end-to-end performance (Section~\ref{subsec:training-size-effect})? 

$\bullet$ \textbf{\emph{Plan Cost Inference:}} How effective is our plan cost inference method in comparison to alternative strategies? (Section~\ref{subsec:load-effect}) 

$\bullet$ \textbf{\emph{Project Selection:}} How well does \textsf{Ranker} perform in prioritizing projects with large improvement space? (Section~\ref{subsec:eval-ranker})


\subsubsection{\textbf{End-to-end Performance in \MC}}
\label{subsec:e2e-evaluation}



\begin{figure*}
\begin{minipage}[c]{0.65\textwidth}
    \centering
    \includegraphics[width=0.8\textwidth]{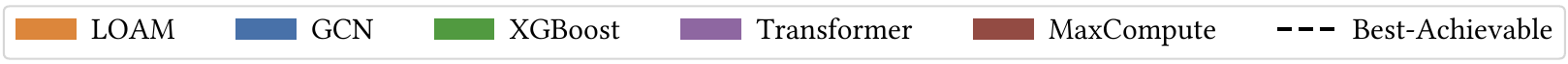} 
    
    \includegraphics[width=0.19\linewidth]{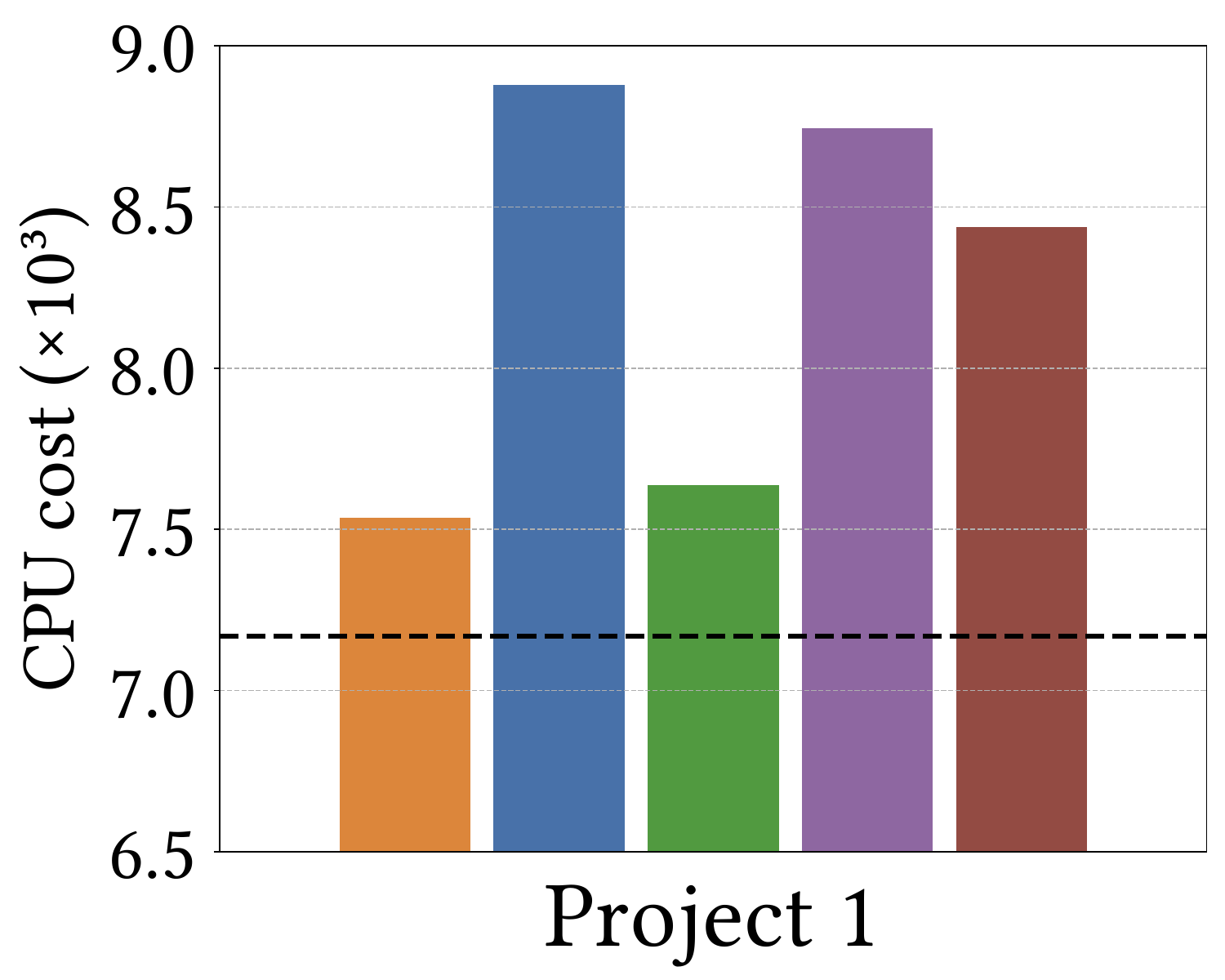}
    \includegraphics[width=0.19\linewidth]{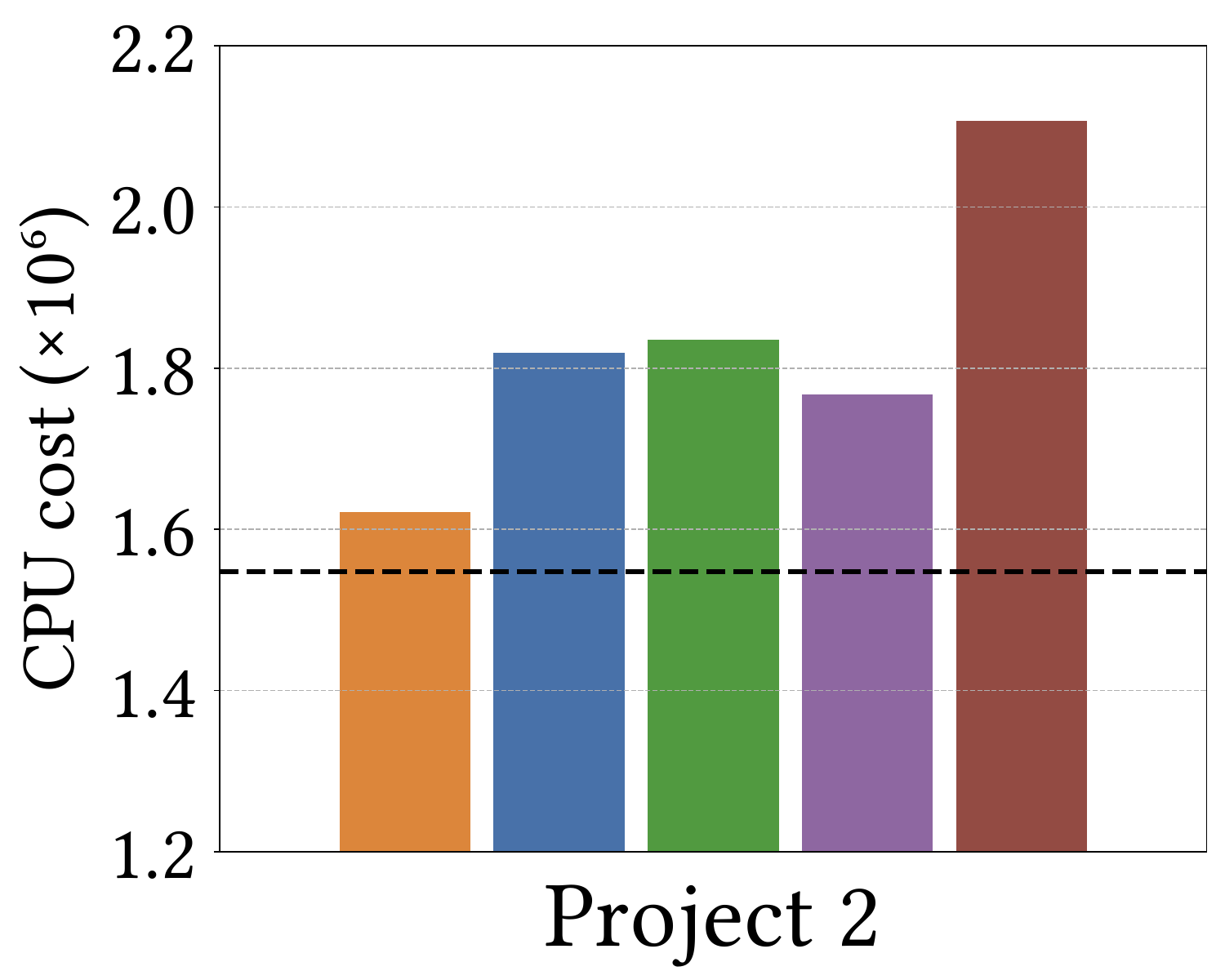}
    \includegraphics[width=0.19\linewidth]{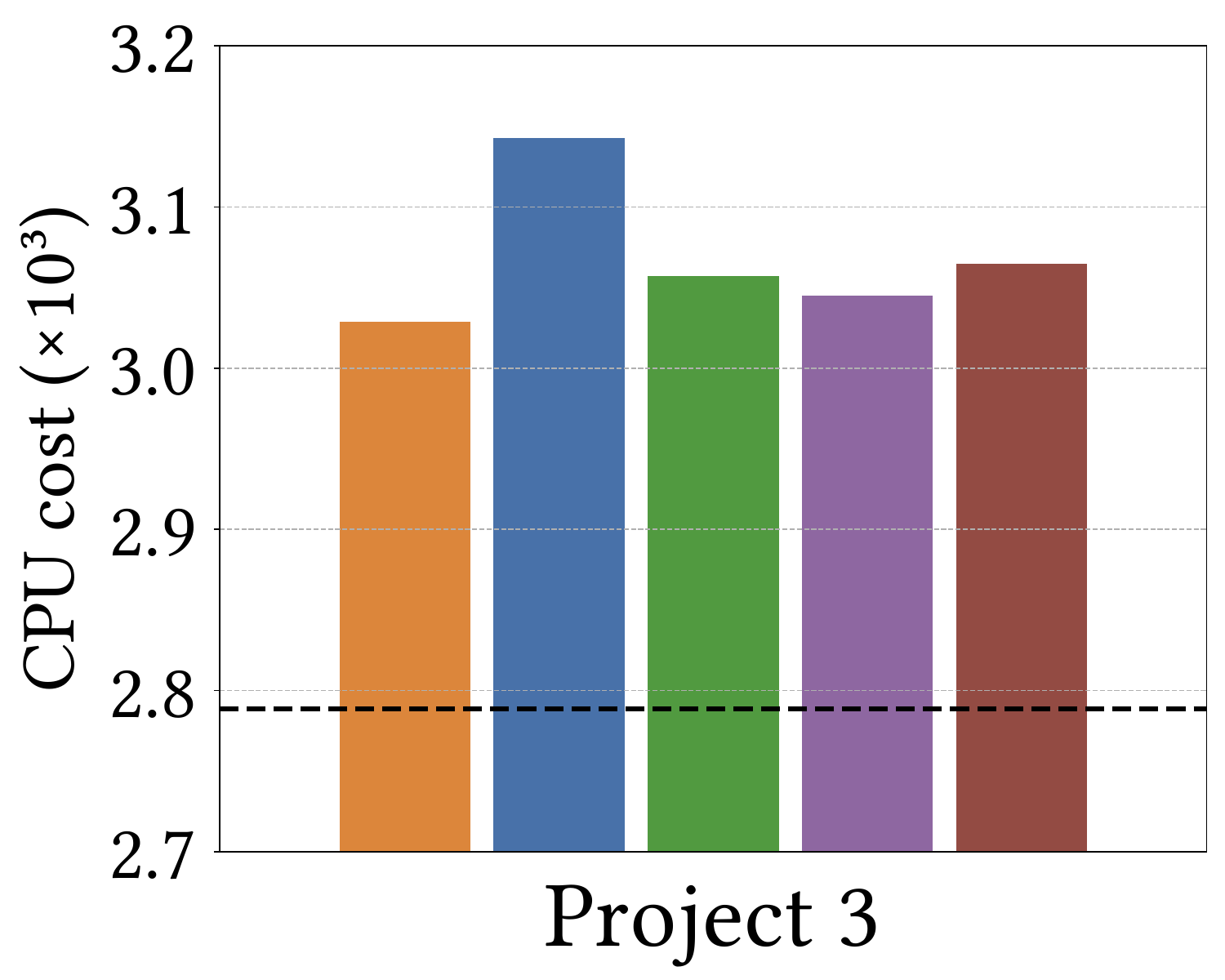}
    \includegraphics[width=0.19\linewidth]{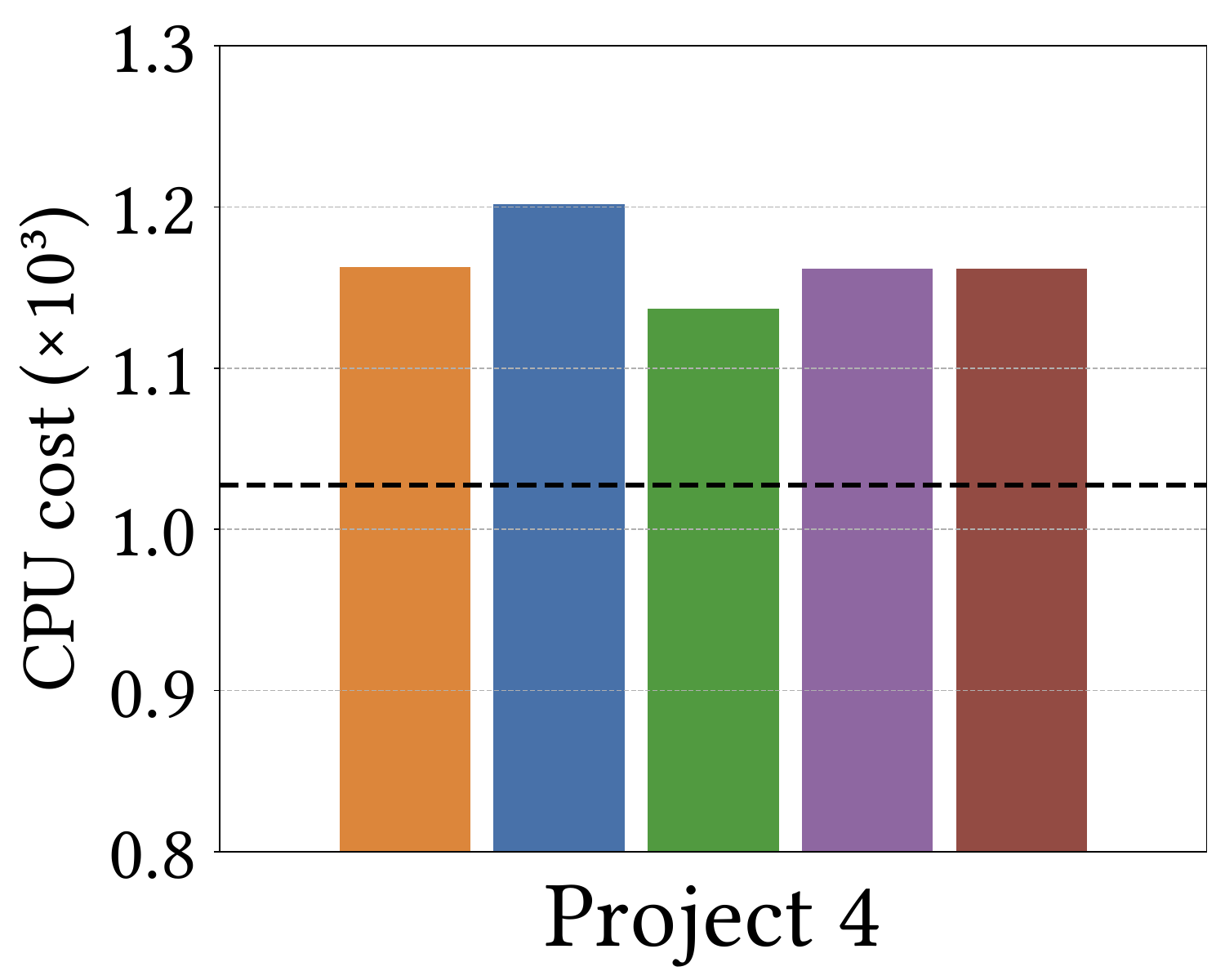}
    \includegraphics[width=0.19\linewidth]{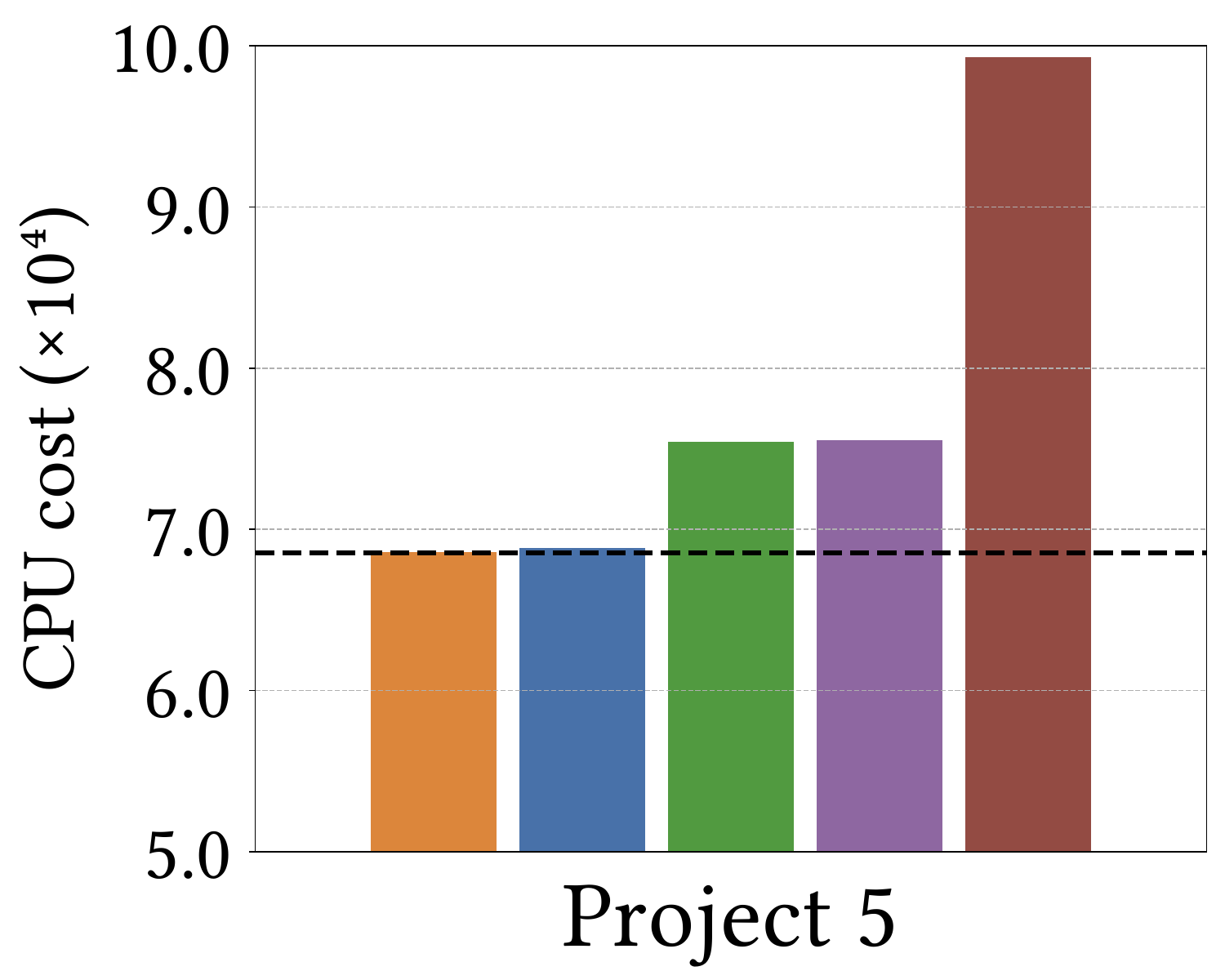}
\vspace{-1em}
\caption{Average CPU cost of learned query optimizers and \MC.}
\label{fig:e2e evaluation}
\vspace{0.5em}
    \includegraphics[width=0.19\linewidth]{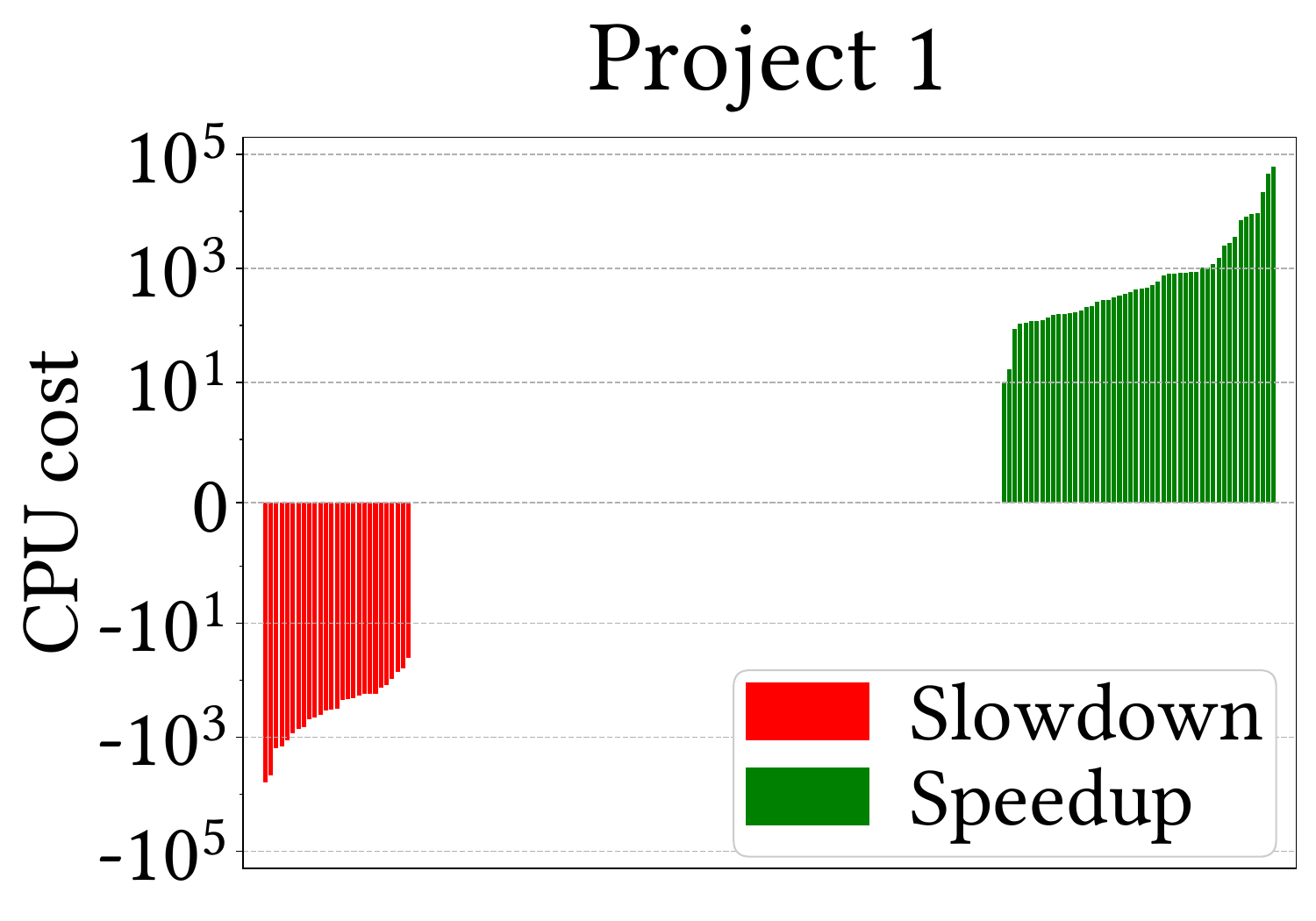}
    \includegraphics[width=0.19\linewidth]{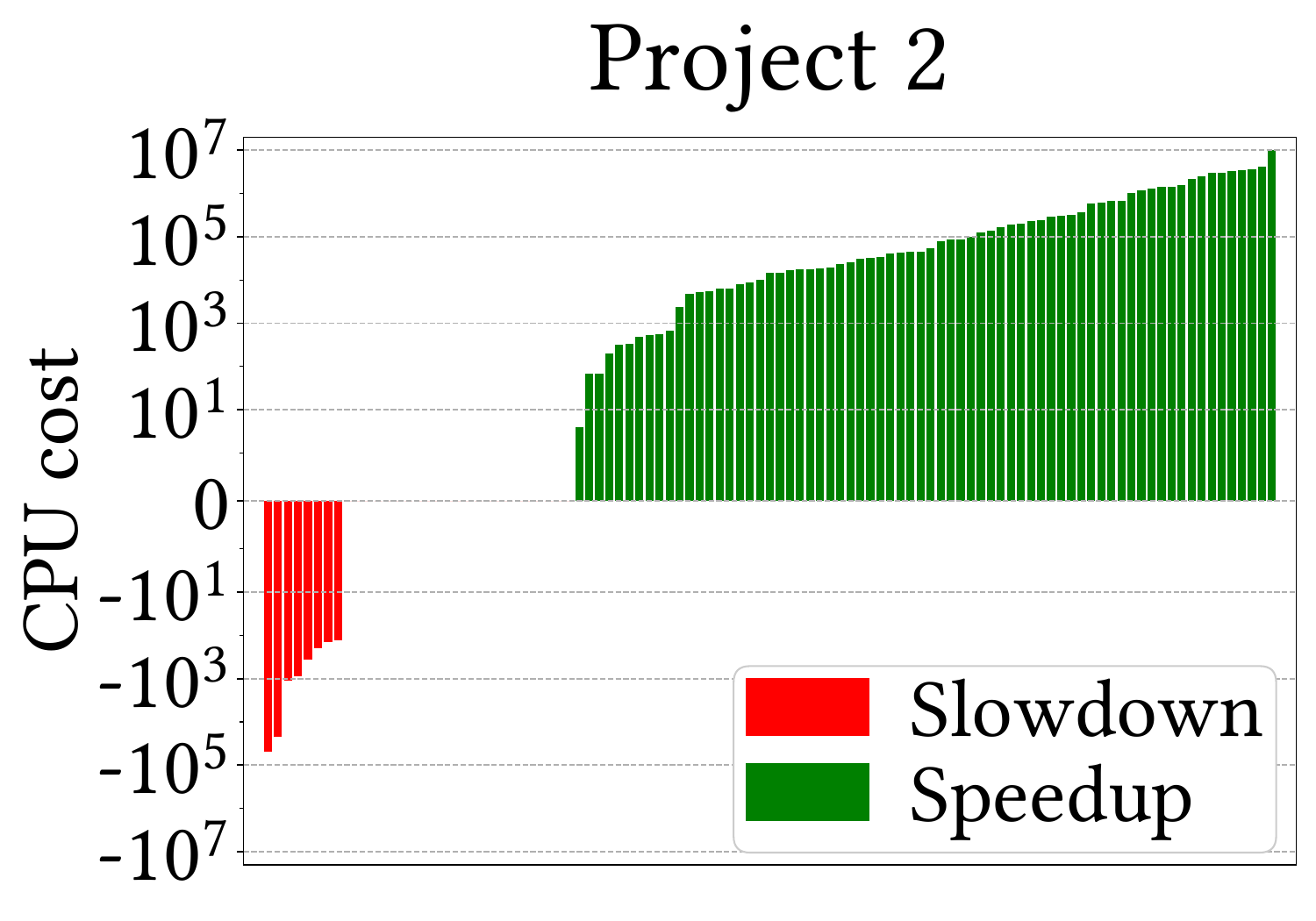}
    \includegraphics[width=0.19\linewidth]{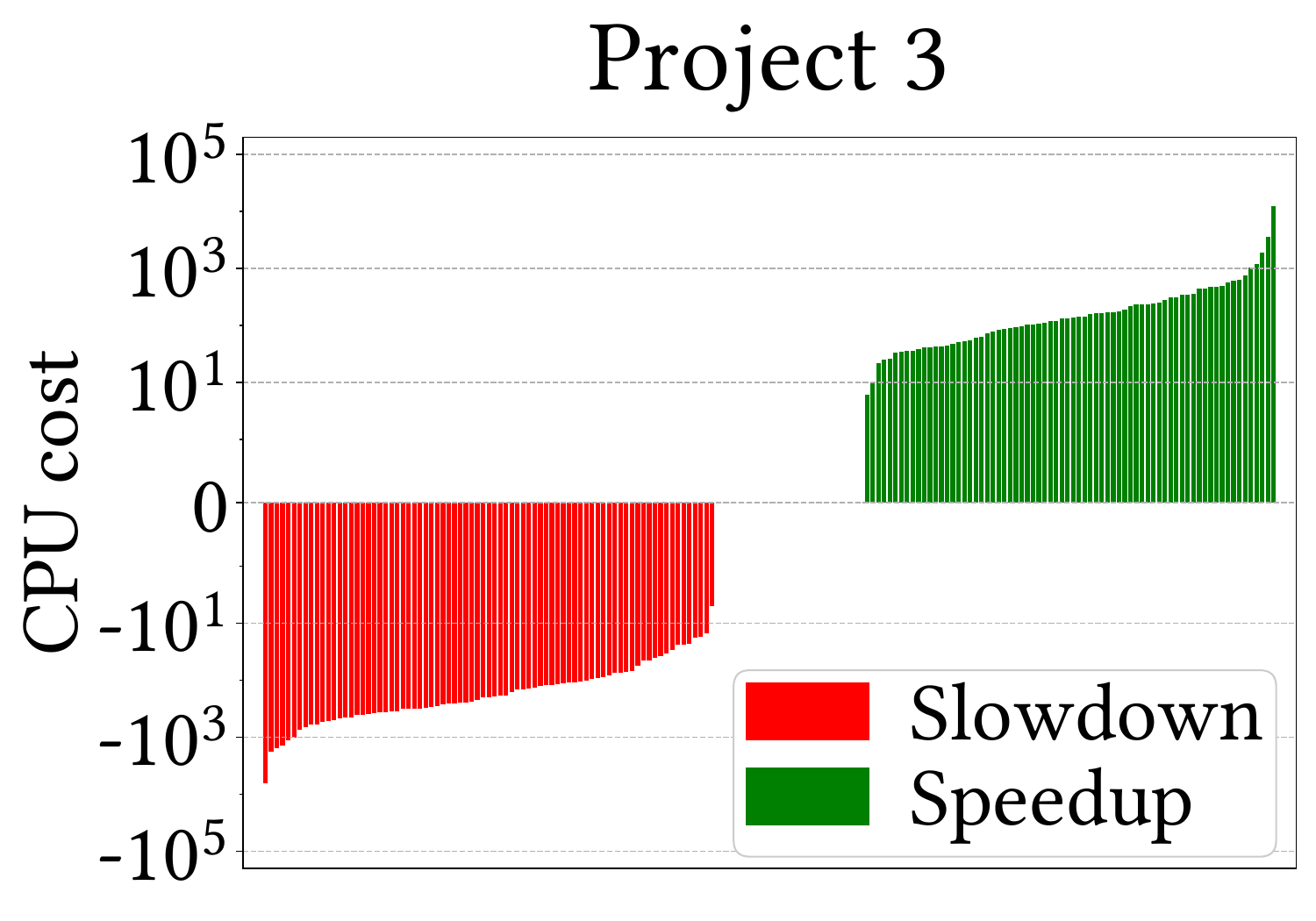}
    \includegraphics[width=0.19\linewidth]{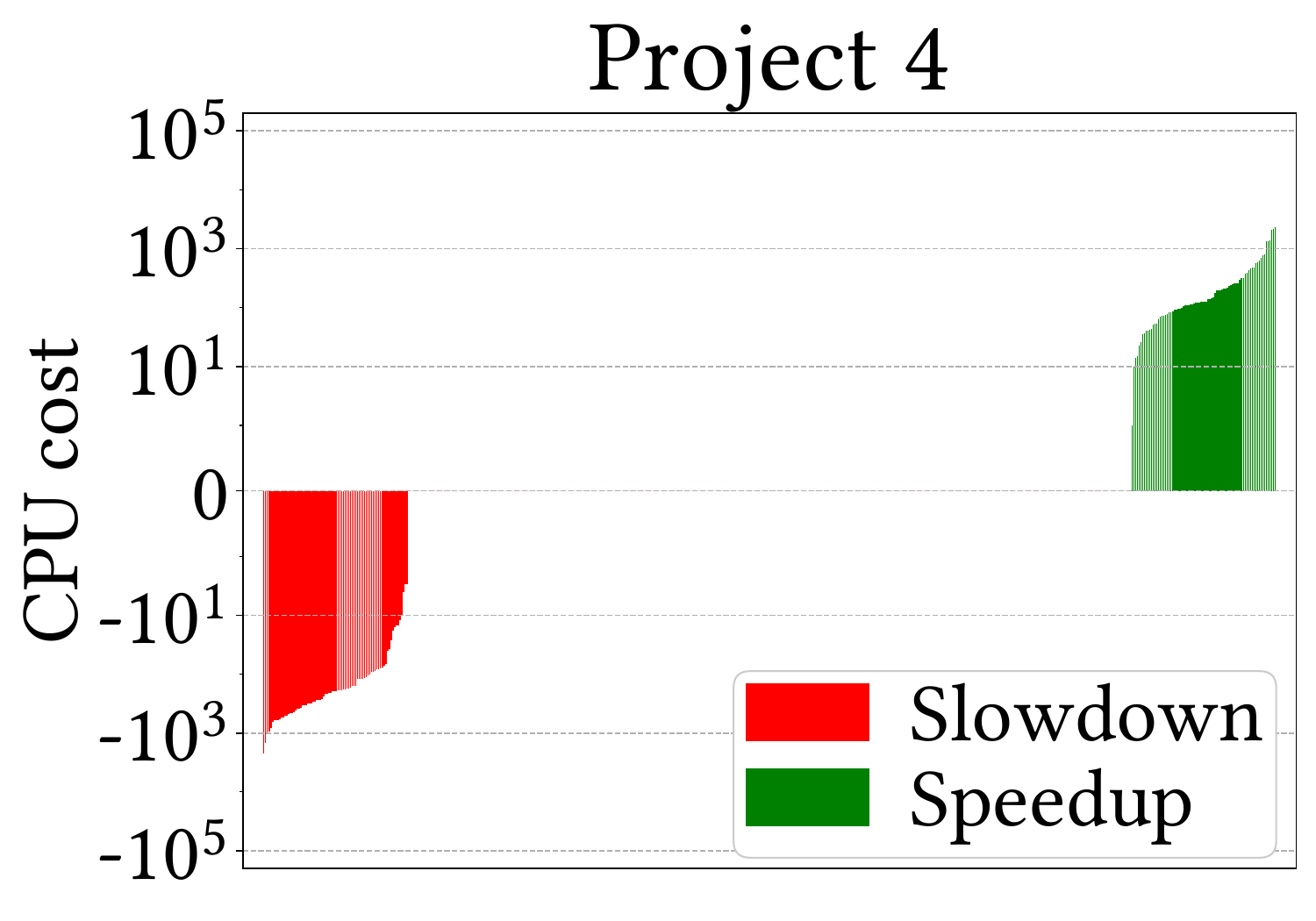}
    \includegraphics[width=0.19\linewidth]{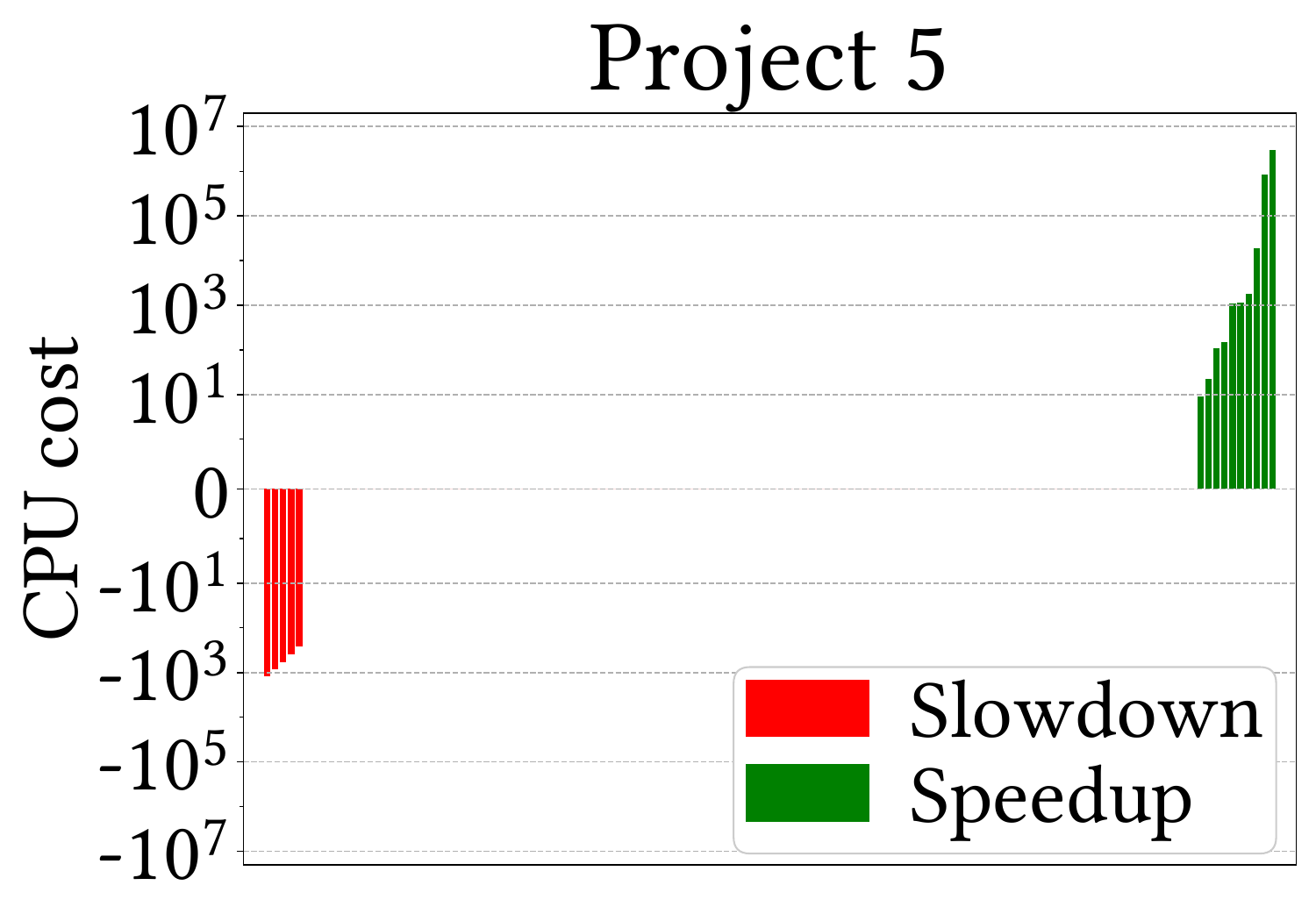}
\vspace{-1em}
\caption{Per-query execution cost of \sys in comparison with \MC.}
\label{fig:per-query}
\vspace{0.5em}
    \centering
    \includegraphics[width=0.45\textwidth]{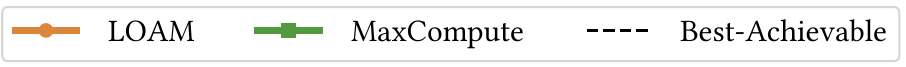} 
    
    \includegraphics[width=0.19\linewidth]{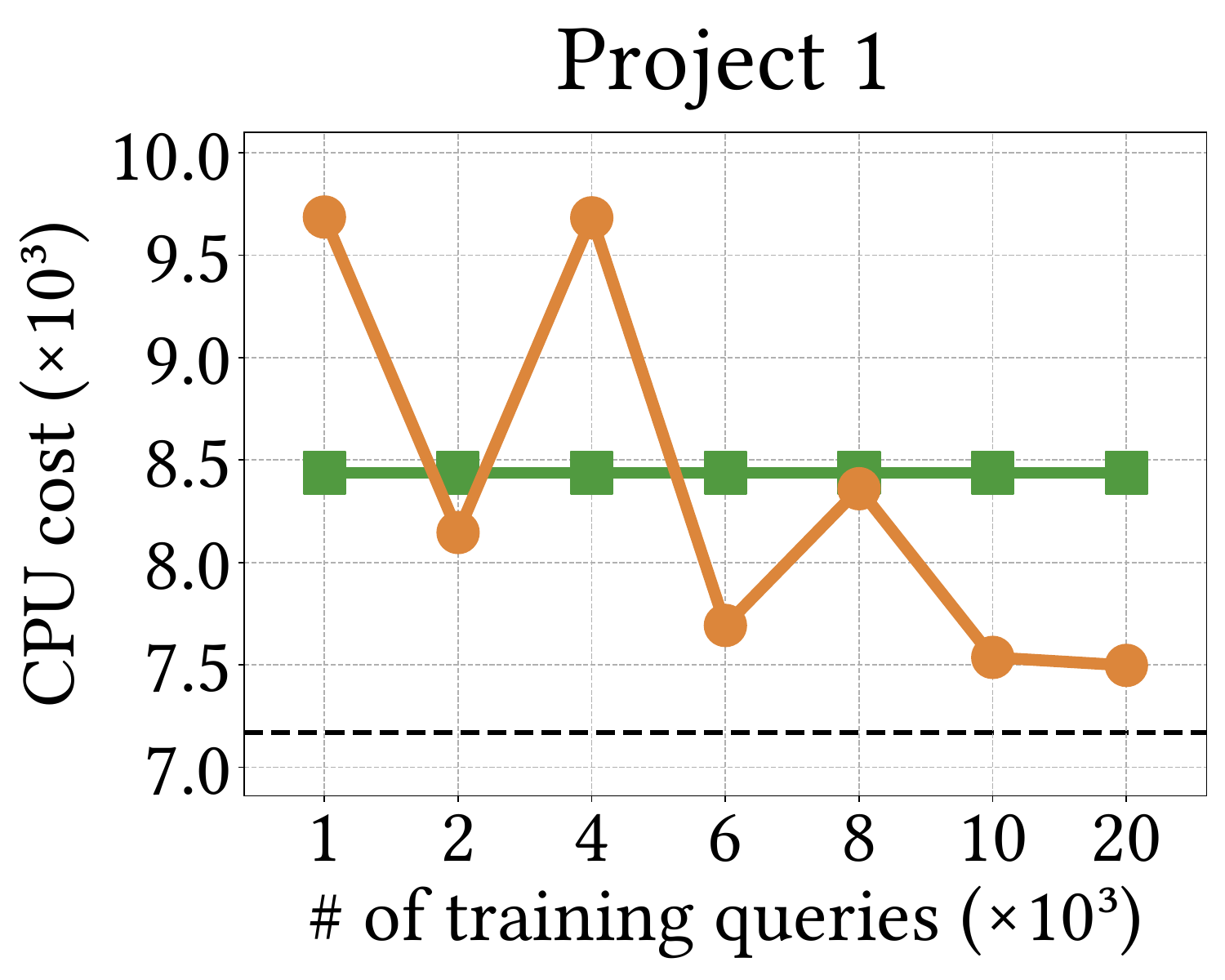}
    \includegraphics[width=0.19\linewidth]{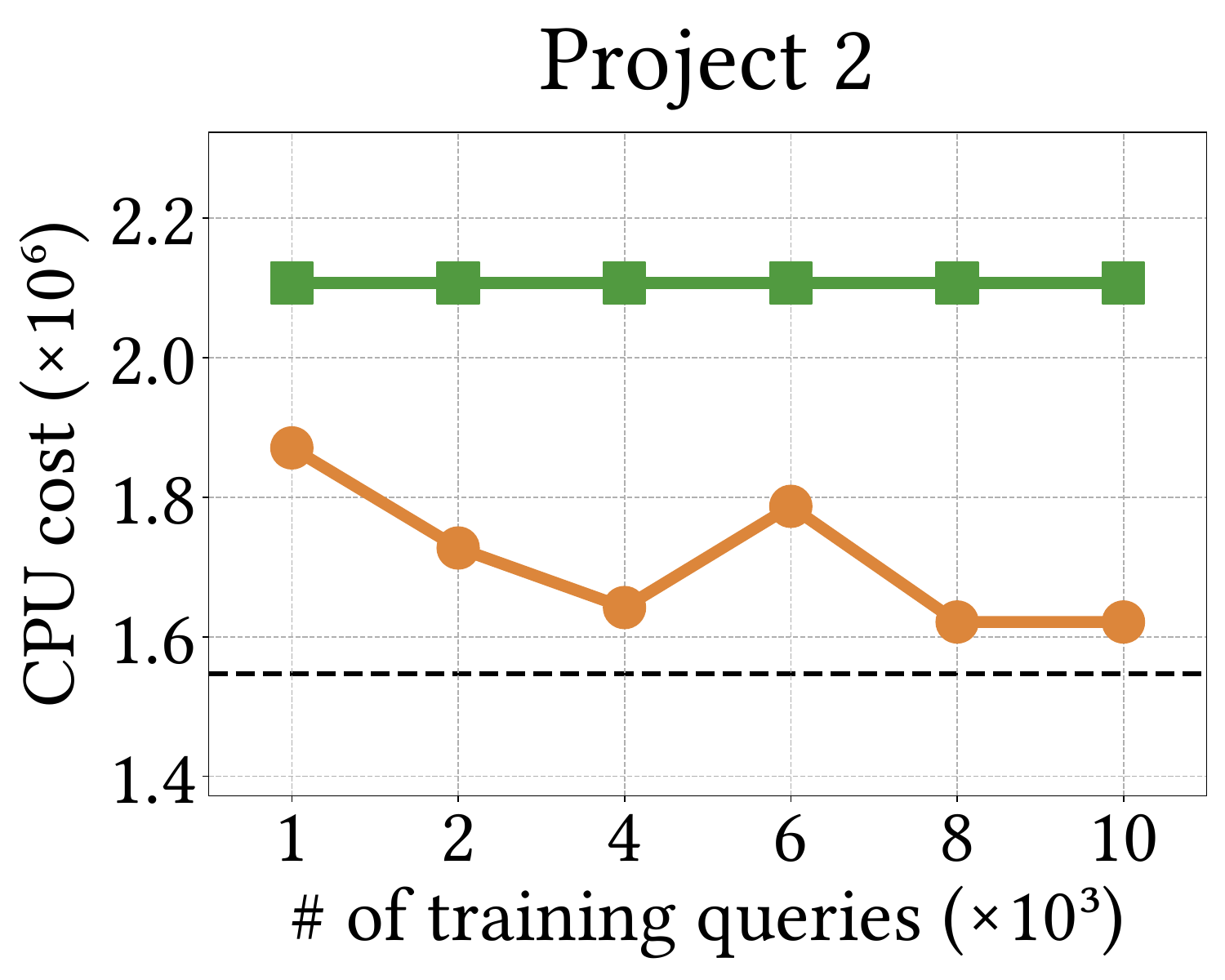}
    \includegraphics[width=0.19\linewidth]{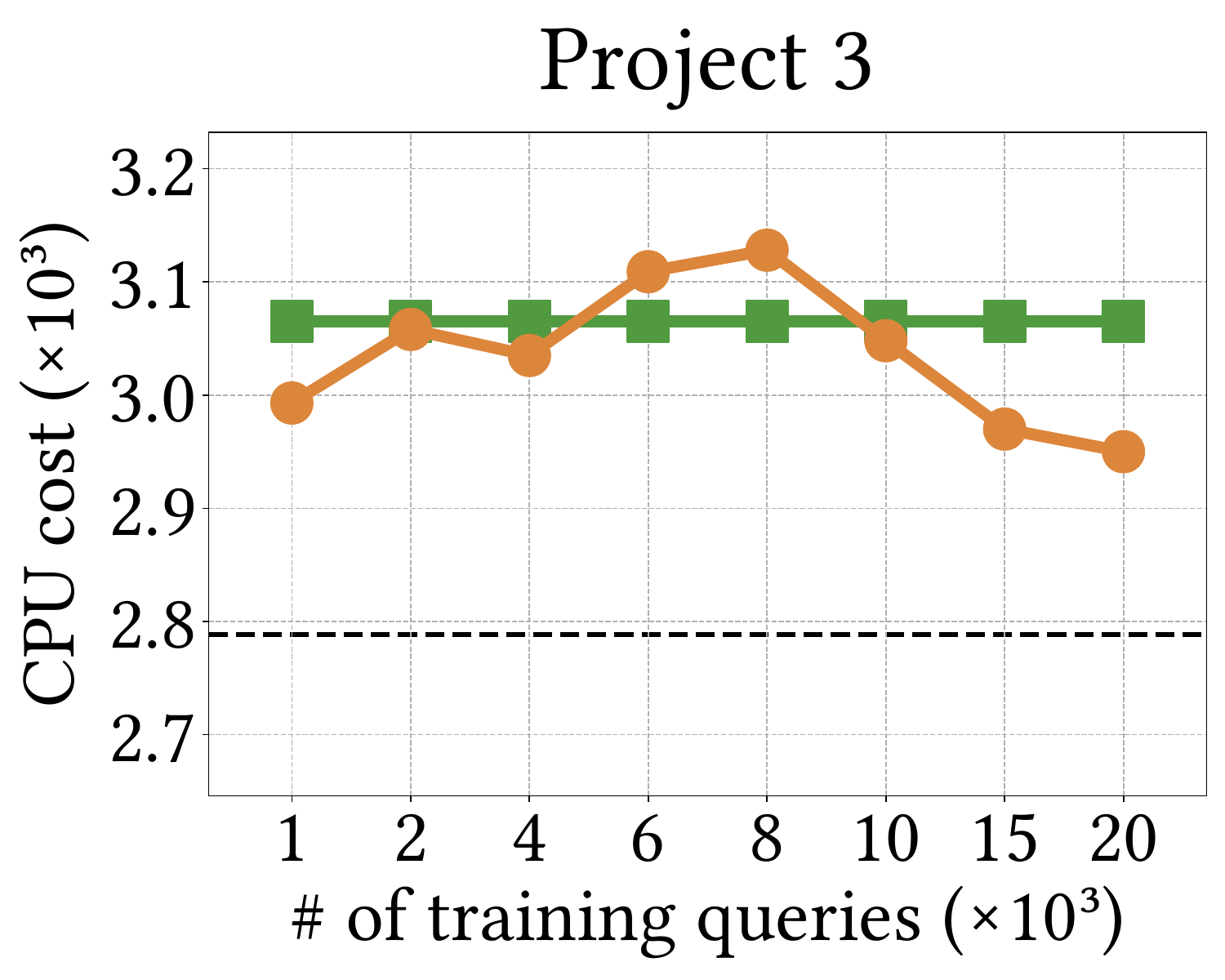}
    \includegraphics[width=0.19\linewidth]{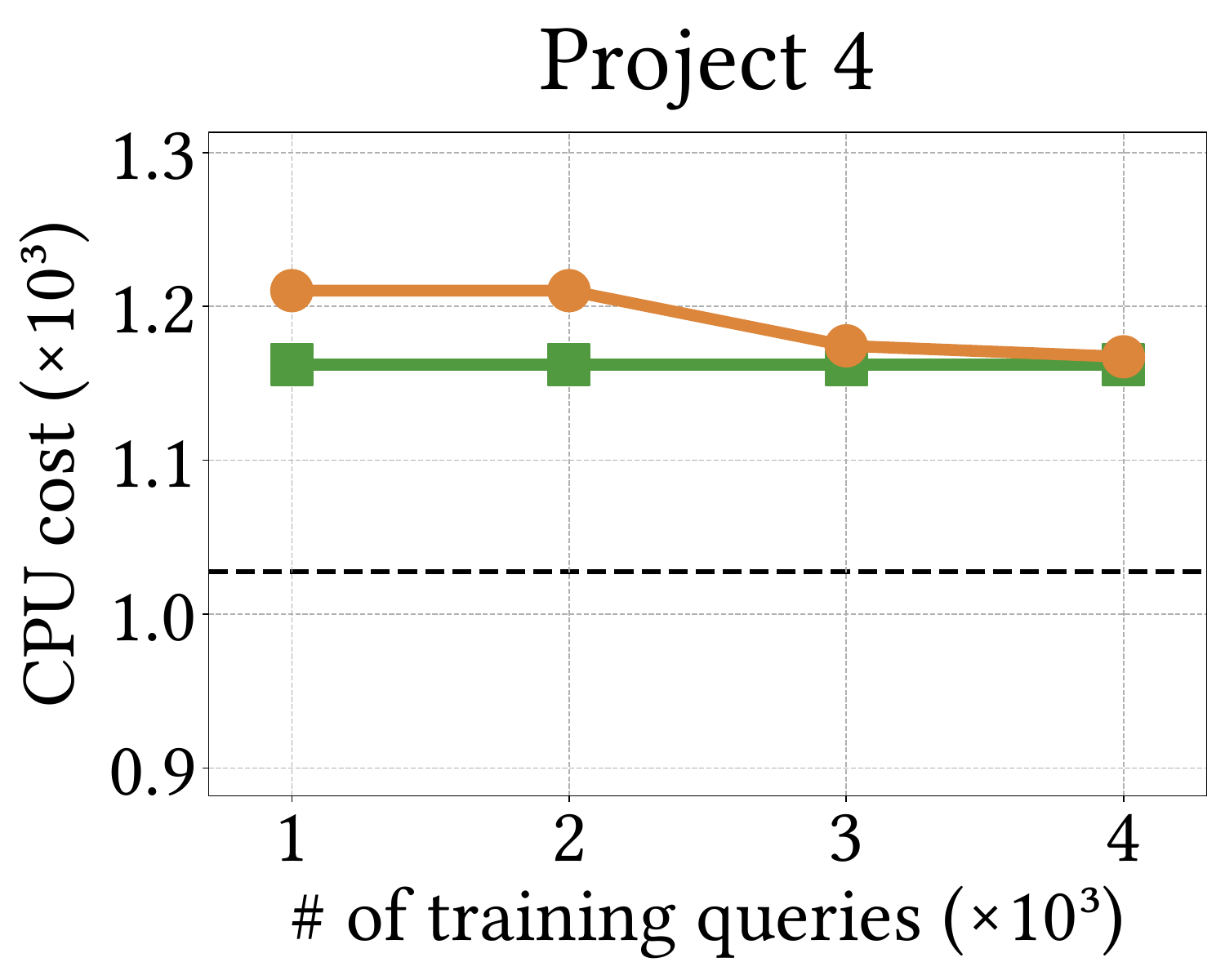}
    \includegraphics[width=0.19\linewidth]{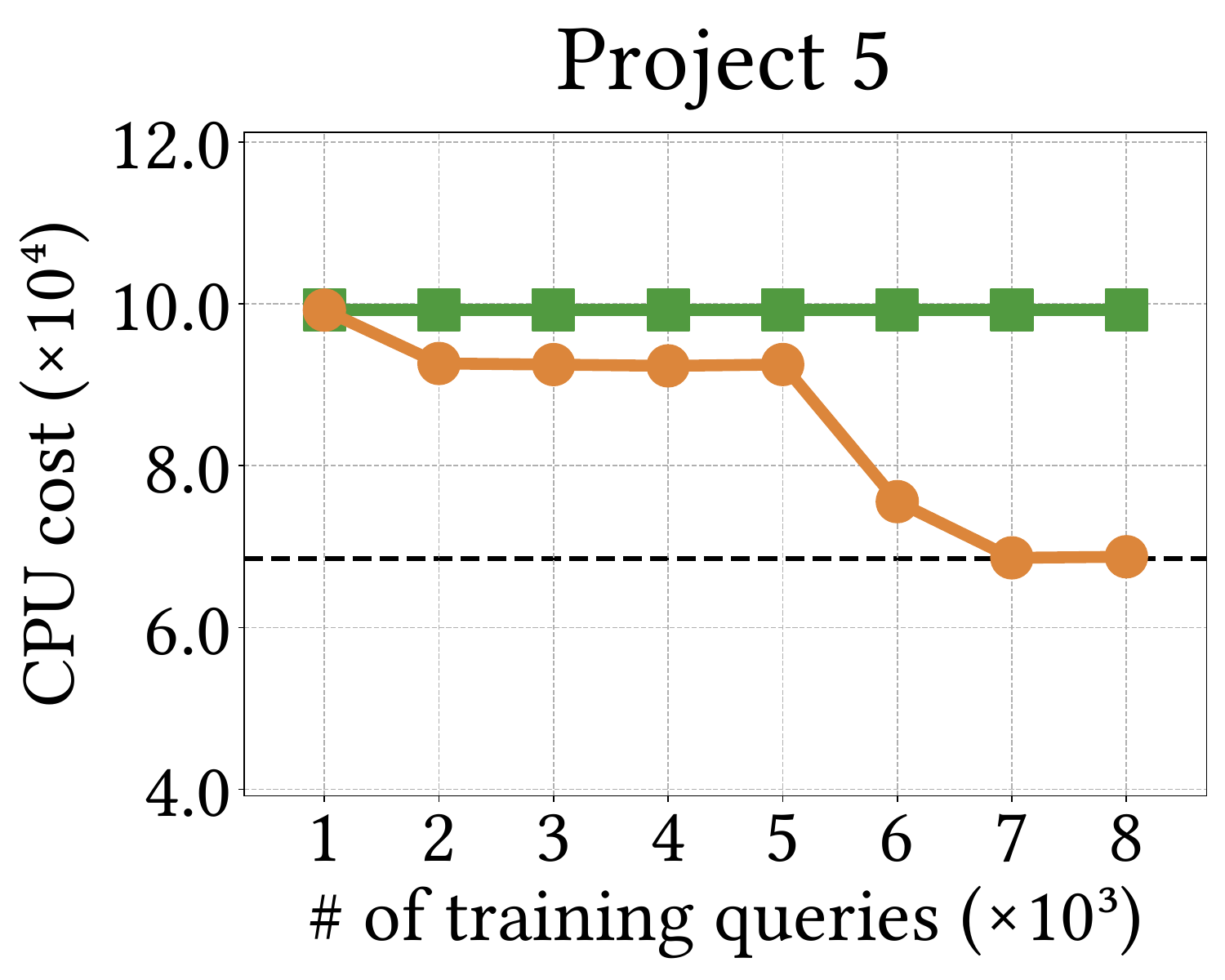}
\vspace{-1em}
\caption{Performance of \sys \wrt training data size.}
 \label{fig:training_size}
\end{minipage}
\begin{minipage}[c]{0.33\linewidth}
\centering
\vspace{1.5em}
\renewcommand{\arraystretch}{1.2}
\scalebox{0.6}{
\begin{tabular}{|c|ccccc|} 
    \hline
    \rowcolor{mygrey}
    & & \multicolumn{3}{c}{\textbf{\textsf{Training Time (s)}}} & \\
    \rowcolor{mygrey} \multirow{-2}*{\textbf{\textsf{Method}}} & {\citylife} & {\gd} & {\fin} & {\ads} & {\mkt} \\
    \hline
    \centering
    \textbf{\sys} & 2,950 &389 & 3,324 & 539 & 3,590 \\
    \textbf{\textsf{Transformer}} & 1,588 & 898 & 3,645 & 416 & 129 \\
    \textbf{\textsf{GCN}} & 206 & 103 & 364 & 115 & 93 \\
    \textbf{\textsf{XGBoost}} & 0.12 & 0.07 &0.14 & 0.05 & 0.11 \\
    \hline
    \end{tabular}
    }
    
    \vspace{0.5em}
    {\small\textsf{(a) Training overhead.}}
    \vspace{0.5em}
    
    \scalebox{0.6}{
    \begin{tabular}{|c|ccccc|} 
    \hline
     \rowcolor{mygrey}
     & & \multicolumn{3}{c}{\textbf{\textsf{Model Size (MB)}}} & \\
    \rowcolor{mygrey} \multirow{-2}*{\textbf{\textsf{Method}}} & {\citylife} & {\gd} & {\fin} & {\ads} & {\mkt} \\
    \hline
    \centering
    \textbf{\sys} & 15 &8.1 & 21 & 6.4 & 9.1 \\
    \textbf{\textsf{Transformer}} & 25.5 &25.4 &25.9 &25.2 &25.1 \\
    \textbf{\textsf{GCN}} & 153 & 140 & 155 & 124 & 118 \\
    \textbf{\textsf{XGBoost}} & 11 &5.1 &17.8 &3.4 &2.8 \\
    \hline
    \end{tabular}
    }
    
    \vspace{0.5em}
    {\small \textsf{(b) Model footprint.}}
    \vspace{0.5em}
    
    \scalebox{0.59}{
    \begin{tabular}{|c|ccccc|} 
    \hline
     \rowcolor{mygrey}
     & & \multicolumn{3}{c}{\textbf{\textsf{Inference Time (s)}}} & \\
    \rowcolor{mygrey} \multirow{-2}*{\textbf{\textsf{Method}}} & {\citylife} & {\gd} & {\fin} & {\ads} & {\mkt} \\
    \hline
    \centering
    \textbf{\sys} & 0.28 & 0.41 & 0.27 & 0.11 & 0.29 \\
    \textbf{\textsf{Transformer}} & 0.26&0.16 &0.31 &0.15 &0.14 \\
    \textbf{\textsf{GCN}} & 0.21 &0.24 &0.17 &0.06 &0.20 \\
    \textbf{\textsf{XGBoost}} & 0.28 & 0.08 &0.13 &0.08 &0.12 \\
    \hline
    \end{tabular}
    }
    
    \vspace{0.5em}
    {\small \textsf{(c) Average inference time.}}
    \vspace{-1em}
\caption{Extra cost of learned optimizers.}
\label{tbl:extra_cost}
\end{minipage}
\vspace{-0.8em}
\end{figure*}

Figure~\ref{fig:e2e evaluation} illustrates the end-to-end performance of \sys and other baselines on the $5$ selected projects. We make the following observations:  

1) \sys outperforms all baseline optimizers across almost all projects. In particular, it shows substantial improvements over \MC's native optimizer on \textsf{Projects~1, 2}, and \textsf{5}, achieving performance gains of $10\%$, $23\%$, and $30\%$, respectively. This near-universal superiority highlights the practical effectiveness of \sys in real-world production environments, where reliable performance is particularly important. By contrast, other learned optimizer baselines demonstrate limited improvements or even degraded performance compared to \MC. We attribute \sys's advantage to its adaptive training process, which enables it to generalize well to candidate plans, whereas other learned optimizers suffer considerably from distribution shifts that compromise cost estimation accuracy.


2) Across all projects, the actual performance improvements achieved by learned optimizers (including \sys) are strongly correlated with the theoretical bound on the improvement space, \ie, $D(M_d)$. Specifically, on \textsf{Projects~3} and \textsf{4}, which exhibit relatively smaller $D(M_d)$ (at $20\%$ and $23\%$ of the oracle model's cost, compared to $43\%$ and $40\%$ for \textsf{Projects~2} and \textsf{5}, respectively), the learned optimizers yield only performance comparable to the \MC's native query optimizer. This confirms that $D(M_d)$ is an effective metric for identifying unpromising projects in advance. Further empirical support for this observation is provided in Section~\ref{subsec:load-effect}. Consistent with Theorem~\ref{Thm:perfgap}, the best-achievable model (the dashed line in the figure) exhibits a cost deviance $D(M_b)$ close to $D(M_d)$ on \textsf{Projects~3} and \textsf{~4}, with gaps of approximately $7\%$, whereas the gaps for other projects range from $12\%$ to $33\%$.

3) \textsf{Project~1} has a $D(M_d)$ value of $25\%$, only marginally higher than those of \textsf{Projects~3} and \textsf{4}. Nevertheless, it achieves substantially better performance. This is because \textsf{Projects~3} and \textsf{~4} additionally suffer from insufficient training data, which hinders the training of an effective cost predictor. Specifically, \textsf{Project~3} contains over 7,000 columns, roughly twice as many as others, which accordingly necessitates more training queries for \sys to learn the data distributions. Meanwhile, \textsf{Project~4} has only about $4,000$ training queries, far fewer than the others.


Notably, these five projects exhibit the largest improvement space among all 30 sampled projects, yet some still do not yield clear performance gains due to theoretical and practical constraints. This further underscores the importance of project selection studied in Section~\ref{sec:project-selection} and carefully evaluating the performance of learned query optimizers before deploying in practical scenarios.

Despite \sys's superior performance in query optimization, it incurs minimal deployment and negligible operational overhead in \MC. As shown in Figure~\ref{tbl:extra_cost}, \sys requires less than an hour of training time and a memory footprint of about $20$ MB. During query optimization, it generates candidate plans in under $0.1$ seconds and adds model inference costs of $0.1$-$0.5$ seconds per query, amounting to only $0.23\%$--$0.74\%$ of the execution time for test queries drawn from production workloads.

\eat{
We evaluate the effectiveness of our approach using five representative datasets. The models are trained on historical data from the first 25 days and then used to select optimal execution plans from candidate plans for queries in the test set. This setup simulates the real-world deployment scenario where learned query optimizers leverage historical data to optimize incoming queries. Figure \ref{fig:e2e evaluation} illustrates the average end-to-end execution cost across all test workloads. Our analysis reveals several key findings:

1) \sys consistently outperforms all baseline approaches while providing substantial performance improvements or comparable performance to MC's native optimizer. This robustness is particularly crucial in production database environments where reliability cannot be compromised. In contrast, other baseline approaches exhibit either limited improvements or degraded performance compared to \MC, primarily due to their inability to handle out-of-distribution scenarios without the GRL module that \sys employs.

2) \sys demonstrates substantial performance improvements over the native optimizer of MC, achieving gains of 10\%, 23\%, and 30\% on Project1, Project2, and Project3, respectively. These results validate our model's strong generalization capability even when trained on historical data and applied to candidate plan selection. Furthermore, our proposed pipeline reduces training data collection overhead and accelerates the cold start process.

3) We observe that all learned models, including \sys, face challenges in surpassing MC's performance on Projects 3 and 4, which we attribute to two main factors. First, there is insufficient training data - Project3 contains 7,382 columns (twice that of other projects), suggesting a need for more training data, while Project4 has only 4,187 training samples (from 25 days), significantly fewer than other projects. 
Second, Projects 3 and 4 offer relatively small optimization margins (10\% and 13\% respectively). 
The combination of out-of-distribution challenges and the absence of load features during evaluation makes it particularly difficult to achieve meaningful gains within such constrained optimization space.
}

\subsubsection{\textbf{Analysis of Performance Improvement/Regression}}
\label{subsec:per-query-evaluation}


To provide more detail, Figure~\ref{fig:per-query} compares the per-query CPU cost of \sys and \MC. Test queries are sorted by the cost difference (from slowdown to speedup) to visualize \sys's performance improvement or regression relative to \MC.

We observe that \sys brings much more performance gains than regressions on \textsf{Projects~1} and \textsf{2}. Specifically, only $26$ ($14\%$) and $8$ ($8\%$) queries experience slowdowns, respectively, while $50$ ($27\%$) and $70$ ($69\%$) queries see notable improvements. \textsf{Project~5} is largely flat, showing 10 speedups versus 5 slowdowns. Importantly, improvements also far exceed regressions in magnitude, with cost savings up to two orders of magnitude larger than the worst regressions across these projects.
This strong bias toward performance gains highlights \sys's effectiveness in identifying better execution plans for complex production queries.

By further statistics, we observe that among improved queries in these three projects, over half achieve relative performance improvements ranging from $17\%$ to $26\%$. This indicates that performance gains delivered by \sys are not driven by marginal optimizations on long-running queries, but instead represent significant and widespread improvements across a diverse range of queries.

Results on \textsf{Projects~3} and \textsf{~4} exhibit a different pattern: \sys incurs regressions that slightly outnumber or match improvements. Similar trends are observed across all learned optimizer baselines. This aligns with our findings in the end-to-end evaluation in Section~\ref{subsec:e2e-evaluation}, where learned optimizers struggle to achieve clear performance gains over \MC when the improvement space is limited and the training data is insufficient. These results reaffirm the importance of preemptively filtering out unpromising projects and ensuring the availability of sufficient training data. 


\eat{

Test queries are sorted based on their execution cost differences, ranging from performance degradation to improvement, to visualize the distribution of per-query performance variations.
Our analysis reveals distinct patterns across different projects:

For \textsf{Project~1} and \textsf{2}, \sys demonstrates strong optimization capabilities with minimal performance regressions. Specifically, only 26 and 8 queries experienced slowdowns in Projects 1 and 2, respectively, while 50 and 70 queries achieved significant performance improvements. This asymmetric distribution towards performance gains highlights \sys's effectiveness in identifying superior execution plans for complex queries.

However, the Project 3 present a different result: \sys exhibits more performance regressions (78 cases) than improvements (71 cases). This pattern is consistently observed across all baseline approaches, aligning with our earlier observation of limited optimization in Project 3. We attribute this challenge to two primary factors previously discussed: (1) insufficient training data and (2) the model's difficulty in discriminating between execution plans with marginal quality differences.

These findings emphasize the importance of adequate training data and the challenges in optimizing queries when the potential performance differences between alternative plans are subtle. Despite these limitations in Project 3, \sys's robust performance in Projects 1 and 2 demonstrates its practical value in real-world query optimization scenarios.

}



\begin{figure*}
\begin{minipage}[c]{0.65\textwidth}
\centering
    \includegraphics[width=0.8\textwidth]{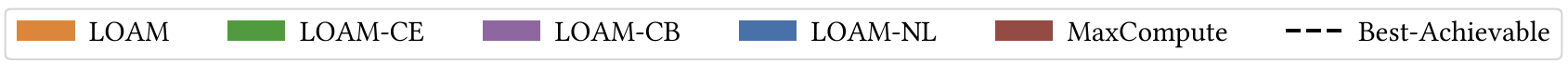} 
        
    \includegraphics[width=0.19\linewidth]{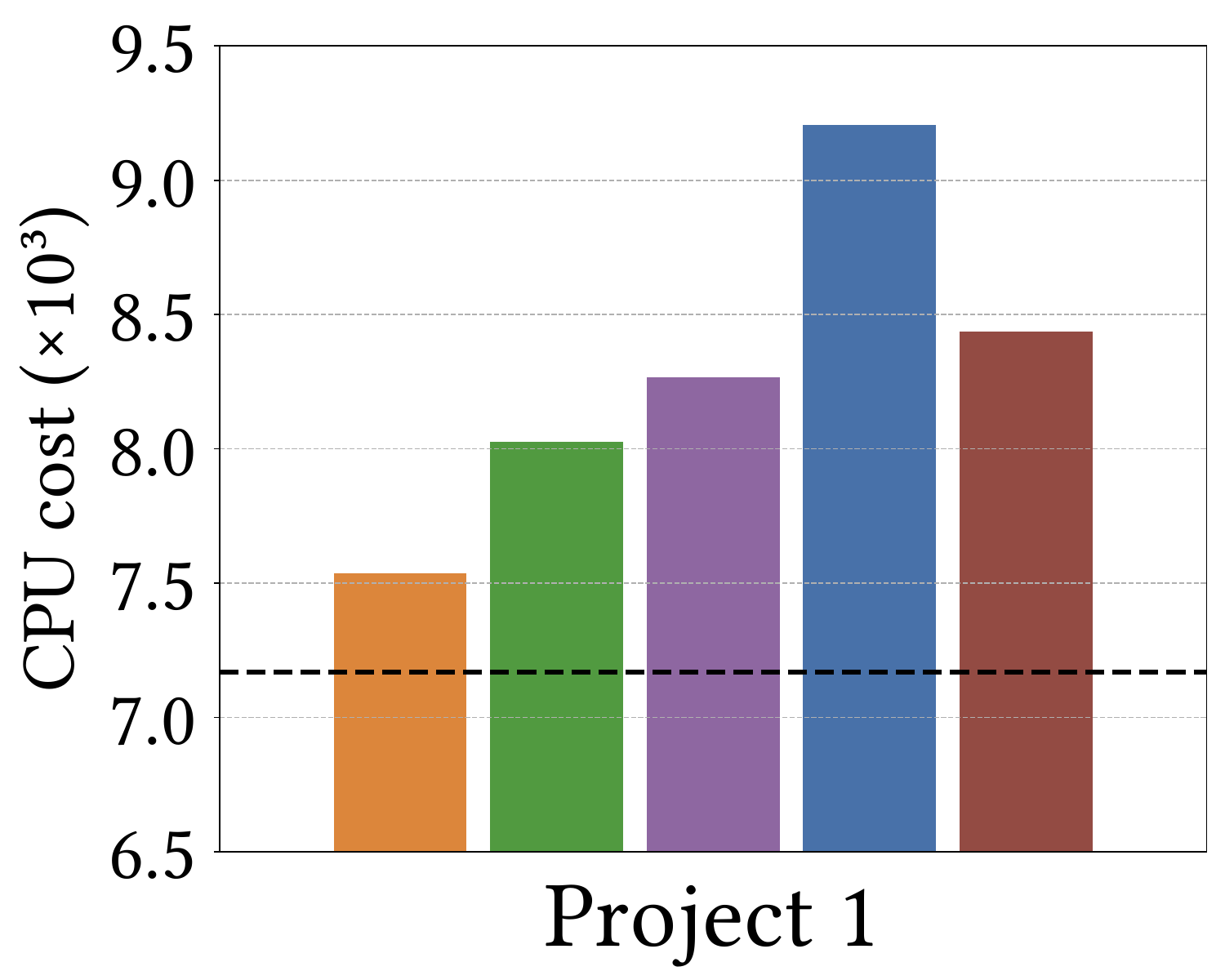}
    \includegraphics[width=0.19\linewidth]{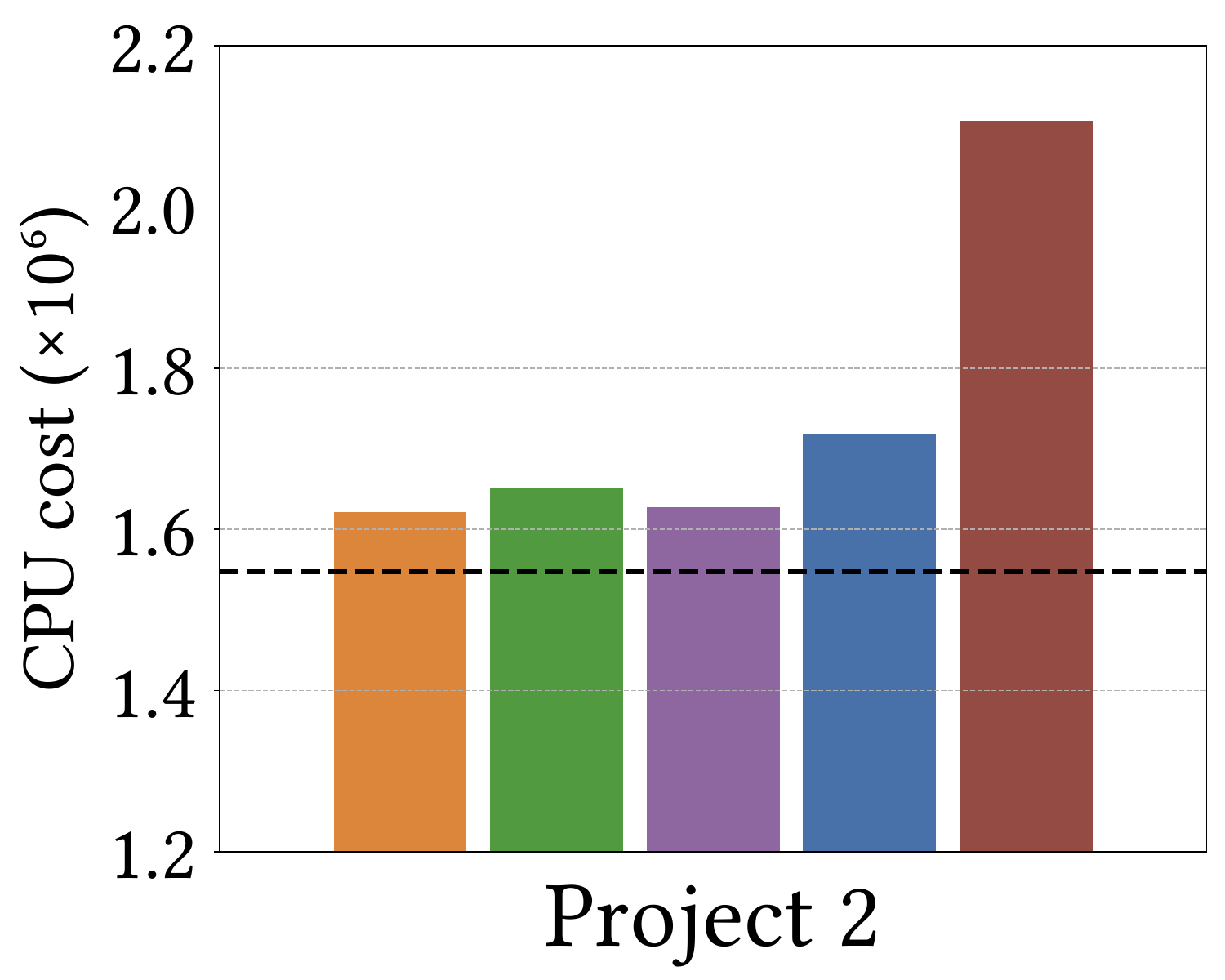}
    \includegraphics[width=0.19\linewidth]{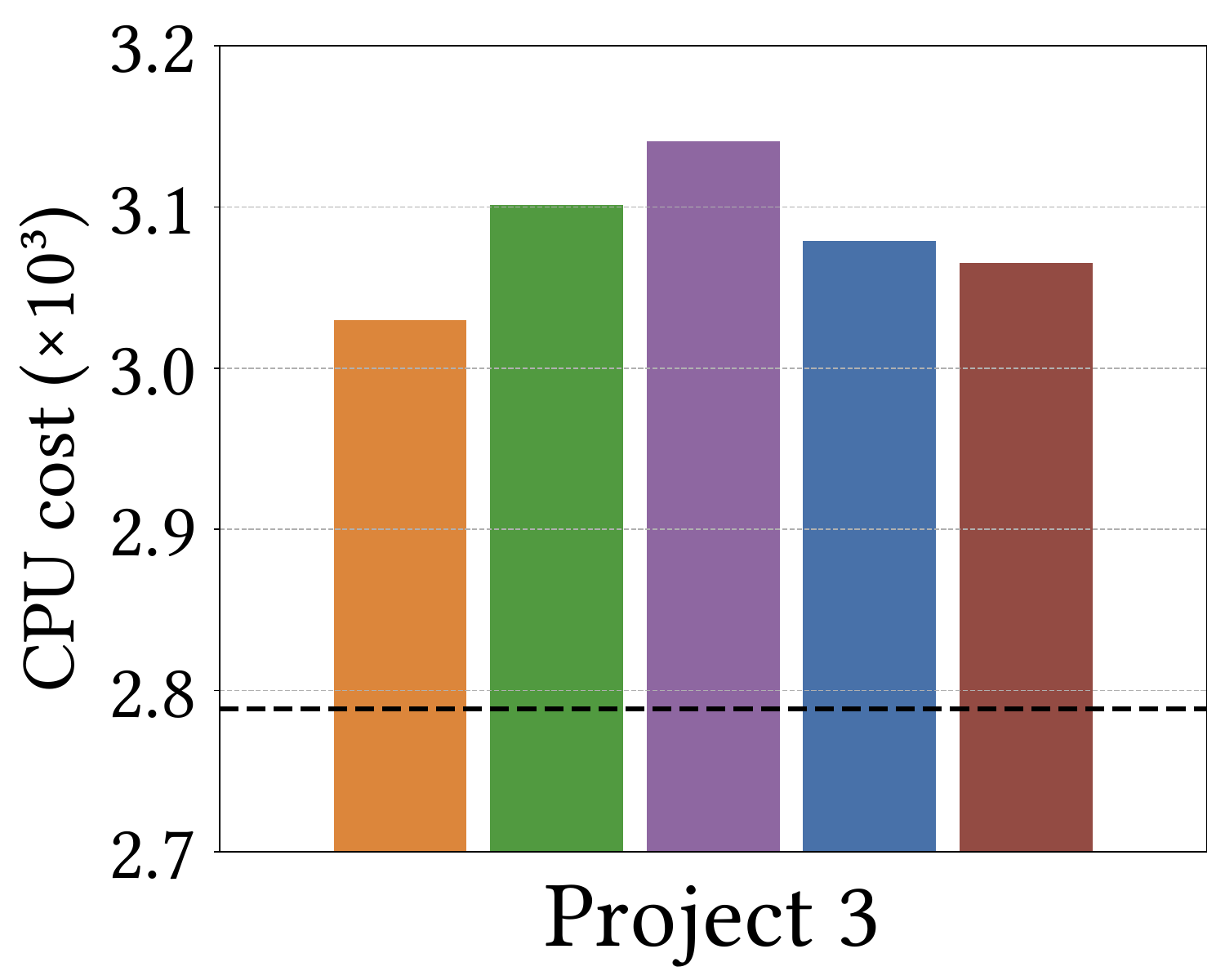}
    \includegraphics[width=0.19\linewidth]{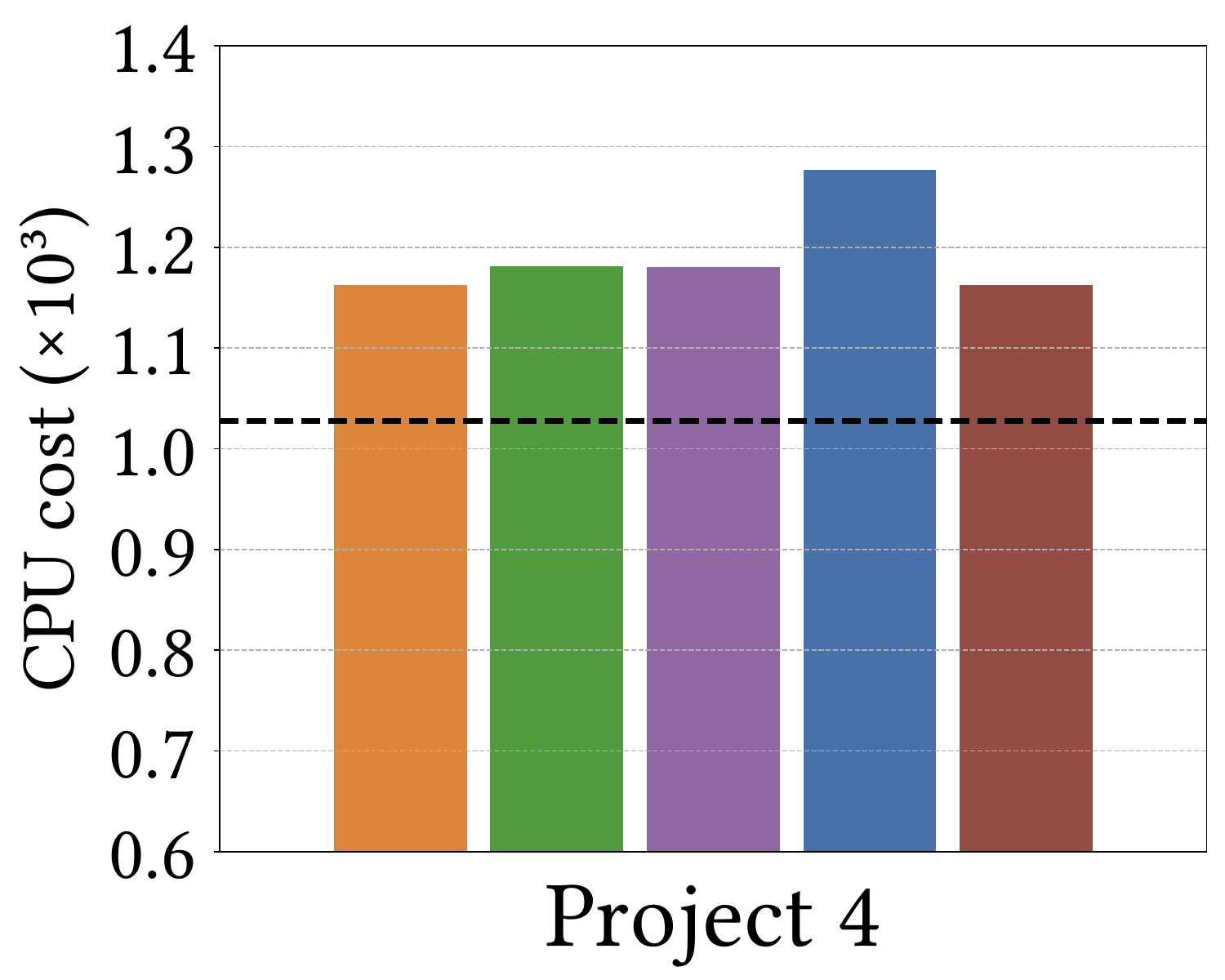}
    \includegraphics[width=0.19\linewidth]{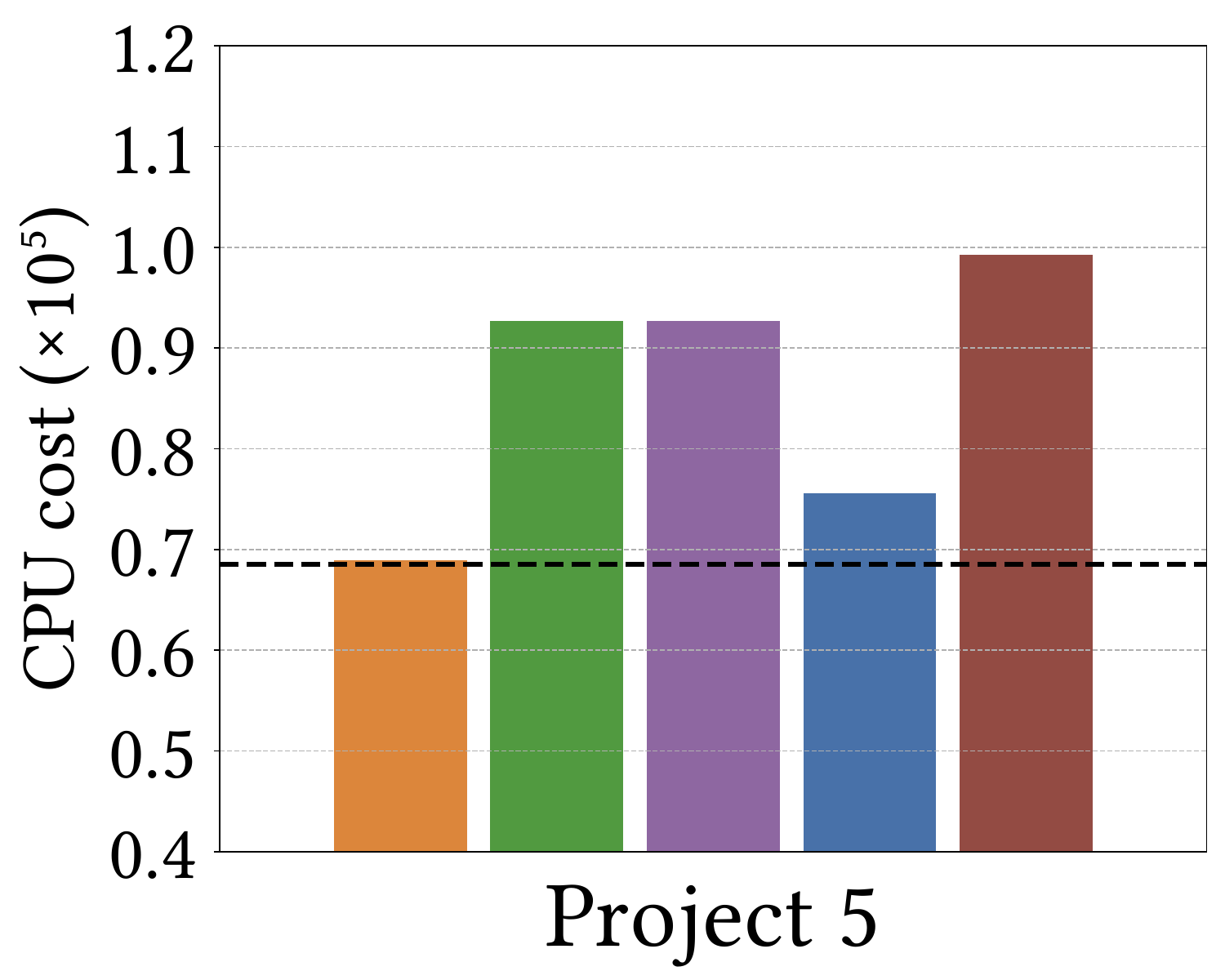}

    {\small \textsf{(a) E2E CPU cost.}}
    \vspace{0.3em}
    
    \includegraphics[width=0.19\linewidth]{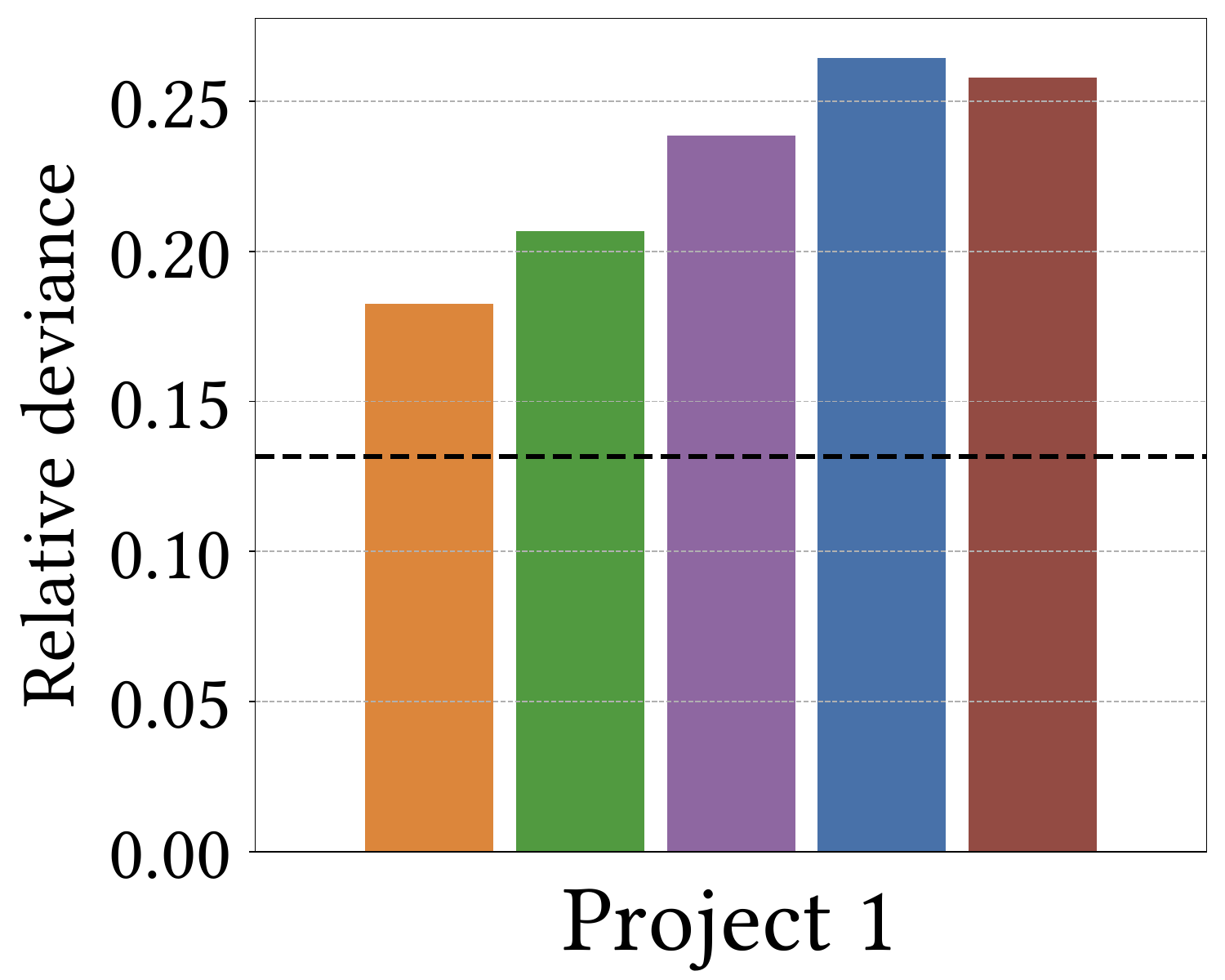}
    \includegraphics[width=0.19\linewidth]{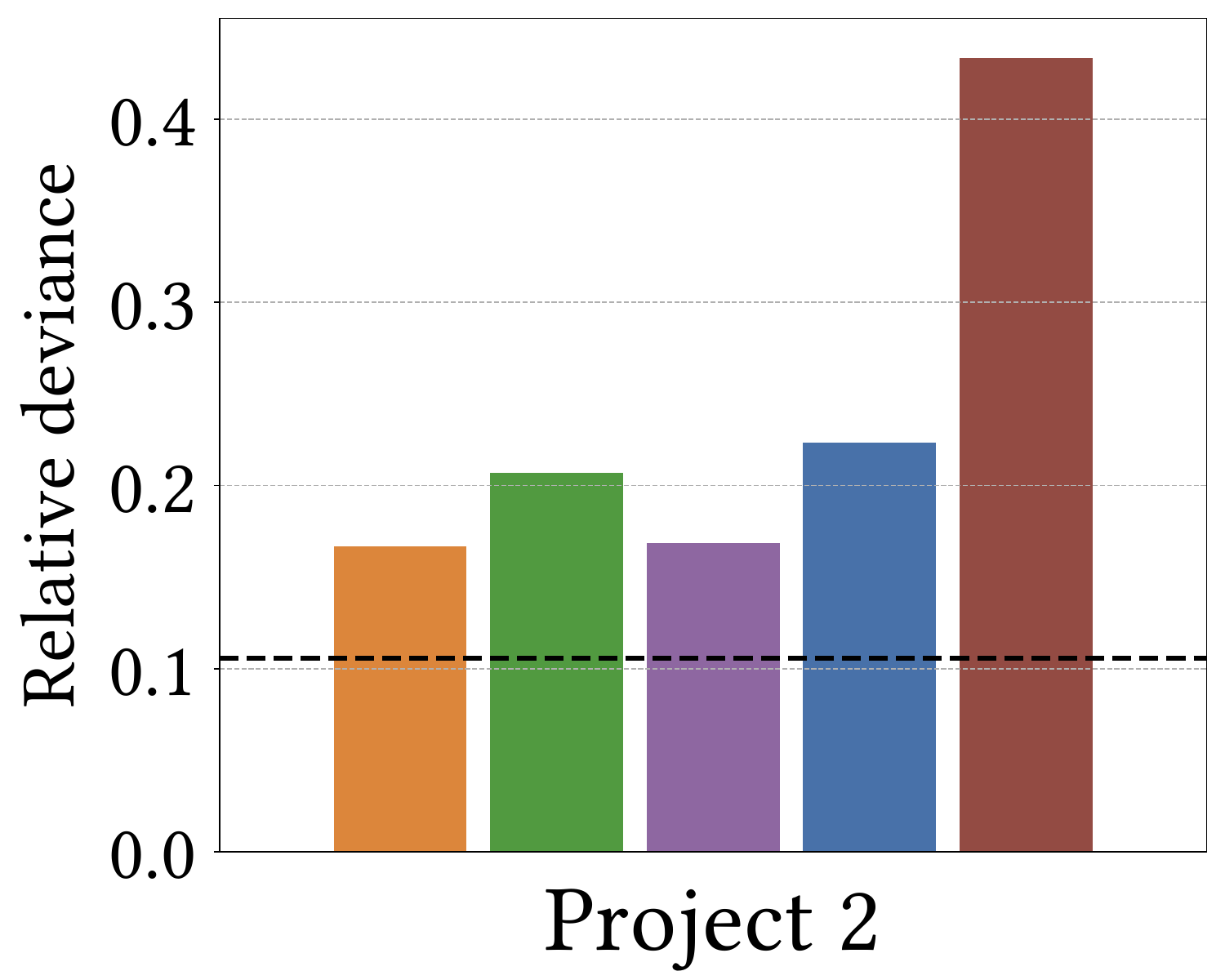}
    \includegraphics[width=0.19\linewidth]{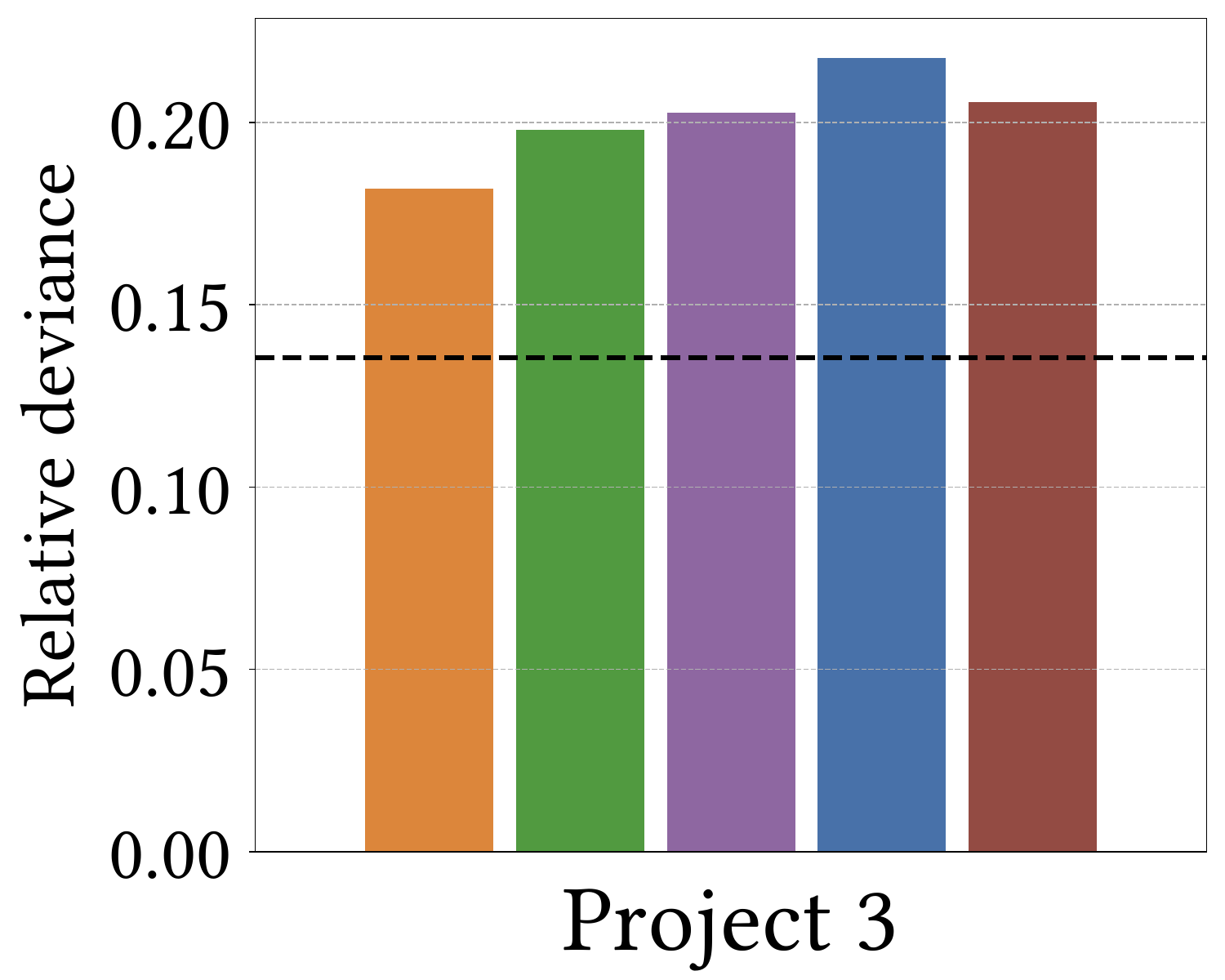}
    \includegraphics[width=0.19\linewidth]{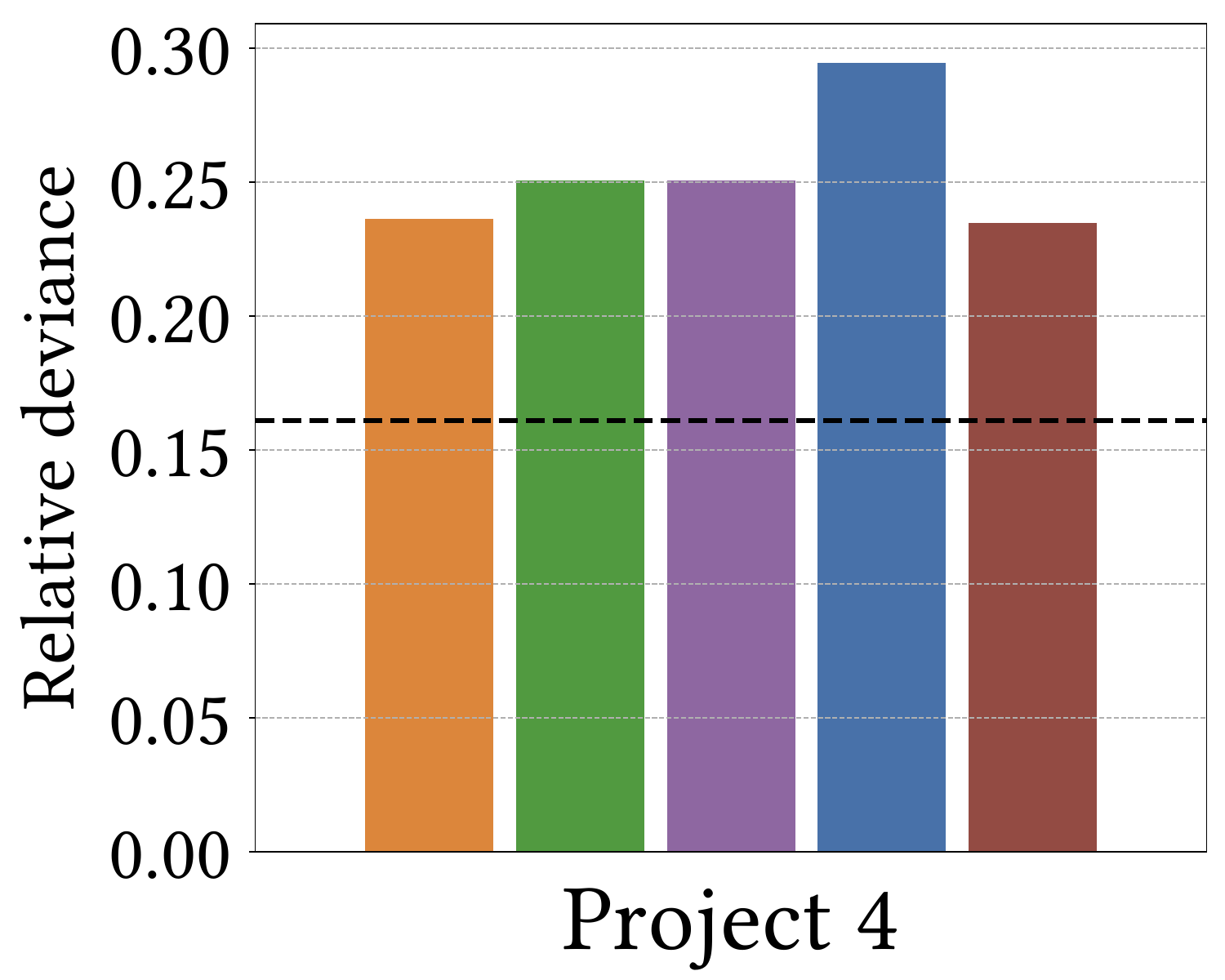}
    \includegraphics[width=0.19\linewidth]{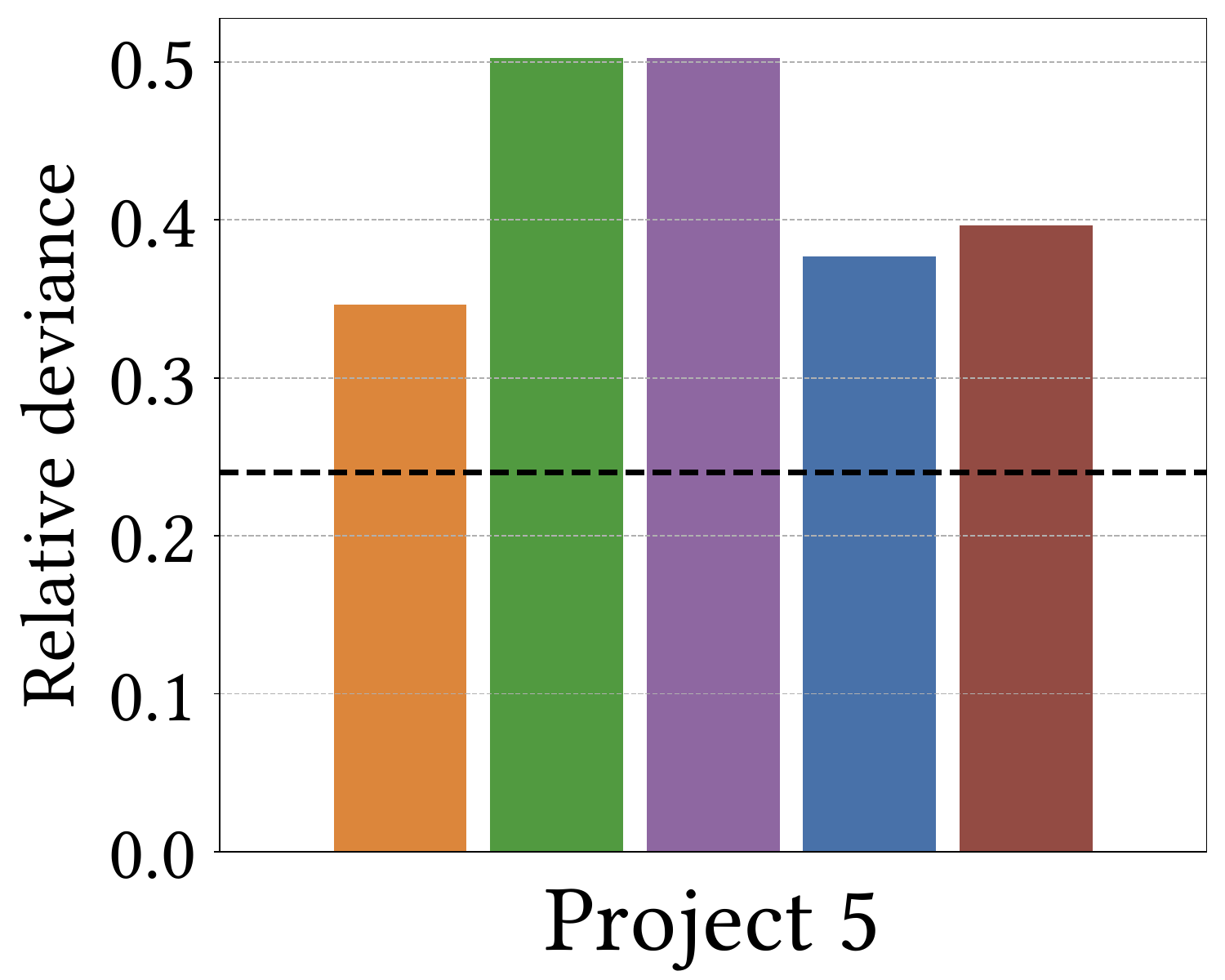}

    {\small \textsf{(b) Relative deviance.}}
    \vspace{0.3em}
    
\vspace{-1em}
\caption{Query optimization performance \wrt cost inference methods.}
\label{fig:error-wrt-load}
\end{minipage}
\begin{minipage}[c]{0.33\linewidth}
\renewcommand{\arraystretch}{1.2}
\vspace{0.7em}
    \scalebox{0.6}{
    \begin{tabular}{|c|ccccc|} 
    \hline
    \rowcolor{mygrey}
     & & \multicolumn{3}{c}{\textbf{\textsf{Average CPU cost}}} & \\
    \rowcolor{mygrey} \multirow{-2}*{\textbf{\textsf{Method}}} & {\citylife} & {\gd} & {\fin} & {\ads} & {\mkt} \\
    \hline
    \centering
    \textbf{\MC} & 8,438 & 2,107,260 & 3,065 & 1,162 & 99,284\\
    \textbf{\sysnative} & 9,136 & 1,807,823 & 3,016 & 1,174 & 92,488 \\
    \textbf{\sys} & 7,537 & 1,621,243 & 3,029 & 1,162 & 68,617 \\
    \hline
    \end{tabular}
    }
    \vspace{-1em}
    \caption{Effects of adaptive training.}
    \label{tbl:effect-adaptive-training}
    
    \vspace{1.5em}
    
    \includegraphics[width=0.48\textwidth]{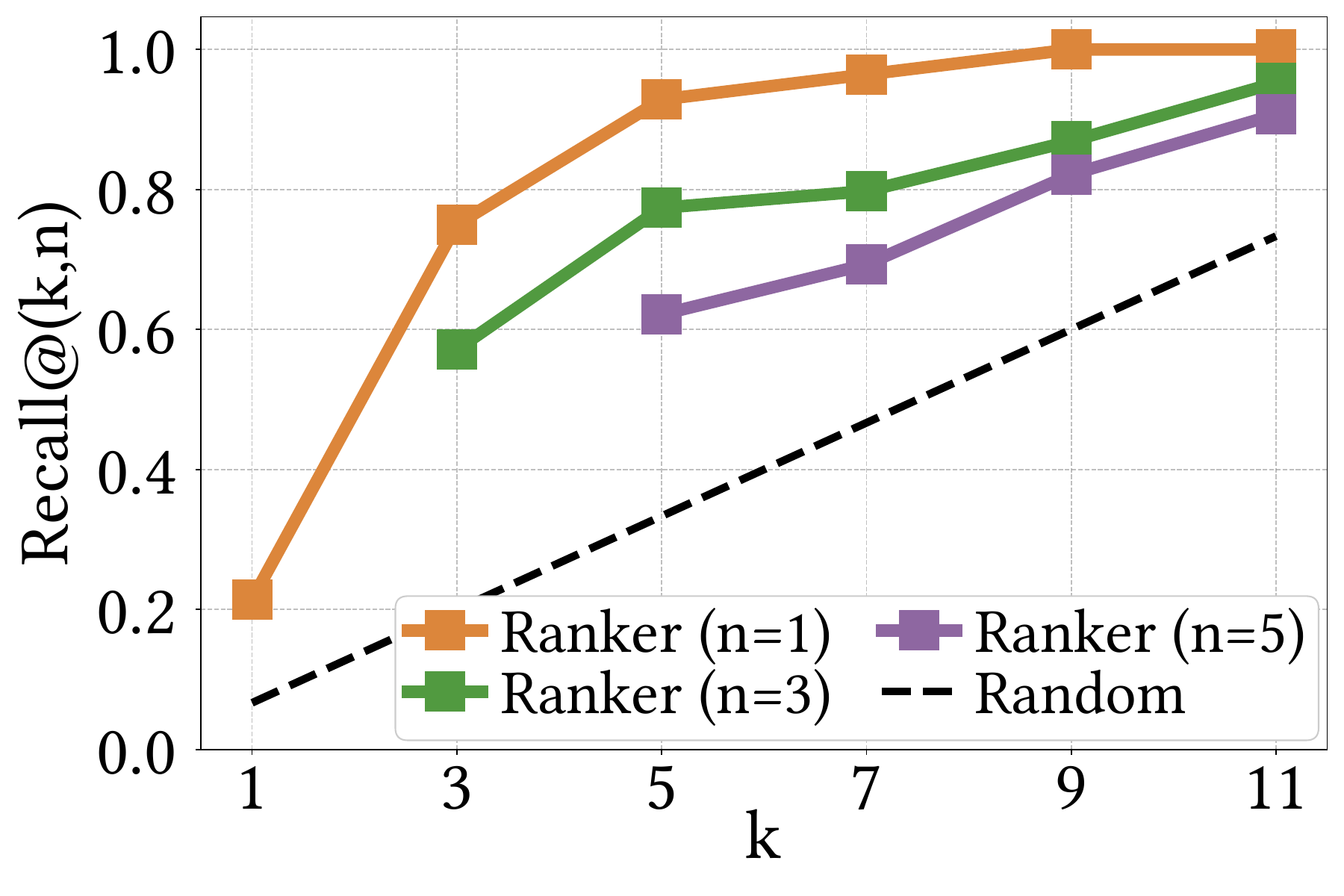}
    \hfill
    \includegraphics[width=0.48\textwidth]{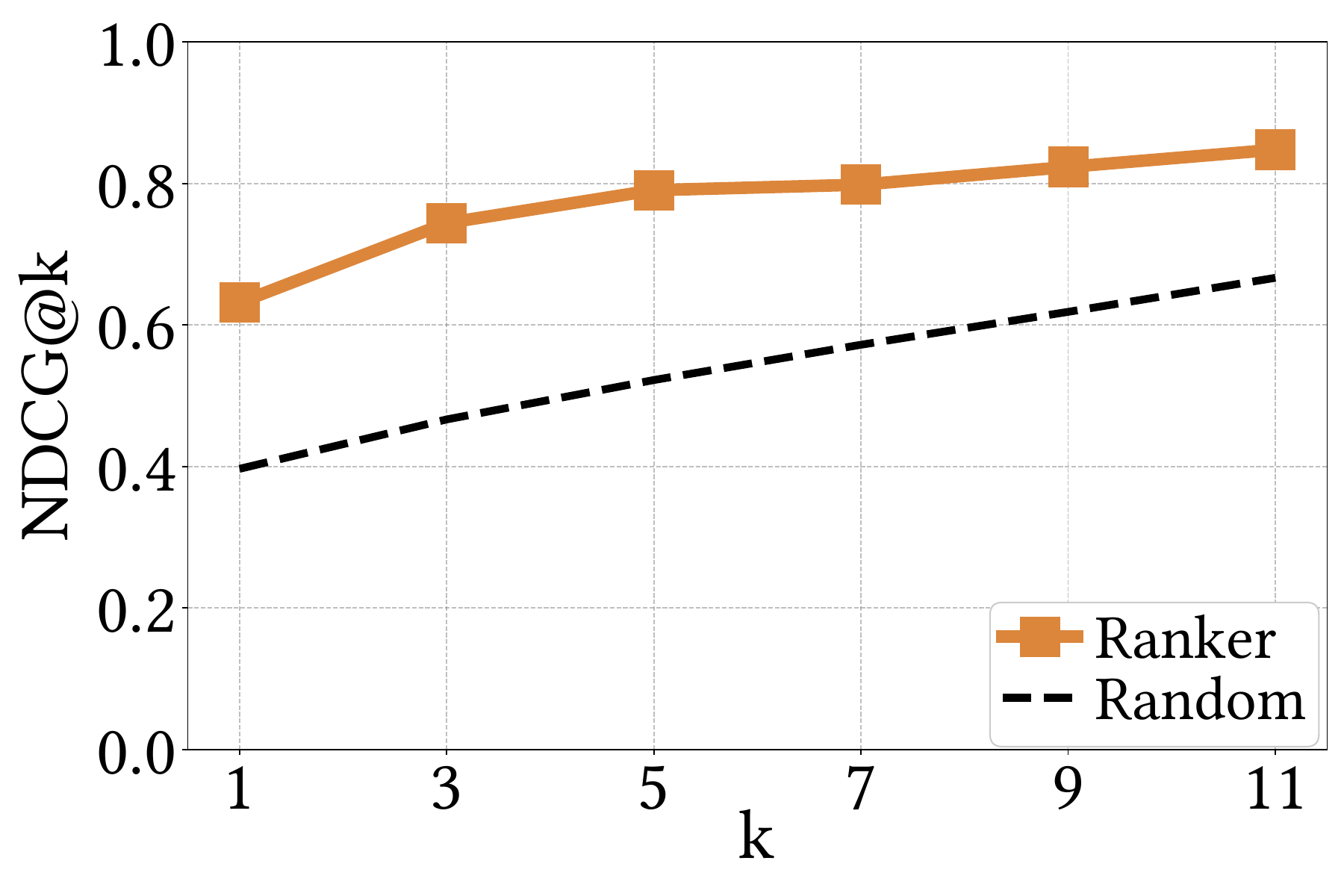}
    
    \hspace{0.5cm}{\small \textsf{(a) Recall \wrt $k$.}}
    \hfill
    {\small \textsf{(b) NDCG \wrt $k$.}}
    \vspace{-1em}
    \caption{Performance of \textsf{Ranker}.}
    \label{fig:rank_performance}
\end{minipage}
\vspace{-0.8em}
\end{figure*}

\subsubsection{\textbf{Effectiveness of Adaptive Training}}\label{subsec:grl-evaluation} 

We conduct an ablation study to evaluate the effectiveness of the adaptive training process. Figure~\ref{tbl:effect-adaptive-training} compares the performance of \sys with a variant without adaptive training, denoted as \sysnative. 
\sysnative is constructed by removing the domain classifier and the gradient reversal layer (GRL) from \sys and is trained solely to minimize the cost prediction loss $L_c(\cdot)$ in Eq.~\eqref{eq:loss}. Our experimental results reveal the following findings.

On \textsf{Projects~1}, \textsf{2}, and \textsf{5}, removing adaptive training causes pronounced performance degradation: \sysnative performs markedly worse than \sys and often comparable to, or even worse than, \MC. This exposes the severe impact of the distribution shift between default plans and knob-tuned candidate plans on learned optimizers’ performance. With adaptive training, \sys avoids overfitting to default plans and generalizes well to estimate costs for candidate plans, confirming the indispensable role of adaptive training. On \textsf{Projects~3} and \textsf{4}, \sysnative and \sys perform similarly, which is consistent with the findings in Sections~\ref{subsec:e2e-evaluation} and~\ref{subsec:per-query-evaluation}, where any learned optimizers struggle in cases with limited improvement space and insufficient training data.

Beyond the observed CPU cost savings, adaptive training enables \sys to be trained on historical query plan costs alone, greatly simplifying the data preparation phase. This stands in sharp contrast to prior work~\cite{balsa, neo, bao, lero}, which requires executing numerous candidate plans to collect their costs for training purposes.




\eat{
We conduct an ablation study to investigate the effectiveness of the Graph Representation Learning (GRL) module by comparing the performance of \sys with and without the GRL module.

Figure \ref{fig:grl} presents the average normalized end-to-end (E2E) execution cost relative to the fastest plan among candidate plans across Projects 1, 2, and 3. Our experimental results reveal several interesting findings:

1) \sys without the GRL module performs significantly worse than the complete version on Projects 1 and 2, demonstrating the crucial role of this module. This performance gap can be attributed to GRL's ability to prevent model overfitting on the native \MC plan space while effectively generalizing leanred model across both native and candidate plan spaces.

2) Interestingly, both versions (with and without GRL) show comparable performance on Project 3. This observation aligns with our previous analysis in Section \ref{}, where we found that \sys failed to improve upon \MC's performance for this particular dataset.

These results provide strong empirical evidence for the effectiveness of the GRL module in our system architecture, particularly in scenarios where generalization across different plan spaces is critical for optimal performance. Furthermore, our proposed pipeline reduces training data collection overhead and accelerates the cold start process.
}

\subsubsection{\textbf{Impact of Training Data Size}}
\label{subsec:training-size-effect}

In this experiment, we examine how the volume of training data affects \sys's performance. We incrementally vary the size of the training set from 1,000 to $\mathsf{MAX}$ queries, where $\mathsf{MAX}$ is the maximum number of historical queries available within 25 days, and evaluate the performance of \sys trained on each set. Figure~\ref{fig:training_size} reports the results, which reveal the following two key insights: 


1) On \textsf{Projects~1}, \textsf{~2}, and \textsf{~5}, the performance of \sys generally improves with more training data and eventually stabilizes. This behavior suggests that: i) insufficient training queries limit the model's benefits, and incorporating more training data offers clear advantages. This finding motivates the rules in Section~\ref{sec:project-selection} that filter out projects without enough historical queries for training \sys; and ii) these advantages are ultimately bounded. For example, on \textsf{Project~1}, continuously expanding the training set from 10,000 to 20,000 queries yields only marginal performance gains. 


We also notice an inherent performance gap between \sys and the best-achievable model (the dashed line in the figure), regardless of how we expand the training set. This arises because our project selection operates at the project level and excludes projects with low average improvement space, yet each retained project still contains queries with limited room for improvement. Enlarging the training set does little to improve \sys's capability in distinguishing the minor cost differences among candidate plans for such queries. This indicates that we could find a sweet spot for \sys that best balances training overhead against the performance gains.



2) 
\sys's performance varies significantly across projects, and each requires a distinct minimum number of training queries to match \MC. On \textsf{Project~1}, \sys surpasses \MC only after training on more than 6,000 queries, whereas on \textsf{Projects~2} and~\textsf{5}, \sys consistently outperforms \MC at all sampled training sizes. This comes from two factors. First, \textsf{Project~1} exhibits far less improvement space ($25\%$) than \textsf{Projects~2} and~\textsf{5} (over $40\%$). This makes it harder for \sys to distinguish and select superior plans when many candidates have similar estimated costs. Moreover, \textsf{Project~1} has about five times as many columns as \textsf{Project~2}, so \sys needs more training queries to learn the data distributions effectively. On \textsf{Projects~3} and \textsf{~4}, \sys's performance remains relatively stable and shows no notable difference from \MC, as discussed in Section~\ref{subsec:e2e-evaluation}.


\eat{
In this experiment, we investigate \sys's performance sensitivity to training data size by evaluating models trained on different dataset sizes ranging from 1,000 to MAX samples (maximum available data within 25 days) across Projects 1, 2, and 3. Figure \ref{fig:training_size} illustrates our findings, which reveal three key insights:

1) \sys exhibits a fluctuating but generally improving performance trend as training data increases, eventually converging to a stable performance across all three datasets.
This pattern indicates that insufficient training data limits the model's potential, while additional data contributes to performance improvements up to a certain point. 
Notably, increasing the training size from 20,000 to 100,000 samples yields only marginal performance gains. This suggests that continuously expanding the training dataset may not be cost-effective for practical deployments.

2) The performance curves of \sys exhibit distinct patterns across different datasets. On Project 1, \sys only surpasses MC's performance when the training data size exceeds 6,000 samples. In contrast, \sys consistently outperforms MC on Project 2 regardless of the training data size. This divergent behavior can be attributed to two key factors: First, the feature complexity varies significantly between projects. Project 1 contains 3,782 columns, more than five times that of Project 2 (714 columns), making it more challenging for the model to capture the underlying optimization patterns.
Second, Project 2 offers a larger optimization potential (36\% improvement), compared to Project 1's 17\%, enabling better plan discrimination.

3) \sys achieves only modest improvements despite increasing training data size. This limitation points to a fundamental challenge: the model's ability to discriminate between plans of similar quality. When the performance differences between alternative plans are subtle, the model struggles to identify superior options. This challenge is further compounded in our experimental setting, where plan execution costs exhibit inherent variability due to environmental factors.

}

\subsubsection{\textbf{Effectiveness of Our Practical Method for Cost Inference}}
\label{subsec:load-effect} 

To evaluate our practical solution to the invisible execution environments during plan cost inference, we compare \sys with three variants:
a) \textbf{\sysprefix\textsf{CE}} models cluster-wide environmental features as random variables, where each feature is computed as the average across all machines in the cluster. Using cluster-wide environmental data collected over the past 24 hours, a distribution is fitted, and its expected feature values are used for cost prediction; 
b) \textbf{\sysprefix\textsf{CB}} also leverages cluster-wide environments but uses the environment instance at the moment of query optimization for cost estimation; and
c) \textbf{\sysprefix\textsf{NL}} removes all environmental features from \sys and operates without using any information about machine load during both training and cost prediction. 

We evaluate these methods in terms of both the end-to-end CPU cost and their deviance from the oracle model (as detailed in Section~\ref{sec:per-bound}). 
For ease of comparison, the relative deviance, calculated as the deviance divided by the oracle model's expected cost, is reported. Figure~\ref{fig:error-wrt-load} presents the result. We observe that: 

1) \sysprefix\textsf{NL} consistently performs worse than \sys in both query execution cost and relative deviance, particularly on \textsf{Project~1}. This underscores the significant noise introduced by execution environments in query optimizations and the importance of incorporating environmental features
into plan cost modeling, as we highlighted in Challenge~1 and Section~\ref{sec:per-bound}.

2) \sys outperforms the variants \sysprefix\textsf{CE} and \sysprefix\textsf{CB} across all projects, especially on \textsf{Projects~1} and~\textsf{5}. 
This confirms the practical effectiveness of \sys's strategy that estimates plan costs under average-case environmental conditions. We attribute its advantage over using cluster-wide environmental features to two factors. First, at execution time, query tasks are typically scheduled to machines with more idle resources (\eg lower system load) for load balancing and efficient execution. Second, queries from different projects exhibit distinct characteristics and induce different load patterns on the machines. 
Consequently, cluster-wide averages often fail to reflect the machine-level load a given query experiences. In contrast, the expected machine-level environmental conditions observed from historical queries, as adopted in \sys, provide a closer approximation. 


3) 
The best-achievable model incurs a relative deviance of about $10\%$ across all test projects. This finding provides empirical support for Theorem~\ref{Thm:perfgap} and highlights the intrinsic performance gap between any query optimizer and the oracle model. Moreover, we observe that relative deviance not only offers richer practical insights but also mirrors the behavior of the widely used end-to-end execution cost metric, which is exactly the ultimate goal for query optimization. Taken together, relative deviance is promising to serve as a novel and effective metric for evaluating query optimizers in real-world systems such as \MC in future research.


\eat{

During model training, we train our model using actual system load statistics collected during query execution. However, during evaluation and deployment stages, the model relies on simulated system load to predict which execution plan performs better.

In this experiment, we evaluate different system load simulation methods as substitutes for actual system load for model prediction. Specifically, we compare the following four approaches:
(1) Fixed-Load: Sets all system load feature values to 0.5 (all features are normalized to [0,1]).
(2) Cluster-Load: Uses the average system load values across thousands of machines in the cluster, calculated over the first 24 hours before query submission.
(3) Before-Exec-Load: Similar to Cluster-Load but calculates average system load across the cluster for the 20 seconds immediately preceding plan execution.
(4) No-Load: Encodes plans without any system load features
The experimental results are shown in Figure \ref{fig:system_load}. We make several key observations:

1) The version without system load (No-Load) performs significantly worse than versions incorporating system load information. As discussed in Section \ref{}, in the \MC setting, execution cost is determined by both plan quality and environmental factors like system load. Without system load features, the model cannot make accurate predictions, especially when comparing plans of similar quality.

2) The Fixed-Load version consistently outperforms other simulation methods. We attribute this to two main factors: First, cluster-wide average system loads fail to accurately reflect the load on the specific machines handling query job. In fact, using cluster-wide averages may introduce more bias than using fixed system load values.
Second, Before-Exec-Load measurements can significantly deviate from actual execution-time loads, since system load typically spikes when new job is submitted.

\sstitle{Average Expected Error}: 
Based on our analysis in Section \ref{}, we model the execution cost of each query plan as a random variable following a log-normal distribution. This distribution choice is empirically validated through statistical tests, as shown in Table \ref{}, and further discussed in Section \ref{}.

For each query plan, we estimate the distribution parameters using Maximum Likelihood Estimation (MLE) based on multiple execution costs. The expected error $e_i$ for the i-th query's execution plan is then computed using Eq.\ref{}. The average expected error across all queries is defined as $(1/{|\mathcal{Q}|})\sum_{i}^{|\mathcal{Q}|} \hat{e_i}$.

As discussed in Section \ref{}, while a perfect model consistently selects plans with minimal expected execution cost, it may still choose suboptimal plans in single trial, potentially leading to performance regression compared to the best possible plan.
We conduct experiments to evaluate the average expected error across different model. Figure \ref{expected_error} presents results comparing two theoretical models: (1) Perfect Model: A model that invariably selects the plan with the lowest expected execution cost from candidate plans, optimizing for average-case performance.
(2) Oracle Model: A idealized model with complete knowledge of runtime conditions (e.g., system load) that always selects the optimal plan for each individual trial. It is important to note that such an oracle model is purely theoretical and cannot be implemented in practice due to the inherent unpredictability of runtime conditions.
}

\subsubsection{\textbf{Effectiveness of \textsf{Ranker}}}
\label{subsec:eval-ranker}

We evaluate the performance of \textsf{Ranker} using 28 projects drawn from the 30 sampled projects described in Section~\ref{subsec:qo-setup}, with 2 excluded for internal privacy reasons. For each project, we collect a workload consisting of historical queries from 5 consecutive days and compute their improvement space (\ie, $D(M_d)$) for training and testing \textsf{Ranker}. 
To obtain robust results, we conduct cross-validation across multiple splits of the 28 projects. In each split, 13 projects are used for training, and the remaining 15 projects are evaluated by \textsf{Ranker}, which estimates their improvement space and produces a ranking. 


As \textsf{Ranker} is designed to prioritize projects with large improvement space, we evaluate it using two widely adopted ranking metrics: i) $Recall@(k,n)$~\cite{recall}, which measures the fraction of $n$ ground-truth projects (those with the highest improvement space) included in the top-$k$ projects prioritized by \textsf{Ranker}; and ii) $NDCG@k$~\cite{NDCG}, which assesses the quality of the top-$k$ ranking by additionally considering the positional relevance of projects compared to the ground-truth ranking.

To benchmark \textsf{Ranker}'s performance, we compare it against the expected performance of a random ranking model, denoted as \textsf{Random} below, which generates a uniform random permutation of all test projects. The derivations of the expected $Recall@(k,n)$ and $NDCG@k$ for \textsf{Random} is detailed in Appendix~\ref{app:random-rank-model}~\cite{full-paper}. Figure~\ref{fig:rank_performance} presents the results averaged across all test splits. We observe that \textsf{Ranker} consistently and substantially outperforms \textsf{Random} across different $k$ values on both metrics. This superior performance is attributed to \textsf{Ranker}'s ability to effectively capture patterns in default plans that may reveal optimization opportunities. Notably, among the $5$ projects used for evaluating \sys's query optimization performance, \textsf{Ranker} successfully identifies approximately $3$ (\ie, \textsf{Projects~1, 2, 5} with substantial improvements) out of the top-$5$ projects it prioritized. This underscores its practical effectiveness in pinpointing projects that would benefit most from \sys's deployment. Beyond that, \textsf{Ranker} is expected to achieve further performance gains with more training projects (see Appendix~\ref{subsec:eval-ranker-size}~\cite{full-paper}).


\eat{

\textbf{Rank Datasets:} To ensure robust evaluation, we implement cross-validation by dividing the 28 datasets into multiple training-test configurations. Each configuration consists of 13 datasets for training sample generation (formatted as 
$<Q,P_d,IS>$, where IS denotes the improvement potential relative to the default plan $p_d$
for query $q$) and 15 datasets for testing.

\sstitle{Recall@($k$,$n$)}: This metric evaluates the effectiveness of the project ranker component by measuring $|D_k \cap D_{top-n}| / n$
where $D_k$ represents the $k$ datasets selected by \sys, and $D_{top-n}$ denotes the true top-$n$ datasets with optimal optimization potential across all datasets. It measures the accuracy of identifying valuable datasets when making $k$ selections from all candidates.

\sstitle{NDCG@$k$}: This metric evaluates the ranking quality by considering the position of correctly identified datasets. It is calculated as:
$$NDCG@k=\frac{DCG@k}{IDCG@k},\mathrm{where~}DCG@k=\sum_{i=1}^k\frac{2^{rel_i}-1}{\log_2(i+1)}$$
Here, $rel_i$ represents the optimization potential of the dataset at position $i$, and IDCG@k denotes the ideal DCG score achieved by perfect ranking. This metric emphasizes the importance of ranking high-value datasets at top positions, providing a comprehensive evaluation of our ranker's performance.

We evaluate the performance of our project ranker against the expected performance of a random project ranking model on Recall and NDCG metrics. This random model generates a random permutation of all ranking projects, with each project having an equal probability of $1 / N$ of appearing at any position. The expected Recall$@(k, n)$ can therefore be computed as $\big(n \times \frac{k}{N}\big) / n = k / N$. The expected NDCG$@k$ is given by $\frac{\mathbb{E}[DCG@k]}{IDCG@k}$ as $IDCG@k$ is constant given projects to be ranked. $\mathbb{E}[DCG@k]$ is derived as $\sum_{i=1}^k\frac{\mathbb{E}[2^{rel_i} - 1]}{\log_2(i+1)}$, where $\mathbb{E}[2^{rel_i} - 1] = \frac{1}{N}\sum_{i=1}^{N}2^{rel_i}-1$. 

To ensure comprehensive evaluation, we conduct a comprehensive cross-validation study. Given 28 rank datasets 
$\mathcal{D}=\{D_1,D_2,\dots,D_n\}$, each trial selects 13 consecutive datasets starting from each $D_i$ 
as training data, with the remaining 15 serving as test sets.
For each trial, we train the model of \sys and evaluate its performance on the corresponding test set. The averaged results across all trials are presented in Figure \ref{fig:rank_performance}.

The experimental results demonstrate that our project ranker consistently and substantially outperforms random selection approach across various $k$ values on both metrics. This superior performance indicates that our learned model effectively captures generalizable patterns for predicting optimization potential of default plans in \MC.

}








\eat{
To provide a comprehensive understanding of the effectiveness of our project selector, as well as the potential benefits that \sys could bring to \MC, this section conducts experiments to answer the following three key questions: 

\squishlist
    \item What proportion of \MC's projects are excluded by the rules in \textsf{Filter}? (Section~\ref{subsec:dataset_filter})
    \item How well does our \textsf{Ranker} perform in prioritizing projects with large improvement space? (Section~\ref{subsec:eval-ranker})
    \item What percentage of \MC's projects are expected to achieve substantial improvements with \sys? (Section~\ref{subsec:mc-benefit})
\squishend
}

\eat{
\subsubsection{\textbf{Effectiveness of \textsf{Filter}}}\label{subsec:dataset_filter}

As introduced in Section~\ref{sec:project-selection}, the current \textsf{Filter} incorporates $5$ rules, where rules $\mathsf{R1}$ to $\mathsf{R3}$ are designed to identify projects that may face training challenges, while $\mathsf{R4}$ and $\mathsf{R5}$ prioritize the projects preferred by \MC's experts for the first batch deployment of \sys. Through statistical analysis, we observe that $59.5\%$ projects are collectively filtered out by $\mathsf{R1}$ to $\mathsf{R3}$. This clearly demonstrates that many projects in \MC are not yet suitable for deploying learned optimizers, as they either suffer from insufficient training queries or involve frequent table creation and deletion. Fortunately, our \textsf{Filter} can identify and exclude such projects both effectively and flexibly. Notice that $\mathsf{R3}$ adopts a quite strict threshold to limit the number of projects for further evaluation to a manageable scale. By slightly relaxing this threshold, a larger number of projects could be expected to pass this rule. Additionally, rules $\mathsf{R4}$ and $\mathsf{R5}$, which describe experts' preferences for \sys's initial deployment, can be removed or adjusted in future deployment scenarios. 

}


\eat{
In this experiment, we present the progressive filtering ratio of each \textsf{Rule} on available thousands of datasets and evaluate the potential benefits of applying our learned model in \MC data warehouse. Figure \ref{fig:project-filter} illustrates our results.

1) Figure \ref{fig:project-filter} (a) shows the breakdown of dataset filtering ratios through Rules 1-5. Rules 2 and 3, which consider the minimum required training query volume, filter out approximately 59\% of datasets under strict parameter settings.
This significant proportion indicates that many projects are not ready for learned query optimization, while our rule-based filtering effectively identifies such unsuitable cases.
Additionally, Rule 1, representing custom dataset selection preferences (e.g., the importance of dataset), eliminates 38\% of datasets. However, this rule could be removed directly if these preferences are not required.

2) Figure \ref{fig:project-filter} (b) presents dataset distribution of \MC, categorizing datasets into Model Fit and Model Misfit groups:
(1) Model Misfit (59\%): Datasets with insufficient monthly query volume, where either the benefit of introducing a learned model would be minimal, or accurate predictions would be impossible due to limited training data. As demonstrated in Section \ref{subsec:training-size-effect}, models require at least 10,000 training samples for reliable predictions.
We estimate monthly query volume using: $Q_m =Q_d \times r^{30}$, where $Q_m$, $Q_d$, and $r$  
denote monthly queries, daily queries, and growth ratio respectively, considering Rules 2 and 3. After combined with Rule 5, a total of 59\% of datasets are deemed unsuitable for applying the learned model.
(2) Model Fit (41\%): Datasets suitable for model application. Through sampling analysis of 30 datasets, approximately 10\% (3 datasets) are showing significant improvements with our learned model (detailed in Section \ref{subsec:overall}). Assuming Rules 1 and 5 are not impacting the distribution of benefited datasets, we are estimating that 4\% of total datasets in \MC are achieving significant improvements using our learned model.
Note that this benefit ratio is being limited by our preliminary candidate plan generation strategy.
Theoretically, optimizing the plan generation strategy and improving the learned model could potentially extend the benefits to all Model Fit datasets, reaching the upper bound of 40\%.

}




\subsection{\textbf{Benefits in \MC}}
\label{subsec:mc-benefit} 

To understand the practical applicability of \sys and its overall benefits to \MC, we analyze the proportion of \MC's projects that would significantly benefit from \sys. Specifically, we require a reduction in end-to-end CPU cost of at least $10\%$ compared to the native \MC. In a mature production system like \MC, where the query optimizer is already highly optimized, it represents a meaningful headroom for continuous optimization. Given the massive scale of \MC's computational resource consumption, such relative reductions could translate into substantial absolute cost savings.

Recall that we apply \textsf{Filter} to \MC's projects to filter out those likely to face training challenges and obtain a subset of $40.5\%$ of projects that pass all filtering rules. In our earlier experiments in Section~\ref{subsec:e2e-evaluation}, we observed that \sys achieves a notable performance gain ($\ge 10\%$) on approximately $10\%$ of the randomly selected $30$ projects \ie, \textsf{Projects~1, 2}, and \textsf{5}. Thus, we can reasonably estimate that at least $4\%$ ($40.5\% \times 10\%$) of projects in \MC are expected to see a performance gain $\ge 10\%$ by deploying \sys, which is a non-trivial proportion considering \MC's currently vast landscape of more than 100,000 projects.


Notably, this estimation is restricted by current plan exploration strategies, which are highly conservative and may fail to produce candidate plans that substantially outperform default plans. For instance, for queries in \textsf{Project~1}, their optimal candidate plans achieve only an average of $15\%$ improvement over default plans. This limits the potential of query optimization and thereby curtails \sys's overall performance. However, the estimated value could be substantially improved by incorporating more diversified plan exploration strategies.

\vspace{-0.1em}
\section{Conclusions}
\label{sec:conclusion} 

This paper presents \sys, a one-stop framework for \MC that bridges the gap between theoretical advances in learned query optimization and practical deployment in modern distributed, multi-tenant data warehouses. \sys rethinks the architecture design and system operating paradigms for developing learned optimizers, and is generalizable and readily adaptable to other similar leading systems. Principally, \sys employs a statistics-free plan encoding to handle the potential absence of input statistics in \MC. It also integrates execution-time environments
into the modeling of plan costs. For online queries, it provides both a theoretical quantification of achievable performance under unseen environment variations and a practical smoothing strategy to mitigate environmental impacts. For system operating, \sys relies on an adaptive training process to preemptively generalize to candidate plans during the offline training phase, which eliminates risky online refinement while also incurring no extra cost for executing candidate plans. Furthermore, \sys includes a lightweight project selector to prioritize high-benefit deployments. From our experiments on production workloads in \MC, \sys is expected to see great performance gains over the native query optimizer on a non-trivial portion of \MC's projects, which could translate to substantial real-world resource savings.



\clearpage
\balance
\bibliographystyle{ACM-Reference-Format}
\bibliography{sample}

\newpage
\nobalance

\clearpage
\appendix

\noindent{\LARGE \textbf{APPENDIX}}
\section{\MC V.S. Similar Systems}~\label{app:mc-vs-similar-systems}

\noindent Leading industrial distributed multi-tenant data platforms, such as \textsf{Google BigQuery}~\cite{bigquery}, \textsf{Microsoft Azure Synapse Analytics}~\cite{azure_synapse_analytics}, \textsf{Snowflake Serverless}~\cite{snowflake}, and \textsf{Amazon Redshift Serverless}~\cite{redshift}, differ in their design philosophies and ecosystem-specific optimizations. Nonetheless, they adhere to a query execution workflow similar to \MC, as depicted in Figure~\ref{fig:workflow}, and face common challenges in deploying learned optimizers effectively. 

\stitle{Common Query Execution Workflow.} When queries are submitted to such systems, they are first converted into optimized physical plans via native cost-based optimizers built on expert-crafted cost models and optimization rules. For ease of scheduling and efficient execution, these plans are then expressed as a coarse-grained tree (or more generally, a DAG), where the vertices represent intra-machine pipelines that function as \emph{execution units}, akin to stages in \MC. These systems support distributed execution through distinct abstractions of computational resources and distribution mechanisms. For instance, \textsf{BigQuery} employs the \textsf{Dremel}~\cite{dremel} engine that allocates resources in units called ``slot'' (typically half of a CPU core and $1$ GB of RAM) and \textsf{Snowflake}'s engine relies on an elastic compute cluster model allowing for auto-adjusting resources during execution. Despite these differences, resources are always allocated dynamically to each unblocked execution unit, and greatly impact the execution performance. Upon completion of all executions, relevant information is logged for future reference and analysis.


\stitle{Common Challenges with \MC.} As a common practice in multi-tenant systems, users maintain their own database instances and submit workloads with varying characteristics. Given the numerous user workloads and our observation that learned optimizers struggle to equally benefit all workloads, how to select suitable workloads for employing learned query optimizers in an automatic and efficient manner must be initially addressed, akin to Challenge~4 (\emph{project selection}) in \MC. 

When developing learned query optimizers for given user workloads, the issue of missing helpful input statistics, corresponding to Challenge 2 (\emph{missing helpful statistics}) in \MC, remains a significant barrier. In such systems, the sheer scale of production data (\eg, ranging from terabytes to petabytes) and frequent data modifications make it overwhelmingly challenging to maintain statistical information (\eg, histograms of attributes) in real time. As alternatives, these systems rely on manual triggers to update statistics or on-demand calculations of statistics based on sampling during query optimization. However, manual triggers often fail to ensure that the statistics are timely and reflect the underlying data distributions. Meanwhile, the statistics computed using a controlled volume of sampled records (\eg, fewer than several thousand records~\cite{redshift}) suffer from high variance and are far from accurate. In addition to this challenge, the high variance in a query's execution costs is also inevitable in these systems due to dynamic resource scheduling and distributed computing, so Challenge 1 (\emph{environment variation}) also manifests.
 

Finally, in system operating, due to the resource-intensive nature of the OLAP queries they target and the stringent requirements for high stability and reliability in production-scale systems, the conventional refinement process for learned optimizers, which relies on either additional computational resources to execute candidate plans or the deployment of a partially trained optimizer to serve user queries while undergoing long-term online refinement, is totally unacceptable. Therefore, Challenge~3 (\emph{infeasibility of conventional refinement}) in \MC must also be addressed in such systems. 

The commonalities between these systems and \MC present the opportunity for cross-system transferability: the techniques in \sys developed for \MC, along with the related experiences and insights introduced in this paper, can serve as highly valuable references and are readily adaptable to other systems. This greatly amplifies the practical contribution of \sys and positions \sys as a foundational framework capable of driving advancements in learned query optimizations across a wide range of industrial distributed multi-tenant platforms.


\section{Supplements to Adaptive Cost Predictor}

This section introduces the missing details in our adaptive cost predictor design. 

\subsection{Hash Encoding For Table and Column Identifiers}~\label{app:hash-encode}

\noindent In \MC's production workloads, the number of tables and columns can be extremely large (see Table~\ref{tbl:dataset}), making it impractical to use one-hot (or multi-hot) encodings due to the potential for dimensional explosion. To address this, \sys builds upon the classic hash encoding technique, which efficiently maps an unbounded set of identifiers into a fixed-size feature space with controlled dimensionality. Below, we focus on how table identifiers are encoded; column identifiers are handled analogously.

Let $\mathcal{T}$ denote the space of table identifiers, and let $N$ represent the feature dimension. Formally, standard hash encoding applies a hash function $f: \mathcal{T} \rightarrow \mathbb{N}$ to map a given table identifier $T$ to an integer, which is then converted into an $N$-dimensional binary vector, where only the $(f(T) \% N)$-th position is set to $1$.

\begin{figure}[tp]
\centering
\includegraphics[width=\linewidth]{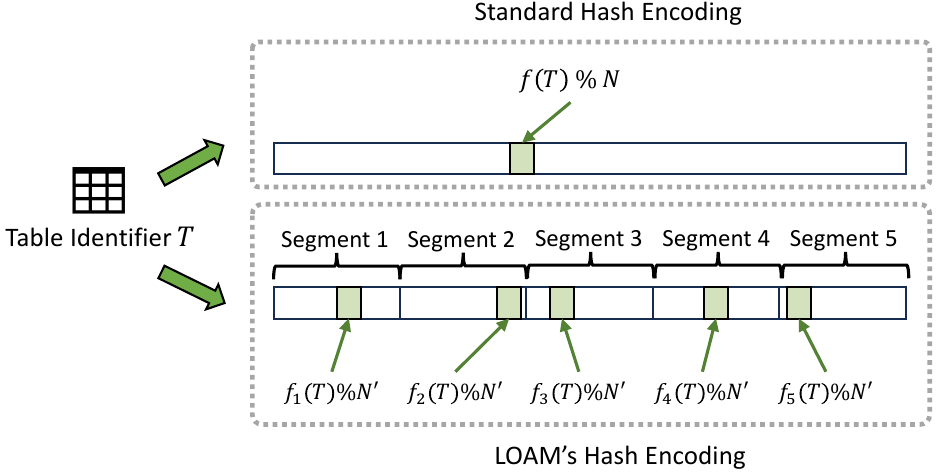}
\vspace{-0.5em}
\caption{Illustration of table identifier encoding in \sys.}
\label{fig:hash}
\vspace{-1.2em}
\end{figure}

However, this standard hash encoding suffers greatly from hash collisions, which limits its scalability. For example, using a native $50$-dimensional hash encoding can uniquely represent only around $50$ distinct tables before collisions become highly probable. To address this limitation, \sys introduces a substantially improved variant: as illustrated in Figure~\ref{fig:hash}, each table identifier is encoded into a $5 \times N'$-dimensional vector, where each $N'$-dimensional segment is generated by applying an independent hash function $f_i(T)$ to the table identifier $T$ and setting the $(f_i(T) \% N')$-th position to $1$. By incorporating multiple hashing functions across segments, this method dramatically reduces the probability of collisions without compromising the feature compactness. For instance, with $N'=10$, a total of $50$-dimensional encoding can reliably encode up to $10^5$ distinct table identifiers, far exceeding the capability of the native approach. Moreover, our method naturally extends to support encoding multiple identifiers simultaneously (\eg, all columns in a filter operator) by taking the union of their respective encodings.



\subsection{Metrics for Machine Loads}~\label{app:load-metric}

\noindent We elaborate on the four metrics used in \sys' plan encoding to model machine loads:

\begin{itemize}[left=0pt]
    \item \texttt{CPU\_IDLE}: The percentage of time the CPU remains idle and available for new tasks. 
    \item \texttt{IO\_WAIT}: The percentage of CPU time spent waiting for I/O operations to complete.
    \item \texttt{LOAD5}: The average system load (\ie, the number of processes using or waiting for resources like CPU and disk) over the past five minutes. 
    \item \texttt{MEM\_USAGE}: The percentage of memory currently in use.
\end{itemize}

\section{Proof for Theorem~\ref{Thm:perfgap}}~\label{app:proof-for-perf-bound}

\begin{proof}(Theorem~\ref{Thm:perfgap}) 
We first prove that for any model $M$, we have $\mathbb{E}[D_E(M)] \geq \mathbb{E}[D_E(M_b)]$. By definition, $M_b$ selects the plan $P_{M_b}$ that minimizes the expected cost $\mathbb{E}[C_E(P)]$. Therefore, we have $\mathbb{E}[C_E(P_M)] \ge \mathbb{E}[C_E(P_{M_b})]$. Subtracting $\mathbb{E}[C_E(P_{M_o})]$ from both sides gives $\mathbb{E}[C_E(P_M) - C_E(P_{M_o})] \ge \mathbb{E}[C_E(P_{M_b}) - C_E(P_{M_o})]$, which directly implies $\mathbb{E}[D_E(M)] \ge \mathbb{E}[D_E(M_b)]$.  

To prove $\mathbb{E}[R_{E}(M_b)] > \mathbb{E}[R_{E}(M_o)] = 0$, we note that $M_o$ always selects the plan $P_{M_o}$ that minimizes the cost under any environmental instance $e$ and attains a deviance value of $0$. It thereby holds that $C_e(P_{M_b}) \ge C_e(P_{M_o})$. Taking the expectation over all environments, we have $\mathbb{E}[R_E({M_b})] =  \mathbb{E}[C_E(P_{M_b}) - C_E(P_{M_o})] \ge 0 = \mathbb{E}[D_E(M_o)]$. Overall, the theorem holds. 
\end{proof}

\section{Supplements to Project Selector}

This section introduces the missing details in our project selector design.

\subsection{Rules in \textsf{Filter}}~\label{app:more-metrics}

\noindent In the current \textsf{Filter}, we restrict projects from both \emph{the volume of training queries} and \emph{the stability of the underlying tables} to mitigate potential training challenges. Specifically:

1) Experiments in Section~\ref{subsec:training-size-effect} have demonstrated that the volume of historical queries used to train the adaptive cost predictor strongly impacts the query optimization performance. To ensure a project has sufficient historical queries for robust model training, we define the following two metrics on the sampled workload $\mathcal{Q}$ consisting of historical queries collected over $d$ consecutive days: 

\begin{itemize}[left=0pt]
    \item $\mathsf{n\_query}(\mathcal{Q}) = \frac{|\mathcal{Q}|}{d}$. This metric calculates the average number of queries submitted per day. 
    \item $\mathsf{query\_inc\_ratio}(\mathcal{Q}) = \frac{1}{d-1}\sum_{i=2}^d\frac{|\mathcal{Q}_i|}{|\mathcal{Q}_{i-1}|}$, where $\mathcal{Q}_i \subseteq \mathcal{Q}$ represents the set of queries submitted on the $i$-th day ($1 \le i \le d$). This metric measures the growth rate of query counts over consecutive days.
\end{itemize}

A value of $\mathsf{query\_inc\_ratio}(\mathcal{Q})$ around or greater than $1$ indicates that the number of submitted queries is stable or growing steadily, making $\mathsf{n\_query}(\mathcal{Q})$ a reliable indicator of the general query volume. Based on these two metrics, we collaboratively apply two rules: (\textsf{R1}) $\mathsf{n\_query(\cdot)} \ge N_0$; and (\textsf{R2}) $\mathsf{query\_inc\_ratio(\cdot)} \ge r$. As observed in Section~\ref{subsec:training-size-effect}, a training set containing at least $10,000$ queries typically suffices to achieve good performance for query optimization. Therefore, in our implementation, we set $N_0 = 2, 000$ and $r$ to the minimum number such that $N_0 \times r^{30} \ge 10,000$.

2) Since the statistical-free plan encoding adopted by \sys relies on operator semantics and historical queries involving the same tables to coarsely infer data-level details, frequent creation and removal of tables within a project, \eg, temporarily created tables for data analysis purposes, can hinder the ability of \sys to effectively leverage insights from historical queries for predicting plan costs of future queries. For better predictability, we encourage workloads to involve more long-living tables, \ie, those with a lifespan exceeding $n$ days. To achieve this, we introduce the following metric to evaluate the proportion of queries involving such long-living tables: $\mathsf{stable\_table\_ratio}(\mathcal{Q}) = \frac{|\{Q \in \mathcal{Q}|\forall t \in T_Q, \mathsf{LifeSpan}(t) > n \}|}{|\mathcal{Q}|}$, where $T_Q$ denotes the set of tables involved in query $Q$, and $\mathsf{LifeSpan}(t)$ represents the lifespan of table $t$. Based on this metric, the rule (\textsf{R3}) $\mathsf{stable\_table\_ratio} \ge \theta$ is incorporated in \textsf{Filter}. In the current implementation, we set $n = 30$ and $\theta = 0.2$. 

Notably, these rules can be dynamically modified, added, or removed to better align with the evolving characteristics of \sys, user workloads, and system requirements. This flexibility guarantees that \textsf{Filter} can remain adaptable and effective over time.

\eat{

Given a workload $\mathcal{Q}$ consisting of historical queries of project $M$ from consecutive $d$ days, we adopt the following metrics: 

\begin{itemize}[left=0pt]
    \item $\mathsf{n\_query}(\mathcal{Q}) = \frac{|\mathcal{Q}|}{d}$. This metric calculates the average number of queries submitted per day. Together with $\mathsf{query\_inc_\_ratio}(\cdot)$ (defined below), rules based on these two metrics ensure that there is a sufficient number of historical queries available for training a robust cost predictor. 

    \item $\mathsf{query\_inc\_ratio}(\mathcal{Q}) = \frac{1}{d-1}\sum_{j=2}^d\frac{|\mathcal{Q}^{day_{j}}|}{|\mathcal{Q}^{day_{j-1}}|}$, where $\mathcal{Q}^{day_j} \subseteq \mathcal{Q}$ represents the set of queries submitted on the $j$-th day ($1 \leq j \leq d$). This metric measures the growth rate of query counts over consecutive days. A value around or greater than $1$ indicates that the number of queries submitted in $M$ is stable or growing steadily, allowing $\mathsf{n\_query}(\mathcal{Q})$ to represent a general level of the query volume.
\end{itemize}

    
\begin{itemize}
    \item $\mathsf{diversity\_structure\_ratio}(\mathcal{Q}) = \frac{|\{\mathsf{Struct}(p_q) \mid q \in \mathcal{Q}\}|}{|\mathcal{Q}|}$: Here, $\mathsf{Struct}(p_q)$ extracts a signature of the plan structure (based on the plan tree shape and node types) for the execution plan $P$ of query $q$. This metric assesses the diversity of query structures. The corresponding rule promotes the inclusion of queries with diverse patterns, thereby enhancing the generality of the trained model.

    \item $\mathsf{table\_density}(\mathcal{Q}) = \frac{\bigcup_{q \in \mathcal{Q}}|T_q|}{|\mathcal{Q}|}$: This metric calculates the average number of queries associated with each table. The rule based on this metric ensures that each table has sufficient query associations, enabling the model to learn more comprehensive information about the tables.
\end{itemize}

}

\subsection{Plan Vectorization for \textsf{Ranker}}~\label{app:feat-for-ranker}

\noindent By our observation in Section~\ref{sec:project-selection} that \emph{the improvement space (\ie, $D(M_d)$) of a query $Q$ can often be reflected in the observable properties of $Q$'s default plan $P_d$, including its structure, input characteristics, and execution cost}, \textsf{Ranker} represents $Q$ by efficiently encoding these aspect as follows: 

1) \emph{The structure of $P_d$}. We adopt a $(1 + n)$-dimensional vector to represent the structural information of $P_d$, where the first position indicates the total number of operators in $P_d$, and the remaining $n$ positions encode the structural details. Specifically, we view each distinct combination of parent and child operators $\langle \mathsf{parent}, \mathsf{child} \rangle$ as a pattern and encode the count of occurrences for each pattern. For example, as illustrated in Figure~\ref{fig:ranker}, the second and third positions represent the counts for patterns $\langle \mathsf{MJ}, \mathsf{TS}\rangle$ and $\langle \mathsf{HJ}, \mathsf{MJ}\rangle$, with values $2$ and $1$, respectively. This approach simplifies the representation of the overall plan structure while still highlighting key structures that may reveal optimization space, such as nested joining operators. It's worth noting that this encoding is much more informative than simply counting operators. For instance, $\#\langle \mathsf{HA}, \mathsf{MJ} \rangle = 1$ might suggest potential optimization by reversing the order for the $\mathsf{MJ}$ and $\mathsf{HA}$ operations, which cannot be reflect from encodings like $\{\#\mathsf{HA}: 1, \#\mathsf{MJ}:1\}$.

\begin{figure}[t]
\centering
\includegraphics[width=1\linewidth]{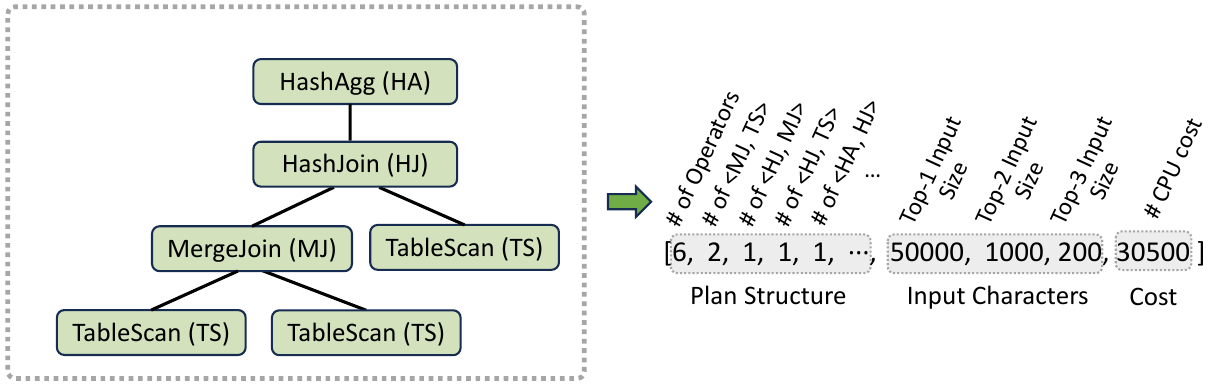}
\vspace{-0.5em}
\caption{Illustration of plan vectorization for \textsf{Ranker}.}
\label{fig:ranker}
\vspace{-1.2em}
\end{figure}

2) \emph{The sizes of input tables}. We represent the sizes (\ie, the number of records) of tables involved in $P_d$ using a vector, with each position corresponding to the size of a certain table. This enables \textsf{Ranker} to recognize cases where input size skew may create optimization opportunities, \eg, when a large table joins a small table, steering the native query optimizer to produce predicates from the smaller table to filter the larger one before performing the join would significantly improve the join performance. Since the number of involved tables varies across queries, our implementation considers only the top-$3$ largest table sizes using a $3$-dimensional vector. 

3) \emph{The cost of $P_d$}. We also employ a single value to represent the CPU cost of $P_d$, which may indicate suboptimal plan choices like poor joining orders and thereby also reveal optimization opportunities, as discussed in Section~\ref{sec:project-selection}.

To ensure consistent scaling, all feature values are normalized using min-max normalization. 

\section{Supplements to Experiments}
\balance

This section introduces the missing details for our experimental evaluation. 

\subsection{Deviance Estimation}~\label{app:proof}

\noindent To estimate the deviance for a given model $M$, we first derive its expectation theoretically based on the distributions of plan execution costs, and then describe how to estimate it practically through sampling and distribution fitting. 

\stitle{Theoretical Derivation.} Recall that in Section~\ref{sec:per-bound}, for a query $Q$ with candidate plans $\mathcal{P}_Q = \{P_1, P_2, \cdots, P_n\}$, the deviance of $M$ on $Q$ \wrt environment $E$ is defined as $D_E(M) = C_E(P_M) - C_E(P_{M_o})$. 

Suppose we are already given the distribution of the execution cost $C_E(P)$ for each plan $P \in \mathcal{P}$. For simplicity of notation, we denote the probability density function (PDF) and cumulative distribution function (CDF) of the random variable $C$ for execution costs as $f_C$ and $F_C$, respectively. To derive the expectation of $D_E(M)$, we start by reformulating $D_E(M)$ as below: 
\begin{align*}~\label{eq:re}
        D_E(M)=
        \begin{cases}
        C_E(P_M) - C_E^*& \text{if  } C_E^*\leq C_E(P_M) \\
        0 & \text{else} 
        \end{cases}, 
    \end{align*} 
where $C_E^*$ is the random variable representing the minimum execution cost among the plans in $\mathcal{P}_Q \setminus \{P_{M}\}$. This formulation is motivated as follows: If $P_M$ coincides with the plan selected by the oracle model $M_o$ under some environmental instances, no deviance occurs, \ie, $D_E(M) = 0$. Otherwise, $C^*_E$ corresponds to the execution cost of $P_{M_o}$, and thereby $D_E(M) = C_E(P_M) - C_E^*$ by definition.  

Given this, the PDF for $D_E(M)$ can be expressed as:
\begin{align*}
    f_{D_E(M)}(z)=
    \begin{cases}
    f_{C_E(P_M) - C_E^*}(z) & \text{if  } C_E^* \leq C_E(P_M) \\
    0 & \text{else} 
    \end{cases}, 
\end{align*}
where $f_{C_E(P_M) - C_E^*}(z)$ is the PDF of the difference between $C_E(P_M)$ and $C_E^*$, and can be computed via convolution:
\begin{align*}
    f_{C_E(P_M) - C_E^*}(z) = \int_{-\infty}^\infty f_{C_E(P_M)}(x)f_{C_E^*}(x - z)~dx.
\end{align*} 
Thus, the expected deviance $D_E(M)$ can be calculated as: 
\begin{equation}~\label{eq:re-exp}
    \mathbb{E}[D_E(M)] = \int_{0}^{\infty} z f_{D_E(M)}(z)~dz,
\end{equation} where the integral lower bound is $0$ because $D_E(M) \ge 0$. 

Note that to compute this integral, it is necessary to know the distribution of $C_E^*$. We derive its PDF in the following lemma. 

\begin{lemma}~\label{lem:c_min} 
    The PDF of $C_E^*$ is 
     \begin{align*}
        f_{C_E^*}(x) = \sum_{C\in\mathcal{C}} f_{C}(x) \prod_{\substack{C' \in\mathcal{C}, \\ C' \neq C}}\left[1 - F_{C'}(x) \right], 
    \end{align*}
    where $\mathcal{C} = \{C_E(P_i) \mid P_i \in \mathcal{P}_Q\setminus \{P_{M}\}\}$. 
\end{lemma}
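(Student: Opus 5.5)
This is the standard density of the minimum of finitely many continuous random variables. I will derive it under the assumption, implicit in the formula, that the costs in $\mathcal{C}$ are mutually independent with densities; without independence the product form fails. The route is to compute the survival function of $C_E^*$ and then differentiate.

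Write $\mathcal{C}=\{C_1,\dots,C_m\}$ with $m=n-1$, so that $C_E^*=\min_j C_j$. For any $x$, the minimum exceeds $x$ exactly when every $C_j$ exceeds $x$. Independence then gives
\begin{equation*}
\Pr[C_E^*>x]=\prod_{j=1}^{m}\Pr[C_j>x]=\prod_{j=1}^{m}\bigl(1-F_{C_j}(x)\bigr),
\end{equation*}
and hence $F_{C_E^*}(x)=1-\prod_{j}\bigl(1-F_{C_j}(x)\bigr)$.

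Next I differentiate $F_{C_E^*}$ in $x$. Each $F_{C_j}$ is absolutely continuous with derivative $f_{C_j}$ almost everywhere. The product rule applied to the finite product gives
\begin{equation*}
f_{C_E^*}(x)=-\frac{d}{dx}\prod_{j}\bigl(1-F_{C_j}(x)\bigr)=\sum_{j}f_{C_j}(x)\prod_{k\neq j}\bigl(1-F_{C_k}(x)\bigr).
\end{equation*}
This is exactly the claimed formula with $C=C_j$ and $C'=C_k$.

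The main obstacle is rigor rather than computation. One point is the independence assumption: the costs are all driven by the common environment $E$, so strictly it is a modeling assumption, and I will state it explicitly as the setting in which the lemma holds. The other point is justifying the differentiation. A finite product of absolutely continuous functions is absolutely continuous, so the derivative exists almost everywhere and integrates back to $F_{C_E^*}$, which makes it a valid density. As a sanity check, the terms of the sum can be read as $f_{C_j}(x)\Pr[\text{all other } C_k>x]$, the density of the event that $C_j$ attains the minimum at $x$; ties have probability zero for continuous variables.
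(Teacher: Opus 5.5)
Your proposal is correct and follows the same route as the paper. The paper likewise writes $C_E^*=\min\mathcal{C}$, obtains $F_{C_E^*}(x)=1-\prod_{C\in\mathcal{C}}[1-F_C(x)]$ by invoking independence of the costs in $\mathcal{C}$, and differentiates with the product rule; your flagging of independence as a modeling assumption (given the shared environment $E$) and your absolute-continuity justification add rigor the paper leaves implicit.
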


\begin{proof}
    By definition, $C_E^*=\min \mathcal{C}$. Therefore, the CDF of $C_E^*$ can be derived as: 
    \begin{align*}
        F_{C_E^*}(x) &= P(C_E^* \leq x) 
        = 1 - P\big(\bigcap_{C \in \mathcal{C}} \{C > x\}\big) \\ 
        &=1 - \prod_{C \in \mathcal{C}} P(C > x)~~~~~\text{(}C \in \mathcal{C} \text{ is independent)}\\
        &=1 - \prod_{C \in \mathcal{C}}[1 - F_{C}(x)].
    \end{align*}
    Differentiating $F_{C_E^*}(\cdot)$, the PDF of $C_E^*$ is given by 
     \begin{align*}
        f_{C_E^*}(x) &= \frac{d}{dx}F_{C_E^*}(x)\\
        &= \frac{d}{dx}\left(1 - \prod_{C \in \mathcal{C}}[1-F_{C}(x)]\right)\\
        &=\sum_{C \in \mathcal{C}}f_{C}(x)\prod_{\binom{C' \in \mathcal{C}}{C' \neq C}}[1-F_{C'}(x)].
    \end{align*}
    Thus, the lemma holds. 
\end{proof}

\stitle{Practical Estimation.} 
While $\mathbb{E}[D_E(M)]$ can be theoretically computed via Eq.~\eqref{eq:re-exp} given the execution cost distributions $C_E(P)$ for all plans $P \in \mathcal{P}_Q$, deriving the exact distributions $C_E(P)$ is infeasible in real-world systems due to numerous uncontrollable runtime factors (\eg, resource contention). Therefore, we adopt a systematic empirical approach to estimate these distributions in two stages:

\begin{figure*}[!t]
\centering
\begin{minipage}{0.47\linewidth}
    \subfigure[Histogram of execution costs.]{
    \begin{minipage}[c]{0.45\linewidth}
        \centering
        \includegraphics[width=1\textwidth]{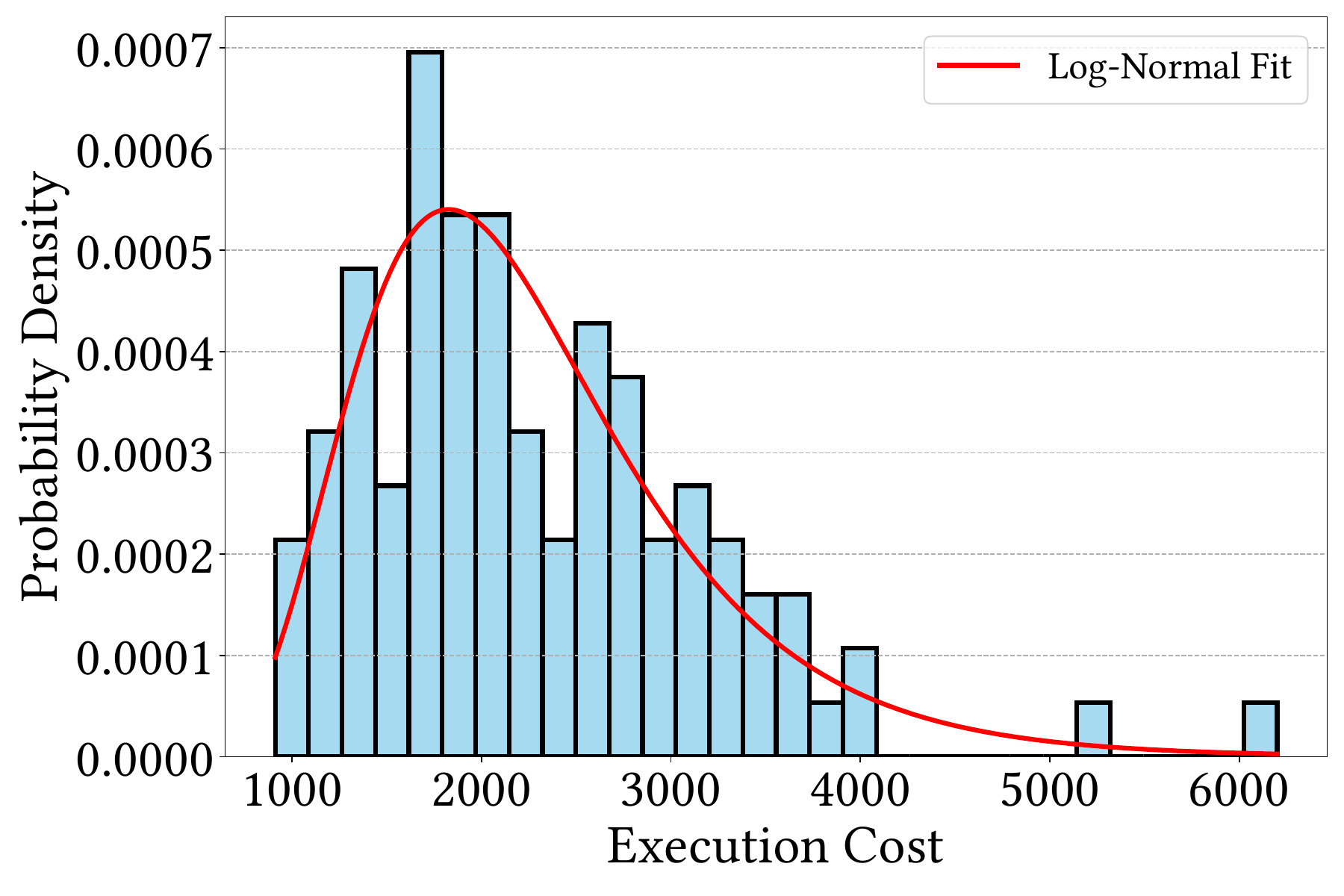}
    \end{minipage}
}\hspace{1em}
\subfigure[Q-Q plot of execution costs.]{
    \begin{minipage}[c]{0.45\linewidth}
        \centering
        \includegraphics[width=1\textwidth]{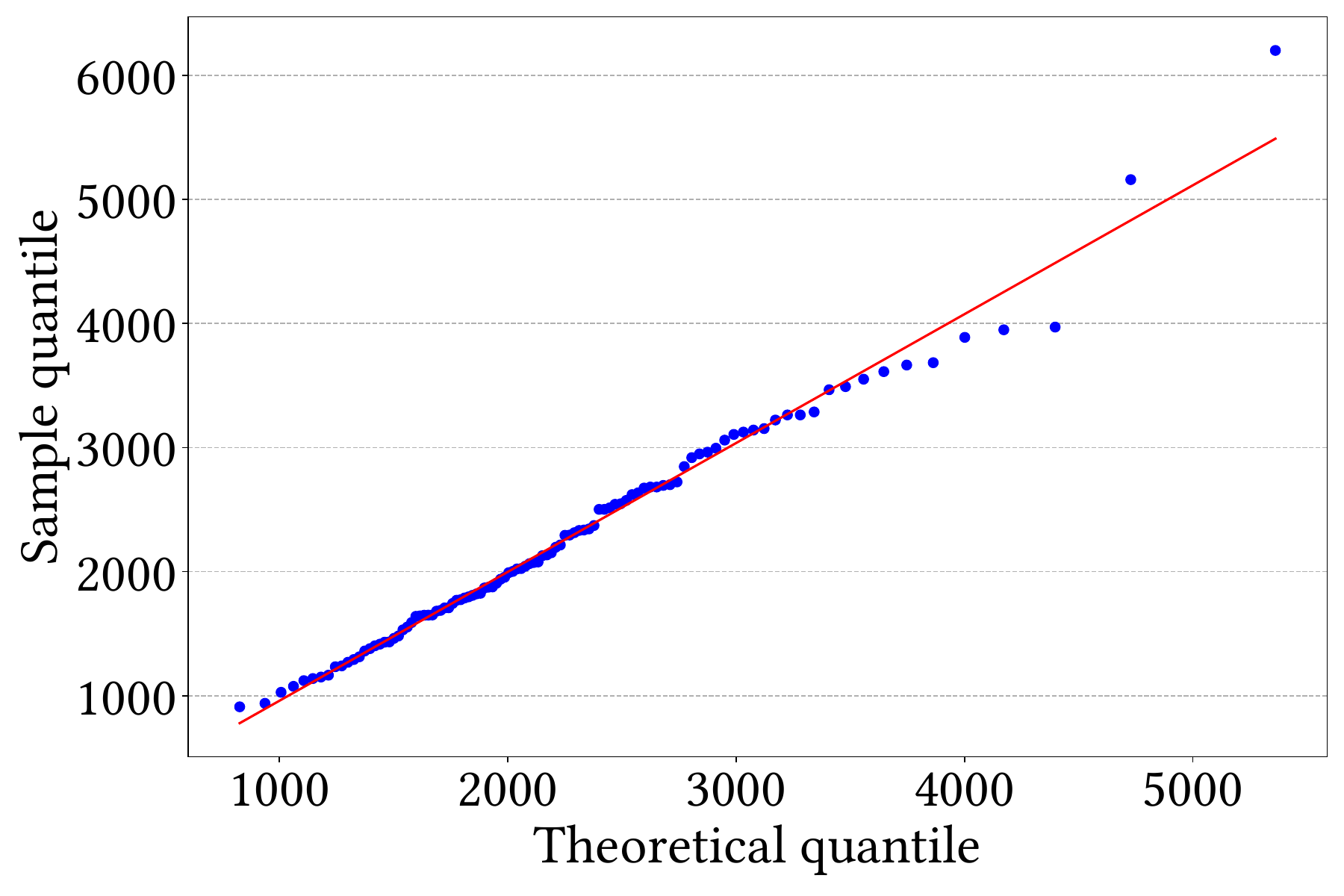}
    \end{minipage}
}
\vspace{-1.2em}
\caption{Cost distribution of an example query plan.}
\label{fig:cost_variation}
\end{minipage}
\hfill
\begin{minipage}{0.47\linewidth}
\vspace{1.05em}
\subfigure[\textsf{Recall} \wrt $k$ ($n = k$).]{
    \begin{minipage}[c]{0.45\linewidth}
        \centering
        \includegraphics[width=1\textwidth]{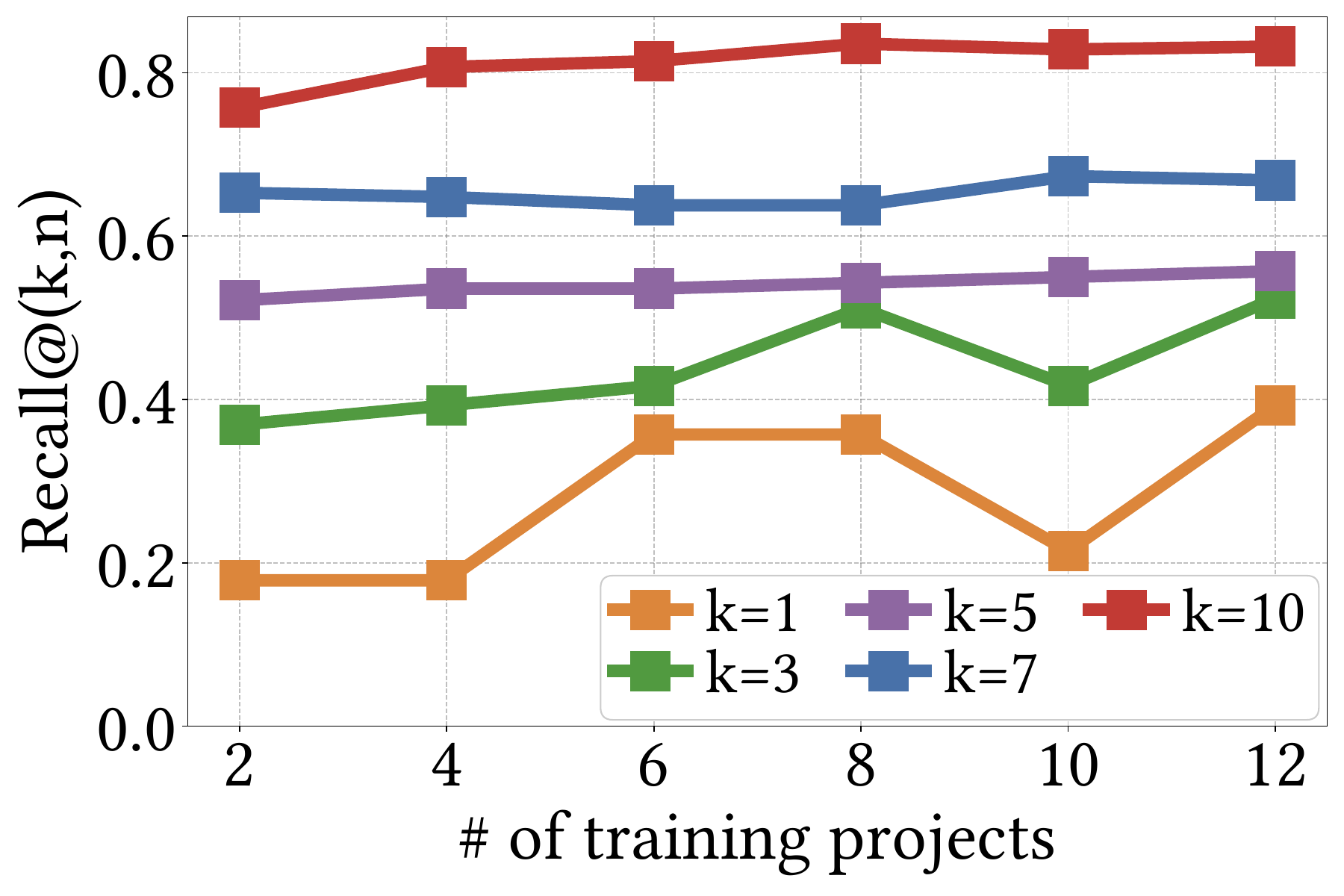}
    \end{minipage}
}\hspace{1em}
\subfigure[\textsf{NDCG} \wrt $k$.]{
    \begin{minipage}[c]{0.45\linewidth}
        \centering
        \includegraphics[width=1\textwidth]{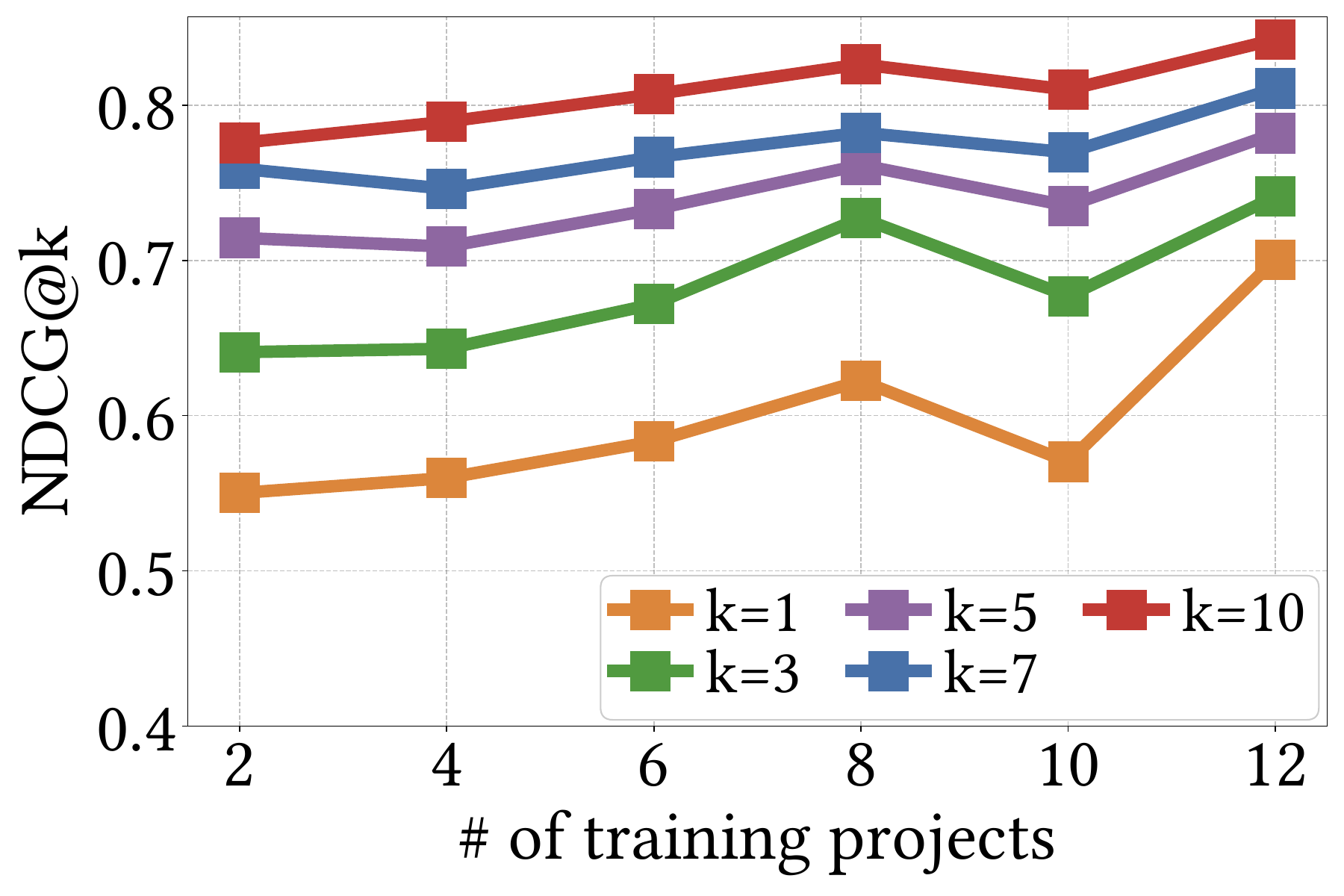}
    \end{minipage}
}
\vspace{-1.2em}
\caption{Performance of \textsf{Ranker} \wrt the number of training projects.}
\label{fig:training_size_rank}
\end{minipage}
\vspace{-1.5em}
\end{figure*}

1) \emph{Distribution modeling.} 
To empirically characterize the distribution of execution costs for query plans, we collect historical execution costs for a set of recurring queries periodically executed in \MC. Statistical analysis of these samples reveals that the execution costs exhibit a notable log-normal pattern. Figure~\ref{fig:cost_variation}(a) shows the cost histogram for a representative query plan alongside its fitted log-normal curve, while Figure~\ref{fig:cost_variation}(b) presents the corresponding Quantile-Quantile (Q-Q) plot. Both plots demonstrate strong agreement between the empirical cost distributions and the log-normal model. To formally validate this observation, we conduct a Kolmogorov–Smirnov test across all collected query plans. The average p-value of approximately $0.6$ further supports that the execution costs do not exhibit statistically significant deviations from a theoretical log-normal distribution. Following these observations and results, we model each plan's execution cost as a random variable following a log-normal distribution. It is important to note that the theoretical framework presented earlier remains applicable to any arbitrary cost distribution determined by different workload characteristics and system environments.

2) \emph{Parameter estimation.} Although the execution costs of query plans generally follow a log-normal distribution, their parameters, \eg, the mean and variation, can vary substantially across plans due to the different plan structures and influences from system environments. To estimate the specific cost distribution for each plan $P \in \mathcal{P}_Q$, we execute $P$ multiple times and use the maximum likelihood estimation approach described in \cite{DBLP:param_estimate} to fit the parameters based on the obtained execution costs.


\eat{
To accurately compute the expectation of $\Delta_{C^*}$, it is important to identify the distribution that execution costs adhere to. Following a comprehensive analysis of historical execution costs and acknowledging that these costs are inherently positive, we estimate the relative error under the assumption that execution costs follow a log-normal distribution.

\stitle{Distribution Analysis.}
Initially, We collect all execution costs from multiple runs of periodic queries. These data captures the variability of execution costs across different environments. A representative example of cost variation is illustrated in Figure \ref{fig:cost_variation}. The left subfigure presents a histogram fitted to a log-normal distribution, while the right subfigure illustrates a Quantile-Quantile (Q-Q) plot comparing the empirical data to the theoretical log-normal distribution. Both visualizations demonstrate a high degree of consistency, supporting the suitability of the log-normal model.

\stitle{Statistical Test.}
To further validate the appropriateness of the log-normal distribution, we employ the Kolmogorov-Smirnov (K-S) test to evaluate the null hypothesis that execution costs conform to a log-normal distribution. The average p-value across all repeated execution plans is approximately 0.6, indicating a failure to reject the null hypothesis and thus reinforcing the log-normal assumption. It is important to highlight that our theoretical framework is adaptable to various distributions; therefore, each user should estimate the most appropriate distribution based on the specific sample data from their data warehouse environment.

\stitle{Parameter Estimation.}
Upon determining the appropriate distribution, we execute each query plan multiple times to collect extensive execution cost data. Utilizing this data, we apply the methodology outlined in \cite{DBLP:param_estimate} to estimate the relevant distribution parameters. This process inherently involves a trade-off between efficiency and precision: achieving high precision necessitates a large number of samples, which increases the number of repeated executions and significantly elevates the time costs associated with query optimization. Conversely, employing fewer samples reduces computational overhead but may compromise the precision of the estimates.

}

\subsection{Ranking Metrics for a Random Model}~\label{app:random-rank-model}

\noindent In section~\ref{subsec:eval-ranker}, we compare \textsf{Ranker} with a random model \textsf{Random} that generates a uniformly random ranking (permutation) of all test projects, denoted as $r$. In the ranking $r$, each project has an equal probability of $1 / N$ of appearing at any position, where $N$ is the total number of test projects. We next derive the expectations of the metrics $Recall@(k,n)$ and $NDCG@k$ for \textsf{Random}. 

\sstitle{$\mathbb{E}[Recall@(k,n)]$}: $Recall@(k, n)$ is defined as the proportion of $n$ ground-truth projects included in the top-$k$ projects prioritized by \textsf{Random}. Since each project is equally likely to appear at any position of $r$, the probability that a specific project appears in the top-$k$ positions is given by $k/N$. Therefore, out of the $n$ ground-truth projects, the expected number of projects in the top-$k$ positions is $n\cdot k/N$. We thus have $\mathbb{E}[Recall@(k,n)] = \big(n \times \frac{k}{N}\big) / n = k / N$. 

\sstitle{$\mathbb{E}[NDCG@k]$}: $NDCG@k$ evaluates the quality of the top-$k$ ranking of $r$, denoted as $r_k$, by considering the positional relevance of projects compared to the true ranking. It is calculated as:
$$NDCG@k=\frac{DCG@k}{IDCG@k}, $$
where $$DCG@k=\sum_{i=1}^k\frac{2^{rel_i}-1}{\log_2(i+1)}$$ measures the discounted cumulative gain of $r_k$, and $rel_i$ represents the relative score (\ie, the improvement space $D(M_d)$) of the project at position $i$; $IDCG@k$ is the ideal value for $DCG@k$ of the true ranking. To compute the expectation $\mathbb{E}[NDCG@k]$, we note that $IDCG@k$ is independent of $r$ and remains constant for a fixed set of test projects, we thereby have $\mathbb{E}(NDCG@k) = \frac{\mathbb{E}[DCG@k]}{IDCG@k}$, where $\mathbb{E}[DCG@k] = \sum_{i=1}^k\frac{\mathbb{E}[2^{rel_i} - 1]}{\log_2(i+1)} = \sum_{i=1}^k\frac{\frac{1}{N}\sum_{i=1}^{N}2^{rel_i}-1}{\log_2(i+1)}$.

\subsection{Ranking Performance \wrt Training Size}~\label{subsec:eval-ranker-size}


\noindent We examine how the number of training projects affects \textsf{Ranker}'s performance. Following the experimental setup in~\ref{subsec:eval-ranker}, we implement cross-validation by partitioning the $28$ available projects into multiple test configurations. In each configuration, we fix $15$ projects for testing and vary the number of training projects from $2$ to $12$. The averaged results across all configurations are shown in Figure \ref{fig:training_size_rank}. We have the following observations: 

1) Even with minimal training data (\eg, as few as two projects), \sys achieves a robust and significant improvement over the random baseline \textsf{Random} (see Figure~\ref{fig:rank_performance} in Section~\ref{subsec:eval-ranker}). This demonstrates that \textsf{Ranker} can effectively learn to capture fundamental patterns in default plans that offer opportunities for query optimization, even in data-constrained scenarios. However, incorporating more training projects leads to substantial performance gains, \eg, $NDCG@1$ improves from $0.55$ to $0.7$ when increasing the number of training projects from $2$ to $12$. This affirms the critical role of using more training projects in enhancing \textsf{Ranker}'s performance.

2) Furthermore, despite minor fluctuations, the performance of \sys consistently improves \wrt both metrics as it is trained with more projects. This highlights the promising potential for \textsf{Ranker} to continue improving over time as additional training projects become available.

\eat{

Table~\ref{tab: extra_cost} presents the overhead analysis of \sys and other learned query optimizers. We evaluate four key aspects of additional costs: model training time, candidate plan generation time, inference time, and model storage requirements.

The model training phase, which represents the primary computational overhead, requires 6-59 minutes across different datasets for \sys. While this training time is higher compared to other approaches, it remains practical for large-scale OLAP data warehouses, especially considering \sys's superior optimization performance.

The overhead of candidate plan generation for incoming queries is negligible relative to typical query execution times (which exceed 30 seconds). Similarly, model inference time required to select an optimal plan for an incoming query introduces minimal latency, ranging from 0.1s to 0.5s per query.

Regarding storage requirements, all learned models maintain a modest footprint, requiring less than 200MB for model parameters. In the context of MC's large-scale data warehouse infrastructure, this storage overhead is negligible.
}

\end{document}